\newcommand{\hr}{{\mathcal H}}
\newcommand{\cn}{{\mathcal N }}
\newcommand{\cs}{{\mathcal S}}
\newcommand{\crr}{{\mathcal R}}
\newcommand{\fr}{{\mathcal F}}
\newcommand{\fri}{{\mathfrak I}}
\newcommand{\kr}{{\mathcal K}}
\newcommand{\bo}{{\mathcal B}}
\newcommand{\cc}{{\mathbb C}}
\newcommand{\rr}{{\mathbb R}}
\newcommand{\M}{{\mathcal M}}
\newcommand{\nn}{{\mathbb N}}
\newcommand{\eps}{{\varepsilon}}
\newcommand{\A}{\mathcal A}
\newcommand{\B}{\mathcal B}
\newcommand{\D}{\mathcal D}
\newcommand{\cP}{\mathcal P}
\newcommand{\bS}{\mathbf S}
\newcommand{\bX}{\mathbf X}
\newcommand{\W}{\mathcal W}
\newcommand{\bA}{\mathbf A}
\newcommand{\bB}{\mathbf B}
\newcommand{\tr}{\mathrm{tr}}
\newcommand{\supp}{\mathrm{supp}}
\DeclareMathOperator{\conv}{conv}
\newcommand{\bra}[1]{{\langle #1 |}}
\newcommand{\ket}[1]{{| #1 \rangle}}
\newcommand{\eins}{{\mathbbm{1}}} 
\newcommand{\erw}{{\mathbbm{E}}}
\newcommand{\pr}{{\mathfrak{P}}}
\DeclareMathOperator{\ran}{ran}
\DeclareMathOperator{\spec}{spec}
\DeclareMathOperator{\argmax}{argmax}
\begin{document}

\mainmatter  

\title{Arbitrarily varying and compound classical-quantum channels and a note on quantum zero-error capacities}

\titlerunning{Arbitrarily varying \& compound classical-quantum channels...}

%
%
\author{Igor Bjelakovi\'{c}\textsuperscript{1} \and Holger Boche\textsuperscript{2} \and Gisbert Jan\ss en\textsuperscript{1} \and Janis N\"otzel\textsuperscript{1}}
\authorrunning{Arbitrarily varying \& compound classical-quantum channels...}

\institute{\textsuperscript{1}Theoretische Informationstechnik, Technische Universit\"at M\"unchen\\
80290 M\"unchen, Germany\\
\textsuperscript{2}Lehrstuhl f\"ur Theoretische Informationstechnik, Technische Universit\"at M\"unchen\\
80290 M\"unchen, Germany\\
\vspace{2ex}
\mails\\
}

%
%

\maketitle

\begin{center}\small\textit{
  Dedicated to the memory of Rudolf Ahlswede}
\end{center}

\begin{abstract}
We consider compound as well as arbitrarily varying classical-quantum channel models. For classical-quantum compound channels, we give an elementary proof of the direct part of the coding theorem.
A weak converse under average error criterion to this statement is also established. 
We use this result together with the robustification 
and elimination technique developed by Ahlswede in order to give an alternative proof of the direct part of the coding theorem for a 
finite classical-quantum arbitrarily 
varying channels with the criterion of success being average error probability. Moreover we provide a proof of the strong converse to the random 
coding capacity in this setting.\\
The notion of symmetrizability for the maximal error probability is defined and it is shown to be both necessary and sufficient for the capacity for
 message transmission with maximal error probability criterion to equal zero.\\
Finally, it is shown that the connection between zero-error capacity and certain arbitrarily varying channels is, just like in the case of 
quantum channels, only partially valid for classical-quantum channels.
\end{abstract}

\begin{section}{Introduction\label{introduction}}
Channel uncertainty is omnipresent and mostly unavoidable in real-world applications and one of the major technological 
challenges is the design of communication protocols that are robust against it. The incarnation of that challenge on the theoretical side
delivers a plethora of interesting structural and methodological problems for Information Theory. Despite these facts it happened only recently
 that this range of problems received the necessary attention in Quantum Information Theory and especially in Quantum Shannon Theory 
\cite{ahlswede-blinovsky}, \cite{datta07}, \cite{bb-compound}, \cite{bbn-2}, \cite{abbn}. 
In this paper we revisit two basic models for communication
under channel uncertainty, the compound and arbitrarily varying channels with classical input and quantum output and give essentially  
self-contained derivations of coding theorems for them. These results were originally obtained in \cite{ahlswede-blinovsky} and \cite{bb-compound}.\\
The contributions of the paper and the difference to existing work are the following. First, in \cite{bb-compound} a capacity result with strong converse for compound channels with a 
classical input and quantum output (compound cq-channel for short)
under the maximum error criterion has been derived. However, the achievability proof given there lacks transparency and does not show that 
good codes with the
 uniformly bounded exponentially decreasing maximal error exist. Indeed, in \cite{bb-compound}  it is merely shown that good codes exist with uniformly 
super-polynomially decreasing maximal error probability. 
Here we prove that sharper result for the average error criterion and, at the same time, give a significantly simpler proof of 
the achievability part of the coding theorem based on a universal
 hypothesis testing result which is a generalization of the technique developed by Hayashi and Ogawa in \cite{ogawa01}. The passage to the maximal 
error criterion can be carried out via a standard argument which can be found in \cite{bb-compound}. \\
It is interesting to compare this result with related work of Hayashi  \cite{hayashi08} and Datta and Hsieh \cite{datta-hsieh}.
The works \cite{hayashi08} and \cite{datta-hsieh} aim at showing the existence of codes depending on the input distribution and a prescribed
rate only and achieving an exponential but \emph{channel dependent} decay of error probability for all cq-channels whose Holevo information is strictly
larger than that prescribed rate. The good codes in our approach depend on 
the input distribution and the set of cq-channels generating the compound cq-channel. Additionally we obtain a uniform exponential 
bound on error probabilities, a property that seems highly desirable in case that the channel is unknown.\\ 
Moreover, 
we prove the weak converse to the coding theorem under average error criterion by a reduction to the strong converse for the  maximal error via a
 lemma of Ahlswede and Wolfowitz from \cite{ahlswede69}.\\
Second, once we have the achievability result for  compound cq-channels we can obtain the corresponding results for arbitrarily varying cq-channels
 (AVcqC) in a straight-forward
fashion via Ahlswede's powerful elimination \cite{ahlswede-elimination} and robustification \cite{ahlswede-coloring} techniques. This way, we obtain an 
alternative approach to the coding theorem for AVcqCs which was originally  proven by Ahlswede and Blinovsky in \cite{ahlswede-blinovsky}.\\
Finally, we show that a naive quantum analog of Ahlswede's beautiful relation \cite{ahlswede-note} between Shannon's zero-error capacity \cite{shannon}
 and
 the capacity of arbitrarily varying channels subject to maximal error criterion does hold neither for AVcqCs when employing the maximal error
 criterion  nor for the strong subspace transmission  over arbitrarily varying quantum channels. The latter communication scenario is widely
 acknowledged as a fully quantum counterpart to message transmission subject to the maximal error criterion. 
\end{section}

\begin{section}{\label{sec:Notation}Notation and Conventions}
All Hilbert spaces are assumed to have finite dimension and are over the field $\cc$. The set of linear operators from $\hr$ to $\hr$ is denoted $\mathcal B(\hr)$. The adjoint 
of $b\in\mathcal B(\hr)$ is marked by a star and written $b^\ast$. The notation $\langle\cdot,\cdot\rangle_{HS}$ is reserved for the Hilbert-Schmidt inner product on $\mathcal B(\hr)$.\\
$\mathcal{S}(\hr)$ is the set of states, i.e. positive semi-definite operators with trace $1$ acting on the Hilbert space $\hr$. Pure states are given by projections onto one-dimensional subspaces. A vector $x\in\hr$ of unit length spanning such a subspace will therefore be referred to as a state vector, the corresponding state will be written 
$|x\rangle\langle x|$. For a finite set $\mathbf X$ the notation $\mathfrak{P}(\mathbf X)$ is reserved for the set of probability distributions on $\mathbf X$, and 
$|\mathbf X|$ denotes its cardinality. For any $l\in\nn$, we define $\bX^l:=\{(x_1,\ldots,x_l):x_i\in\bS\ \forall i\in\{1,\ldots,l\}\}$, we also write $x^l$ for the elements of $\bX^l$.
For any natural number $N$, we define $[N]$ to be the shortcut for the set $\{1,...,N\}$\\
The set of classical-quantum channels (cq-channels) mapping a finite alphabet $\mathbf X$ to a Hilbert space $\hr$ is denoted $CQ(\mathbf{X},\hr)$. Since $CQ(\mathbf X,\hr)$ 
is the set of functions $W:\mathbf X \rightarrow \cs(\hr)$. It is naturally equipped with the norm $\|\cdot\|_{cq}$ (which is inherited from the usual one-norm $\|\cdot\|_1$ 
on operators) and is defined by
\begin{align*}
\|W\|_{cq} := \max_{x \in \mathbf{X}} \|W(x)\|_1 &&(W \in CQ(\mathbf{X},\hr)).
\end{align*}
It is common, to embed the set $\pr(\mathbf{X})$ of probability distributions into $\bo(\cc^{|\mathbf{X}|})$, i.e. to fix an 
orthonormal basis $\{e_x\}_{x \in \mathbf{X}}$ in $\cc^{|\mathbf{X}|}$ and assign to every $p$ in $\pr(\mathbf{X})$ an 
element of $\bo(\cc^{|\mathbf{X}|})$ which is diagonal in this 
basis. For a channel $W\in CQ(\mathbf{X},\hr)$ and a given input probability distribution $p \in \pr(\mathbf{X})$ one defines the corresponding state on $\cc^{|\mathbf{X}|} \otimes \hr$ by
\begin{align}
 \rho := \sum_{x \in \mathbf{X}} p(x) \ket{e_x}\bra{e_x} \otimes W(x).
\end{align}
The set of measurements with $N\in\nn$ different outcomes is written  
$\M_N(\hr):=\{(D_1,\ldots,D_N):\sum_{i=1}^ND_i\leq\eins_\hr\ \mathrm{and}\ D_i\geq 0 \ \forall i\in[N]\}$. 
To every $(D_1,\ldots,D_N)\in \M_N(\hr)$ there corresponds a unique operator defined by $D_0:=\eins_\hr-\sum_{i=1}^ND_i $.\\

The von Neumann entropy of a state $\rho\in\mathcal{S}(\hr)$ is given by
\begin{equation}S(\rho):=-\textrm{tr}(\rho \log\rho),\end{equation}
where $\log(\cdot)$ denotes the base two logarithm which is used throughout the paper (accordingly, $\exp(\cdot)$ is reserved for the base two exponential). For two states $\rho, \sigma \in \mathcal{S}(\hr)$, the quantum relative entropy is defined by
\begin{align}
 D(\rho||\sigma) := 
    \begin{cases}  \tr(\rho\log \rho - \rho\log\sigma) 	& \text{if}\ \ker \sigma \subseteq \ker \rho  \\
						+\infty & \text{else.} \end{cases} 
\end{align}
The Holevo information is for a given channel $W \in CQ(\mathbf{X},\hr)$ and input probability distribution $p \in \pr(\mathbf{X})$ defined by
\begin{align}
 \chi(p, W) := S(\overline{W}) - \sum_{x \in \mathbf{X}} p(x) S(W(x)) 
	     = \sum_{x \in \mathbf{X}} p(x) D(W(x)||\overline{W}), \label{def:holevo-chi}
\end{align}
where $\overline{W}$ is defined by $\overline{W} := \sum_{x \in \mathbf{X}} p(x) W(x)$. This quantity is concave w.r.t. the input probability distribution and convex w.r.t. the channel. Its concavity property follows directly from the concavity of the von Neumann entropy, its convexity in the channel is by joint convexity of the quantum relative entropy.
For an arbitrary set $\mathcal{W} \subset CQ(\mathbf X,\hr)$ we denote its convex hull by $\conv(\mathcal{W})$ (for the  definition of the convex hull, \cite{webster} is a useful reference). 
In fact, for a set $\mathcal{W}:= \{W_s\}_{s \in \bS}$
\begin{align}\label{eq:conv-hull}
 \conv(\mathcal{W})=\left\{W_{q}\in CQ(\mathbf X,\hr): W_q=\sum_{s\in \bS}q(s)W_s,\ q \in\mathfrak{P}(\bS),   |\supp(q)|< \infty \right\},
\end{align}
because of Carath\'eodory's Theorem.
\end{section}

\begin{section}{\label{sec:Definitions}Definitions}
\subsection{\label{subsec:compound-definitions}The compound classical-quantum channel}
Let $\W\subset CQ(\mathbf X,\hr)$. The memoryless compound cq-channel associated with $\W$ is given by the family 
$\{W^{\otimes l}\}_{l\in\nn,W\in \W}$. With slight abuse of notation it will be denoted $\W$ or, if necessary, 'the compound cq-channel $\W$' for short. In the remainder, 
using arbitrary index sets $T$, we will often write $\W=\{W_t\}_{t\in T}$ to enhance readability. Before we continue, let us put a brief remark in order to explain why this 
subsection contains no definition of random codes (while subsection \ref{subsec:avcqc-definitions} does):
\begin{remark}
We abstain from defining random codes for compound cq-channels, the reason for this being that they do offer no increase in capacity. For the reader interested in the topic, we 
briefly outline one way of arriving at this conclusion.\\
First, the capacity of compound channels, seen as a function from the power set of the set of channels with given input and output systems to the reals, is continuous (this can fact can be proven by an argument very similar to the one given for compound quantum channels in Sect. 8 of \cite{bbn-2} together with continuity of the single channel classical capacity, cf. \cite{leung-smith}). This 
allows for an arbitrarily good (speaking in terms of their capacity) approximation of infinite compound cq-channels by finite ones, so that we can restrict our discussion to finite 
compound cq-channels.\\
Second, given such a finite compound cq-channel $\{W_t\}_{t\in T}$ and a sequence of random codes which achieve a given rate $r$ with asymptotically vanishing average error, 
we may simply use it for the memoryless cq-channel $\overline W:=\frac{1}{|T|}\sum_{t\in T}W_t$. Since the average error is a convex function of the channel, this 
implies the existence of a sequence of deterministic codes at the same asymptotic rate with vanishing average error for $\overline W$.\\
Using affinity of the average error criterion once more, we see that the very same sequence of deterministic codes also has vanishing average error for the cq-compound 
channel $\{W_t\}_{t\in T}$, only with a slightly slower convergence. As in the definition of $\overline W$, the assumption that $|T|<\infty$ holds is crucial at this point of the argument. This shows that random codes 
cannot have higher asymptotic rates than deterministic ones, if one insists on asymptotically vanishing average error.\\
For the maximal error criterion, it is enough to note that both the random and the deterministic capacity for transmission of messages over a compound cq-channel using that criterion 
are upper bounded by the respective capacities for the average error criterion. 
\end{remark}
\begin{definition}
An \emph{$(l,M_l)$-code} for message transmission over a compound cq-channel $\mathcal{W} \subset CQ(\mathbf{X},\hr)$ is a family $(x^l_m,D_m^l)_{m=1}^{M_l}$, 
where $x^l_1,\ldots,x^l_{M_l}\in \mathbf X^l$ and $(D_1^l,\ldots,D_{M_l}^l)\in \mathcal{M}_{M_l}(\hr^{\otimes l})$.
\end{definition}
\begin{definition}
For $\lambda \in [0,1)$, a non-negative number $R$ is called a \emph{$\lambda$-achievable rate} for transmission of messages over the compound cq-channel $\W=\{W_t\}_{t\in T}$ using 
the average error criterion if there is a sequence $\{(u_m^l,D_m^l)_{m=1}^{M_l}\}_{l \in \nn}$ of $(l,M_l)$-codes with
\begin{align*}
 &\liminf_{l\to\infty}\frac{1}{l}\log M_l\ge R\qquad \textrm{and} \\
 &\underset{l\to\infty}{\limsup}\ \sup_{t\in T} \  \frac{1}{M_l}\sum_{m=1}^{M_l} \tr(W_t^{\otimes l}(u_m^l)(\eins_{\hr^{\otimes l}}-D_m^l)) \leq  \lambda.
\end{align*}
\end{definition}

\begin{definition}
For $\lambda \in [0,1)$, a non-negative number $R$ is called a \emph{$\lambda$-achievable rate} for transmission of messages over the compound cq-channel $\W=\{W_t\}_{t\in T}$ using the maximal error criterion if there is a sequence $\{(u_m^l,D_m^l)_{m=1}^{M_l}\}_{l \in \nn}$ of $(l,M_l)$-codes with
\begin{align*}
 &\liminf_{l\to\infty}\frac{1}{l}\log M_l\ge R\qquad \textrm{and} \\
 &\underset{l\to\infty}{\limsup}\ \sup_{t\in T} \ \max_{m \in [M_l]} \tr(W_t^{\otimes l}(u_m^l)(\eins_{\hr^{\otimes l}}-D_m^l)) \leq  \lambda.
\end{align*}
\end{definition}
\begin{definition}
For $\lambda \in [0,1)$, the \emph{$\lambda$-capacity} for message transmission using the average error criterion of a compound cq-channel $\mathcal{W}$ is given by
\begin{eqnarray}
\overline C_{C}(\mathcal{W},\lambda):=\sup\left\{R:\begin{array}{l}R\textrm{\ is\ a\ }\lambda\textrm{-achievable\ rate \ for}\\ \textrm{transmission\ of\ messages\ over\ }\mathcal{W}\\
 \textrm{using\ the\ average\ error\ probability\ criterion}\end{array}\right\}.
\end{eqnarray}
The number $\overline C_{C}(\mathcal{W},0)$ is called the \emph{weak capacity} for message transmission using the average error criterion of $\mathcal{W}$ and abbreviated $\overline C_{C}(\mathcal{W})$.
\end{definition}
\begin{definition}
For $\lambda \in [0,1)$, the \emph{$\lambda$-capacity} for message transmission using the maximal error criterion of a compound cq-channel $\mathcal{W}$ is given by
\begin{eqnarray}
C_{C}(\mathcal{W},\lambda):=\sup\left\{R:\begin{array}{l}R\textrm{\ is\ a\ }\lambda\textrm{-achievable\ rate\ for\ transmission} \\ \textrm{of\ messages\ over\ }\mathcal{W} \\
 \textrm{using\ the\ maximal\ error\ probability\ criterion}\end{array}\right\}.
\end{eqnarray}
The number $C_{C}(\mathcal{W},0)$ is called the \emph{weak capacity} for message transmission using the maximal error criterion of $\mathcal{W}$ and abbreviated $C_{C}(\mathcal{W})$.
\end{definition}
\subsection{\label{subsec:avcqc-definitions}The arbitrarily varying classical-quantum channel}
Let $\A\subset CQ(\mathbf X,\hr)$. In the remainder we will write $\A=\{A_s\}_{s\in \bS}$, where $\bS$ denotes an index set, in order to enhance readability. We also set 
\begin{align}
A_{s^l}:=\otimes_{i=1}^lA_{s_i}.
\end{align}
The arbitrarily varying classical-quantum channel associated with $\A$ is given by the family 
$\{A_{s^l}\}_{s^l\in \bS^l,l\in\nn}$. Again, with slight abuse of notation it will be denoted $\A$ or, if necessary, 'the AVcqC $\A$' for short.\\
In this work, we will always consider the set $\mathbf S$ to be finite. Generalizations of our results to the case of arbitrary sets can be done by standard techniques (see \cite{abbn}). 
We will now define random codes and the random capacity emerging from them. In order to do so, we have to clarify a few things. 
\\A code for an AVcqC
$\A$ will, for some choice of $l,N\in\nn$, 
be given by a probability measure $\mu_l$ on the set $((\mathbf X^l)^{N}\times\M_N(\hr^{\otimes l}),\Sigma_l)$, where $\Sigma_l$ is a suitably chosen
sigma-algebra.\\
It has to be taken care that a function $f$ defined by
$((x_1^l,\ldots,x_N^l),(D_1^l,\ldots,D_N^l))\mapsto\min_{s^l\in\bS^l}\frac{1}{N}\sum_{i=1}^N\tr\{W_{s^l}(x^l_i)D_i^l\}$ is measurable w.r.t. 
$\Sigma_l$. Also, in order to define deterministic codes later, $\Sigma_l$ has to contain all the singleton sets. In the remainder, we shall assume that
such a choice is always 
made.\\
An explicit example of such a sigma-algebra is given by the Borel sigma-algebra 
defined using the topology induced by the metric $((x,D),(x',D'))\mapsto(1-\delta(x,x'))+\|D-D'\|_2$ where $\delta(x,x)=1\ \forall x\in\mathbf X$ and
equal to zero else, and 
for sake of simplicity, we set $l=N=1$. Finally, we note that the function $f$ mentioned above is continuous w.r.t. to that metric.\\
In the following definitions, let $\lambda\in[0,1)$.
\begin{definition}\label{def:avc_rand-code} 
An $(l,M_l)$-\emph{random code} for message transmission over $\A=\{A_s\}_{s\in \bS}$ is a probability measure $\mu_l$ on $(\mathbf
(X^l)^{M_l}\times\M_N(\hr^{\otimes l}),\Sigma_l)$. In order to shorten our notation, we write elements of $(X^l)^{M_l}\times\M_N(\hr^{\otimes l})$ in the form 
$(x^l_i,D^l_i)_{i=1}^{M_l}$.
\end{definition}
\begin{definition}\label{def:avc_det-code} 
An \emph{$(l,M_l)$-deterministic code} for message transmission over $\A=\{A_s\}_{s\in \bS}$ is given by a random code for message transmission over
$\A$ with $\mu_l$ assigning probability one to a singleton set.
\end{definition}
\begin{definition}\label{def:avc_rand-achievability}
A non-negative number $R$ is called \emph{$\lambda$-achievable} for transmission of messages over the AVcqC $\A=\{A_s\}_{s\in \bS}$ with random codes using the average
error criterion if 
there is a sequence $(\mu_l)_{l\in\nn}$ of $(l,M_l)$-random codes such that the following two lines are true:
\begin{equation}
\liminf_{l\to\infty}\frac{1}{l}\log M_l\ge R
\end{equation}
\begin{equation}
\limsup_{l\to\infty}\max_{s^l\in\bS^l} \int \frac{1}{M_l} \sum_{i=1}^{M_l} \tr\left(A_{s^l}(x_i^l)(\eins_{\hr^{\otimes l}} - D_i^l)\right)\ d\mu_l((u_i^l,D_i^l)_{i=1}^{M_l}) \leq \lambda.
\end{equation}
\end{definition}
\begin{definition}\label{def:avc_det-achievability}
A non-negative number $R$ is called \emph{$\lambda$-achievable} for transmission of messages over the AVcqC $\A=\{A_s\}_{s\in \bS}$ with deterministic codes using the
average error criterion if 
it is $\lambda$-achievable with random codes by a sequence $(\mu_l)_{l\in\nn}$ which are deterministic codes.
\end{definition}
\begin{definition}
The \emph{$\lambda$-capacity} for message transmission using random codes and the average error criterion of an AVcqC $\A$ is given by
\begin{align}
\overline C_{\textup{A,r}}(\A,\lambda):=\sup\left\{R:\begin{array}{l}R\textrm{\ is\ a\ }\lambda\textrm{-achievable\ rate\ for\ transmission\ of} \\ \textrm{messages\ over\ }\A \textrm{\ with\ random\ codes} \\ \textrm{using\ the\ average \ error\ probability\ criterion}\end{array}\right\}.
\end{align}
The number $\overline C_{\textup{A,r}}(\A,0)$ is called the \emph{weak capacity} for message transmission using random codes and the average error criterion of $\A$ and abbreviated $\overline C_{\textup{A,r}}(\A)$.
\end{definition}
\begin{definition}
The \emph{$\lambda$-capacity} for message transmission using deterministic codes and the average error criterion of an AVcqC $\A$ is given by
\begin{align}
\overline C_{\textup{A,d}}(\A,\lambda):=\sup\left\{R:\begin{array}{l}R\textrm{\ is\ a\ }\lambda\textrm{-achievable\ rate\ for\ transmission\ of}\\ \textrm{messages\ over\ }\A\textrm{\
with \ deterministic\ codes} \\ \textrm{using\ the\ average\ error\ probability\ criterion}\end{array}\right\}.
\end{align}
The number $\overline C_{\textup{A,d}}(\A,0)$ is called the \emph{weak capacity} for message transmission using deterministic codes and the average error criterion of $\A$ and abbreviated $\overline C_{\textup{A,d}}(\A)$.
\end{definition}
\begin{definition}\label{def:avcqc-achievability-maximal-error}
A non-negative number $R$ is called \emph{$\lambda$-achievable} for transmission of messages over the AVcqC $\A=\{A_s\}_{s\in \bS}$ with deterministic codes using the
maximal error probability 
criterion if there is a sequence of $(l,M_l)$-random codes with each $\mu_l$ being a deterministic code such that the following two lines are true:
\begin{equation}
\liminf_{l\to\infty}\frac{1}{l}\log M_l\ge R
\end{equation}
\begin{equation}
\limsup_{l\to\infty}\max_{s^l\in\bS^l}\max_{i=1,\ldots,M_l}\int \tr\left(A_{s^l}(x_i^l)(\eins_{\hr^{\otimes l}} - D_i^l)\right)\ d\mu_l((u_i^l,D_i^l)_{i=1}^{M_l})\leq\lambda.
\end{equation}
\end{definition}
\begin{definition}
The \emph{$\lambda$-capacity} for message transmission using deterministic codes and the maximal error probability criterion of an AVcqC $\A$ is given by
\begin{align}
C_{\textup{A,d}}(\A,\lambda):=\sup\left\{R:\begin{array}{l}R\textrm{\ is\ a\ }\lambda\textrm{-achievable\ rate\ for\ transmission}\\ \textrm{of\ messages\ over\ }\A \textrm{\ with deterministic\ codes} \\ \textrm{using\ the\ maximal\ error\ probability\ criterion}\end{array}\right\}.
\end{align}
The number $C_{\textup{A,d}}(\A,0)$ is called the  \emph{weak capacity} for message transmission using deterministic codes and the maximal error criterion of $\A$ and abbreviated $C_{\textup{A,d}}(\A)$.
\end{definition}
The following definition will turn out to be useful to decide whether a given $AVcqC$ has nonzero capacity for transmission of messages using average error criterion and 
deterministic codes.
\begin{definition}{\label{def:symmetrizability-for-maximal-error}}
 Let $\A=\{A_s\}_{s\in\bS}\subset CQ(\mathbf X,\hr)$ be an AVcqC. If, for every $x,x'\in\mathbf X$, we have
\begin{equation}\label{eq:equiv-max-1}
\conv(\{\A_s(x)\}_{s\in\bS})\cap \conv(\{\A_s(x')\}_{s\in\bS})\neq\emptyset,
\end{equation}
then $\A$ is called \emph{m-symmetrizable}.
\end{definition}
\subsection{Zero-error capacity\label{subsec:def:zero-error}}
\begin{definition}
An \emph{$(l,M_l)$ zero-error code} for a stationary memoryless cq-channel defined by $V\in CQ(\mathbf X,\hr)$ is given by a family
$(x^l_i,D_i^l)_{i=1}^{M_l}$, where 
$x^l_1,\ldots,x^l_{M_l}\in \mathbf X^l$ and $(D_1^l,\ldots,D_{M_l}^l)\in \mathcal{M}_{M_l}(\hr^{\otimes l})$ 
satisfy $\tr(V^{\otimes l}(x^l_i)D_i^l)=1$ for every $i\in[M_l]$.
\end{definition}
\begin{definition}
The \emph{zero-error capacity} for message transmission over the cq-channel $V\in CQ(\mathbf X,\hr)$ is given by
\begin{equation}
 C_0(V):=\lim_{l\to\infty}\frac{1}{l}\log\max\{M_l:\exists\ (l,M_l)\ \mathrm{zero-error\ code\ for\ }V\}.
\end{equation}
\end{definition}
\end{section}

\begin{section}{Main Results\label{section:main-results}}
We now enlist the main results contained in this work. We will not state the results obtained in Subsection \ref{subsec:zero-error}. These evolve around the 
relation between zero-error capacities and arbitrarily varying channels. They include both message transmission and entanglement transmission. Rather than stating a positive 
result, in this section we argue that certain straightforward quantum analogues of results that are valid in the classical theory do not hold. As always, this is a delicate 
task that involves much more than just embedding a commutative subalgebra into a non-commutative one. We therefore encourage the reader to consider this last subsection as something 
that should be read separately and in one piece.\\
Our first result is the following.
\begin{theorem}[cq Compound Coding Theorem]\label{comp_cq_direct_part}
For every compound cq-channel $\mathcal{W} \in CQ(\mathbf{X},\hr)$ it holds
\begin{align}
 \overline{C}_{C}(\mathcal{W}) = \underset{p \in \pr(\mathbf{X})}{\max}\; \underset{W \in \mathcal{W}}{\inf} \chi(p,W).
  \label{comp_cq_capacity}
\end{align}
\end{theorem}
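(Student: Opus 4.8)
The plan is to prove the two inequalities in \eqref{comp_cq_capacity} separately: the achievability (direct) part
$\overline{C}_{C}(\mathcal{W}) \ge \max_{p}\inf_{W\in\mathcal{W}}\chi(p,W)$ and the weak converse $\overline{C}_{C}(\mathcal{W}) \le \max_{p}\inf_{W\in\mathcal{W}}\chi(p,W)$. For the direct part I would first reduce to a finite compound channel: by the continuity of the compound capacity as a function of the channel set (alluded to in the Remark, proved by approximating $\mathcal{W}$ by a suitable finite $\eps$-net in the $\|\cdot\|_{cq}$ norm), it suffices to construct, for every finite $\mathcal{W}=\{W_t\}_{t\in T}$, every $p\in\pr(\mathbf{X})$ and every $\delta>0$, a sequence of $(l,M_l)$-codes with $\frac{1}{l}\log M_l \to \inf_{t}\chi(p,W_t)-\delta$ and maximal (hence average) error going to zero uniformly in $t$. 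Note that the Holevo quantity is continuous in $W$, so the passage from a finite net back to $\mathcal{W}$ costs only an arbitrarily small amount in the rate.

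For the code construction I would follow the universal hypothesis-testing route: fix $p$, form the i.i.d.\ input distribution $p^{\otimes l}$, draw $M_l$ codewords $u_1^l,\dots,u_{M_l}^l$ i.i.d.\ from $p^{\otimes l}$, and build the decoding POVM from \emph{universal} projectors that simultaneously work for all $t\in T$. The key technical input is a universal quantum hypothesis testing / packing lemma generalizing Hayashi--Ogawa \cite{ogawa01}: there exist projections $\Pi_{u^l}$ (onto a ``typical'' subspace of $\bigotimes_i W_t(u_i)$ valid for all $t$ simultaneously, exploiting that $|T|$ is finite so one can take an intersection of the individual typical projectors at the cost of only a polynomial-in-$l$ factor) together with a common output-typical projector, such that for every $t$ one has $\tr(W_t^{\otimes l}(u^l)\Pi_{u^l})\ge 1-\eps_l$ with $\eps_l\to 0$, while the ``wrong-codeword'' overlap terms are exponentially small whenever $\frac{1}{l}\log M_l < \inf_t\chi(p,W_t)-\delta$. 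Plugging these projectors into a Holevo--Schumacher--Westmoreland-type (pretty-good or sequential) decoder and averaging over the random codebook, the expected average error over $t$ is bounded by $\eps_l + M_l\cdot\exp(-l(\inf_t\chi(p,W_t)-\delta - o(1)))\to 0$; expurgating the worst half of the codewords upgrades the average error to a maximal-error guarantee while losing one bit. Optimizing over $p$ (the max is attained since $\pr(\mathbf{X})$ is compact and $p\mapsto\inf_W\chi(p,W)$ is concave and continuous) gives the claimed lower bound; the maximal-error sharpening is exactly the standard argument from \cite{bb-compound}.

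For the weak converse I would argue by reduction to the single-channel strong converse. Suppose $R$ is $0$-achievable for $\mathcal{W}$ using the average error criterion. Then for each $W\in\mathcal{W}$ the same codes have average error $\to 0$ over that single channel, so $R\le \max_p\chi(p,W)$; it remains to exchange $\inf_W$ with the bound, i.e.\ to show $R\le\inf_W\max_p\chi(p,W)$, and more delicately that this infimum equals $\max_p\inf_W\chi(p,W)$ (a minimax step). For the first part I would pass from average to maximal error by the Ahlswede--Wolfowitz lemma from \cite{ahlswede69} (throwing away a fraction of codewords to convert a code with small average error into one with small maximal error at essentially the same rate), and then invoke the strong converse for a single memoryless cq-channel to get the bound $\max_p\chi(p,W)$ for each fixed $W$; taking the infimum over $W\in\mathcal{W}$ yields $R\le\inf_{W}\max_p\chi(p,W)$. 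Finally, the minimax identity $\inf_{W\in\mathcal{W}}\max_{p}\chi(p,W)=\max_{p}\inf_{W\in\mathcal{W}}\chi(p,W)$ follows from Sion's minimax theorem: $\chi(p,W)$ is concave in $p$, convex (hence quasi-convex) in $W$, continuous in both, $\pr(\mathbf{X})$ is compact convex, and one may replace $\mathcal{W}$ by $\conv(\mathcal{W})$ without changing either side (the left side because $p\mapsto\chi(p,\cdot)$ is linear-in-$W$... actually affine, and the infimum of an affine function over a set equals that over its convex hull; the right side because $\chi(p,W)$ is convex in $W$, so $\inf$ over $\mathcal{W}$ equals $\inf$ over $\conv(\mathcal{W})$).

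The main obstacle I expect is the universal hypothesis-testing lemma underlying the direct part: making the packing/decoding projectors work \emph{simultaneously} for all channels in the (finite) set with a \emph{uniform} exponential error bound, rather than a merely superpolynomial one as in \cite{bb-compound}. The finiteness of $T$ is essential here — it lets one intersect finitely many individual typical projectors and absorb the resulting $|T|$-fold (or $\mathrm{poly}(l)$-fold) loss into subexponential corrections — and the continuity/net reduction is what lets one leverage this to general (possibly infinite) $\mathcal{W}$. The converse direction is comparatively routine once the Ahlswede--Wolfowitz reduction and Sion's theorem are in place; the only subtlety there is checking the hypotheses for the minimax exchange, in particular that passing to the convex hull is harmless on both sides.
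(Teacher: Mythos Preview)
Your direct-part plan is in the same spirit as the paper's: reduce to a finite compound via an $\alpha$-net (Lemma~\ref{lemma:approximation}), build a universal hypothesis test generalizing Hayashi--Ogawa, and feed it into a Hayashi--Nagaoka decoder (Lemma~\ref{lemma:statistical-test-to-code}). The paper does not, however, ``intersect the individual typical projectors''; it works with the \emph{averaged} states $\rho_l=\frac{1}{|T|}\sum_t v_l\rho_t^{\otimes l}v_l^\ast$ and $\tau_l=p^{\otimes l}\otimes\frac{1}{|T|}\sum_t\sigma_t^{\otimes l}$, regularizes $\rho_l$ to an invertible $\rho_{l,\varepsilon}$, pinches to the eigenspaces of $\tau_l$, and takes the projector onto the nonnegative part of $\overline{\rho}_{l,\varepsilon}-2^{l(a-\delta)}\tau_l$ (Lemma~\ref{lemma:existence-of-statistical-test}). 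This yields a single test with a uniform exponential bound, which is the sharpening over \cite{bb-compound}. Your intersection idea is a different technical route; it might work, but making it give an exponential (rather than superpolynomial) error uniformly in $t$ is exactly the obstacle you flag, and the paper's averaged-state construction is how it is overcome.

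Your converse, however, has a genuine gap. From the single-channel strong converse you correctly get $R\le\max_p\chi(p,W)$ for each fixed $W$, hence $R\le\inf_{W\in\mathcal W}\max_p\chi(p,W)$. But the minimax equality $\inf_W\max_p=\max_p\inf_W$ is \emph{false} for non-convex $\mathcal W$: take two channels with equal single-channel capacity but different capacity-achieving inputs; then $\inf_W\max_p$ is that common capacity while $\max_p\inf_W$ is strictly smaller. Your repair via $\conv(\mathcal W)$ fails because neither $W\mapsto\chi(p,W)$ nor $W\mapsto\max_p\chi(p,W)$ is affine---both are merely convex, and the infimum of a convex function over $\conv(\mathcal W)$ can be strictly below the infimum over $\mathcal W$ (e.g.\ $W_1(x)=|e_x\rangle\langle e_x|$, $W_2(x)=|e_{1-x}\rangle\langle e_{1-x}|$: each has $\chi(p,\cdot)=H(p)$, but $(W_1+W_2)/2$ is useless). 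The paper avoids minimax entirely: it applies Ahlswede--Wolfowitz (Lemma~\ref{average_to_maximum}) to pass from average to maximal error for a \emph{finite approximation} $\{W_t\}_{t\in\hat T}$ of the compound, and then invokes the strong converse for \emph{compound} cq-channels under maximal error from \cite{bb-compound}, which already delivers $\frac{1}{l}\log M_l\le\max_p\min_{t\in\hat T}\chi(p,W_t)+o(1)$ in the correct max--min order. The underlying reason this works (and your per-$W$ reduction does not) is that the compound converse---or equivalently a direct Fano/Holevo argument---exploits that the \emph{same} code, hence the same effective input distribution $\bar p$, must serve every $W$, yielding $R\le\inf_W\chi(\bar p,W)\le\max_p\inf_W\chi(p,W)$; fixing $W$ first and optimizing $p$ afterwards loses precisely this coupling.
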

In subsection \ref{subsec:The Ahlswede-Dichotomy for AVcqCs}, an analogue of the Ahlswede dichotomy from \cite{ahlswede-elimination} for arbitrarily varying classical-quantum channels will be derived.
This statement has originally been obtained by Ahlswede and Blinovsky in \cite{ahlswede-blinovsky}. The precise mathematical formulation reads as follows.
\begin{theorem}[Ahlswede-Dichotomy for AVcqCs]\label{ahlswede-dichotomy-for-avcqcs}
 Let $\A=\{A_s\}_{s\in\bS}\subset CQ(\bX,\hr)$ be an AVcqC. Then
\begin{align}
1)&\qquad \overline C_{A,r}(\A)=\overline C_C(\conv(\A)) \label{ahlswede-dichotomy-for-avcqcs_1}\\
2)&\qquad\textup{If\ }\overline C_{A,d}(\A)>0\textup{,\ then\ }\overline C_{A,d}(\A)=\overline C_{A,r}(\A). \label{ahlswede-dichotomy-for-avcqcs_2}
\end{align}
\end{theorem}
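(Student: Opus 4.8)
The plan is to derive the Ahlswede dichotomy from the already-established compound coding theorem (Theorem~\ref{comp_cq_direct_part}) by invoking Ahlswede's robustification and elimination techniques. I split the proof into the two assertions.

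For part~1), i.e.\ $\overline C_{A,r}(\A)=\overline C_C(\conv(\A))$, I would argue both inequalities. For $\overline C_{A,r}(\A)\le\overline C_C(\conv(\A))$: given any $(l,M_l)$-random code $\mu_l$ for $\A$, note that averaging an AVcqC state $A_{s^l}$ over a product distribution $q^{\otimes l}$ with $q\in\mathfrak P(\bS)$ yields exactly $(\sum_s q(s)A_s)^{\otimes l}=W_q^{\otimes l}$ with $W_q\in\conv(\A)$; since the average error is affine in the channel, any random code good against all $s^l\in\bS^l$ is in particular good against all product states $W_q^{\otimes l}$, and then — using that averaging over $\mu_l$ also keeps the error small for the \emph{average} of these channels exactly as in the Remark following the compound-channel definitions — one extracts a deterministic code for the compound channel $\conv(\A)$ (equivalently for its finite subfamilies, by the continuity of the compound capacity already cited in that Remark), so $\overline C_C(\conv(\A))\ge\overline C_{A,r}(\A)$. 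For the reverse inequality $\overline C_{A,r}(\A)\ge\overline C_C(\conv(\A))$: take a good deterministic $(l,M_l)$-code for the compound channel $\conv(\A)$ achieving a rate close to $\overline C_C(\conv(\A))$ with average error $\le\lambda_l\to 0$. Apply Ahlswede's \emph{robustification technique}: permuting the codewords and the corresponding decoding operators simultaneously by $\pi\in S_l$ (acting on the $l$ tensor factors) and then averaging uniformly over $\pi\in S_l$ produces a random code whose average error against \emph{every} sequence $s^l\in\bS^l$ is bounded by $(l+1)^{|\bS|}\cdot\lambda_l$ — this is the standard robustification estimate, the key input being that the compound channel $\conv(\A)$ contains all channels of the form $\sum_s N(s|s^l)/l\cdot A_s$, i.e.\ all ``type-averaged'' channels along $s^l$. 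Since $(l+1)^{|\bS|}\lambda_l\to 0$ if $\lambda_l$ decays faster than any polynomial — which is exactly the strengthened achievability the authors emphasise they prove (uniformly exponentially decreasing error) — this gives a random code for $\A$ at the same asymptotic rate, hence $\overline C_{A,r}(\A)\ge\overline C_C(\conv(\A))$.

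For part~2), assume $\overline C_{A,d}(\A)>0$. The inequality $\overline C_{A,d}(\A)\le\overline C_{A,r}(\A)$ is immediate since deterministic codes are a special case of random codes. For the converse, I would use Ahlswede's \emph{elimination technique}. Start from a good random code for $\A$ at a rate $R<\overline C_{A,r}(\A)$ with average error decaying exponentially in $l$ (again relying on the strengthened achievability). First derandomise the support: a counting/probabilistic argument shows that the random code can be replaced by a uniform mixture of only polynomially many (in fact $\sim l^2$, or even $O(l)$) deterministic codes while keeping the maximal-over-$s^l$ average error small — here one uses $|\bS^l|=|\bS|^l$ together with a union bound against the exponential decay. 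Then transmit the (short, length $o(l)$) index of which deterministic code is used by using a small portion of the block: this is where $\overline C_{A,d}(\A)>0$ is essential, since it guarantees a deterministic code of positive rate exists to send $O(\log l)$ bits reliably over $\A$ with a vanishing fraction of the block length, and the rate loss is negligible. Concatenating the ``prefix'' deterministic code with the selected ``main'' deterministic code yields a single deterministic code for $\A$ at rate arbitrarily close to $R$, hence $\overline C_{A,d}(\A)\ge\overline C_{A,r}(\A)$.

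The main obstacle — and the reason the strengthened achievability in Theorem~\ref{comp_cq_direct_part} matters — is the interplay between the \emph{rate of decay} of the error and the polynomial blow-up factors: robustification costs a factor $(l+1)^{|\bS|}$ and elimination needs a union bound over $|\bS|^l$ sequences, so both steps genuinely require the compound codes to have error decaying faster than any polynomial (ideally exponentially), not merely tending to zero. A secondary technical point is the measurability bookkeeping for the random codes (ensuring $\mu_l$ as constructed by symmetrisation and by the finite mixture lives on the sigma-algebra $\Sigma_l$ fixed in Subsection~\ref{subsec:avcqc-definitions}, and that the relevant error functional is measurable), but this is routine given the explicit Borel structure described there. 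I would also need to be slightly careful that the reduction in part~1) to \emph{finite} subsets of $\conv(\A)$, needed to form averages $\overline W$, is justified by the continuity statement already invoked in the Remark.
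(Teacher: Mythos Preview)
Your proposal is correct and follows essentially the same architecture as the paper: robustification of compound codes to obtain the random-code achievability in part~1), and Ahlswede's elimination technique combined with a short deterministic prefix to obtain part~2). The only substantive deviation is in the converse of part~1): you reduce a random AVcqC code to a random compound code and then invoke the Remark (random codes do not help for compound channels, plus continuity) to pass to a deterministic compound code, whereas the paper fixes an arbitrary $W_q\in\conv(\A)$, extracts from the random AVcqC code a deterministic code for the \emph{single} channel $W_q$, applies the single-channel strong converse, and then uses the minimax identity $\max_p\min_{W}\chi(p,W)=\min_{W}\max_p\chi(p,W)$ on the compact convex set $\conv(\A)$ --- this is exactly their Theorem~\ref{avcqc_strong_converse}, which yields the stronger statement $\overline C_{A,r}(\A,\lambda)\le\overline C_C(\conv(\A))$ for every $\lambda\in[0,1)$. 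Your route is perfectly valid for the weak capacity needed in the dichotomy, but it does not deliver this strong converse, and it leans on the somewhat more delicate continuity/approximation step for infinite compound sets rather than on the clean minimax argument.
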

Also, this section contains the following statement, which asserts, that every sequence of random codes whith error strictly smaller than 1 for all but finitely many blocklenghts will not 
achieve rates higher than the rightmost term in (\ref{ahlswede-dichotomy-for-avcqcs_1}).
\begin{theorem}[Strong converse]\label{avcqc_strong_converse}
Let $\mathcal{A}:= \{A_s\}_{s \in \mathbf{S}}$ be an AVcqC. For every $\lambda \in [0,1)$ 
\begin{align}
  \overline{C}_{A,r}(\mathcal{A}, \lambda) \leq \overline{C}_C(\conv(\mathcal{A}))
\end{align}
holds.
\end{theorem}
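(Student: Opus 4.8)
The plan is to reduce the strong converse for random codes over the AVcqC $\A$ to the strong converse for the compound cq-channel $\conv(\A)$, which in turn we will establish through a standard one-shot upper bound on the size of codes with error strictly below $1$. First I would observe that for any fixed realization $(x_i^l, D_i^l)_{i=1}^{M_l}$ of an $(l,M_l)$-random code and any probability distribution $q\in\pr(\bS)$, averaging the per-state error $\frac{1}{M_l}\sum_i \tr(A_{s^l}(x_i^l)(\eins-D_i^l))$ over $s^l$ drawn i.i.d. according to $q$ yields exactly $\frac{1}{M_l}\sum_i \tr(W_q^{\otimes l}(x_i^l)(\eins-D_i^l))$, where $W_q=\sum_{s}q(s)A_s\in\conv(\A)$; this uses the product structure $A_{s^l}=\otimes_i A_{s_i}$ and linearity of the trace. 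Since the maximum over $s^l\in\bS^l$ dominates any such average, a sequence of random codes that is $\lambda$-good for $\A$ in the sense of Definition \ref{def:avc_rand-achievability} is, after integrating the same inequality against $\mu_l$, also $\lambda$-good on average (over $\mu_l$) for the memoryless channel $W_q^{\otimes l}$ for \emph{every} $q$, hence for the whole compound channel $\conv(\A)$ simultaneously. By an averaging/selection argument (pick, for each $l$, a realization in the support of $\mu_l$ whose error is no worse than the $\mu_l$-average — this is where we use that $\Sigma_l$ contains the singletons) we extract a sequence of deterministic $(l,M_l)$-codes that achieves the same rate with $\limsup_l \sup_{W\in\conv(\A)}(\text{average error})\le\lambda$. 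Therefore $\overline C_{A,r}(\A,\lambda)\le \overline C_C(\conv(\A),\lambda)$.

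It then remains to show $\overline C_C(\conv(\A),\lambda)\le \overline C_C(\conv(\A))$, i.e.\ that the $\lambda$-capacity of a compound cq-channel under average error does not exceed its weak capacity — a strong converse for compound cq-channels. Here I would argue directly: fix any single $W\in\conv(\A)$, so that an $(l,M_l)$-code good for the compound channel is in particular good for the memoryless channel $W^{\otimes l}$. The strong converse for a single memoryless cq-channel (due to Winter / Ogawa–Nagaoka) gives, for each input distribution $p$, that $\limsup_l \frac1l\log M_l \le \chi(p,W)$ whenever the average error stays below $1$; optimizing, the rate is at most $\max_p\chi(p,W)$. Since this holds for every $W\in\conv(\A)$, the achievable rate is at most $\inf_{W\in\conv(\A)}\max_p\chi(p,W)$. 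One then invokes a minimax argument — $\chi(p,W)$ is concave in $p$ and convex in $W$, and $\pr(\bX)$ is compact convex — so $\inf_{W}\max_p\chi(p,W)=\max_p\inf_{W}\chi(p,W)$, which by the Compound Coding Theorem \ref{comp_cq_direct_part} equals $\overline C_C(\conv(\A))=\overline C_C(\conv(\A),0)$. Chaining the two inequalities yields $\overline C_{A,r}(\A,\lambda)\le\overline C_C(\conv(\A))$ as claimed. (Note $\conv(\A)$ is a closed bounded subset of the finite-dimensional space $CQ(\bX,\hr)$, hence compact, since $\bS$ is finite; this is what makes the infima and the minimax legitimate.)

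The main obstacle is the strong converse step for the single memoryless cq-channel used with error bound $\lambda$ arbitrarily close to $1$: the naive weak-converse (Fano-type) bound only controls rates when $\lambda$ is below some threshold, and we genuinely need the full strength of the cq strong converse theorem so that \emph{no} rate above $\chi(p,W)$ survives even as $\lambda\uparrow 1$. I would cite the Ogawa–Nagaoka strong converse for classical-quantum channels for this; alternatively, if one wants a self-contained route consistent with the rest of the paper, the same universal hypothesis-testing machinery used for the direct part yields an exponential bound whose converse companion (a ``wringing'' or Nagaoka-type argument) gives the required statement. A minor secondary point to handle carefully is the measurability/selection used to pass from random to deterministic codes: one must check that the integrand $(x_i^l,D_i^l)_{i=1}^{M_l}\mapsto \sup_{W\in\conv(\A)}\frac1{M_l}\sum_i\tr(W^{\otimes l}(x_i^l)(\eins-D_i^l))$ is $\Sigma_l$-measurable (it is, being a supremum of continuous functions of $D_i^l$ over the compact set $\conv(\A)$, hence continuous, and the metric on inputs is discrete), so that a realization attaining at most the $\mu_l$-mean exists in the support of $\mu_l$.
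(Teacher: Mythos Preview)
Your overall strategy matches the paper's: reduce to a single memoryless cq-channel $W_q\in\conv(\A)$ via the product-averaging identity $\sum_{s^l}q^{\otimes l}(s^l)\,e_{s^l}(\text{code})=e(W_q,\text{code})$, invoke the single-channel strong converse of Winter/Ogawa--Nagaoka, then use the minimax theorem for $\chi$ on the compact convex set $\conv(\A)$. The paper proceeds exactly this way.

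There is, however, a genuine gap in your selection step. From
\[
\sup_{W\in\conv(\A)}\int e(W,\cdot)\,d\mu_l \le \lambda_l
\]
you cannot conclude that some realization satisfies $\sup_W e(W,\cdot)\le\lambda_l$; that would require $\int \sup_W e(W,\cdot)\,d\mu_l\le\lambda_l$, and the integral of a supremum is in general strictly larger than the supremum of the integrals. So your argument does not establish the intermediate inequality $\overline C_{A,r}(\A,\lambda)\le \overline C_C(\conv(\A),\lambda)$, and the detour through the compound $\lambda$-capacity is where the error enters.

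The fix is precisely what the paper does: do not attempt to select a realization uniformly good for all $W$. Instead, fix $W_q$ \emph{first}, observe that $\int e(W_q,\cdot)\,d\mu_l\le\lambda_l$ for that one channel, select a realization good for $W_q$ alone (this selection is legitimate, since it involves only a single function), and apply the single-channel strong converse to bound the rate by $\max_p\chi(p,W_q)$. Only then take the infimum over $W_q\in\conv(\A)$ and apply minimax. Your second paragraph already contains this argument verbatim; it simply does not need the flawed uniform-selection step feeding into it.
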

\begin{remark}
The result can be gained for arbitrary (infinite) AVcqCs with only trivial modifications of the proof given below.
\end{remark}
In the next subsection \ref{subsec:maximal-and-zero-error}, we show that the capacity for message transmission over an AVcqC using
deterministic codes and the maximal error probability criterion is zero if and only if the AVcqC is $m-symmetrizable$.\\
This is an analog of \cite[Theorem 1]{kiefer-wolfowitz}. It can be formulated as follows.
\begin{theorem}\label{theorem:c-det=0-for-maximal-error}
Let $\A=\{A_s\}_{s\in\bS}\subset CQ(\mathbf X,\hr)$ be an AVcqC. Then $C_{A,d}(\A)$ is equal to zero if and only if $\A$ is m-symmetrizable.
\end{theorem}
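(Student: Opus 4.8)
**The plan is to prove both implications separately, the "if" part (m-symmetrizability implies zero capacity) being the easier direction and the "only if" part (positive capacity forces non-symmetrizability) being the main obstacle.**

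For the "if" direction, assume $\A$ is m-symmetrizable. The goal is to show that no $(l,M_l)$-code with $M_l\geq 2$ can have maximal error bounded away from $1$ for large $l$. Given any code $(x_m^l,D_m^l)_{m=1}^{M_l}$, pick two messages $m\neq m'$ with codewords $x_m^l=(x_{m,1},\ldots,x_{m,l})$ and $x_{m'}^l$. Using the hypothesis (\ref{eq:equiv-max-1}) coordinatewise, for each coordinate $i$ there is a common element $\sigma_i\in\conv(\{\A_s(x_{m,i})\}_s)\cap\conv(\{\A_s(x_{m',i})\}_s)$, so there exist probability distributions $q_i,q_i'$ on $\bS$ with $\sum_s q_i(s)\A_s(x_{m,i})=\sigma_i=\sum_s q_i'(s)\A_s(x_{m',i})$. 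Forming the product distributions $q^l:=\bigotimes_i q_i$ and $q'^l:=\bigotimes_i q_i'$ on $\bS^l$, one gets $\sum_{s^l}q^l(s^l)A_{s^l}(x_m^l)=\bigotimes_i\sigma_i=\sum_{s^l}q'^l(s^l)A_{s^l}(x_{m'}^l)$, call it $\Theta$. Then averaging the error criterion over the state sequences distributed according to $q^l$ (for message $m$) and $q'^l$ (for message $m'$), we obtain
\begin{align*}
\max_{s^l}\tr(A_{s^l}(x_m^l)(\eins-D_m^l)) + \max_{s^l}\tr(A_{s^l}(x_{m'}^l)(\eins-D_{m'}^l)) \geq \tr(\Theta(\eins-D_m^l)) + \tr(\Theta(\eins-D_{m'}^l)) \geq \tr(\Theta(\eins-D_m^l-D_{m'}^l)) \geq 0,
\end{align*}
and since $D_m^l+D_{m'}^l\leq\eins$, one term is at least $\tfrac12\tr(\Theta\,\eins)=\tfrac12$; hence the maximal error is $\geq\tfrac12$ for any code with at least two messages, forcing $C_{A,d}(\A)=0$.

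For the "only if" direction, argue by contraposition: assume $\A$ is \emph{not} m-symmetrizable and show $C_{A,d}(\A)>0$. Non-symmetrizability gives two fixed letters $x,x'\in\mathbf X$ with $\conv(\{\A_s(x)\}_s)\cap\conv(\{\A_s(x')\}_s)=\emptyset$. Since these are compact convex sets in the finite-dimensional space $\bo(\hr)$, the separating hyperplane theorem yields a self-adjoint operator $Z$ and a margin $\delta>0$ with $\tr(Z\A_s(x))\geq\delta$ and $\tr(Z\A_s(x'))\leq-\delta$ for all $s\in\bS$ (after centering). The plan is to build, for each blocklength $l$, a code of size $M_l=l+1$ (say) whose codewords are $x_m^l = x^{\otimes m}\otimes x'^{\otimes(l-m)}$ — i.e. the "Hamming-weight" codewords — and whose decoding measurement estimates the number of $x$'s by a joint measurement of the observable $\sum_i Z_{(i)}$ (or a suitable coarse-graining thereof). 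For any adversarial sequence $s^l$, the expectation of this observable under $A_{s^l}(x_m^l)$ lies near $m\delta-(l-m)\delta = (2m-l)\delta$, with a variance that grows only linearly in $l$; a Chebyshev/Hoeffding-type estimate then shows that the probability of confusing codeword $m$ with $m'$ decays, uniformly in $s^l$, provided the codewords are spaced $\Omega(\sqrt l)$ apart in Hamming weight. Spacing $\Theta(\sqrt l)$ apart gives $\Theta(\sqrt l)$ messages and hence a positive (indeed, $\geq \frac12\log$-type growth of) deterministic maximal-error capacity — actually any positive rate is more than we need; even $M_l\to\infty$ super-polynomially is not required, $M_l=2$ for all $l$ already suffices to conclude $C_{A,d}(\A)>0$ is \emph{false} — wait, we need $C_{A,d}(\A)>0$, which is a positive \emph{rate}, so we do need $M_l$ exponential.

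The above sketch only yields $M_l$ polynomial, so the \textbf{main obstacle} is upgrading from "a handful of distinguishable codewords" to an exponentially large codebook with vanishing maximal error uniformly over $\bS^l$. The standard resolution is a two-step construction. First, use non-symmetrizability to establish a single-letter "distinguishability" property and then invoke the Ahlswede--Dichotomy (Theorem \ref{ahlswede-dichotomy-for-avcqcs}): the random-code capacity $\overline C_{A,r}(\A)=\overline C_C(\conv(\A))$ is positive as long as $\conv(\A)$ is not a useless (zero-Holevo) channel, which genuinely non-symmetrizable AVcqCs satisfy. Then the deterministic average-error capacity equals the random one. Finally, convert average error to maximal error: having positive deterministic average-error capacity, one takes a good average-error code, then uses a small pre-code (of constant blocklength but growing with $l$ only logarithmically) built from the $l+1$ Hamming-weight codewords above to transmit the index of which of a few "color-shifted" versions of the codebook is being used — this is exactly Ahlswede's elimination/robustification correspondence combined with the standard trick (as in Ahlswede--Blinovsky, or \cite{kiefer-wolfowitz} in the classical case) that a non-symmetrizable AVC admits $\Omega(\log)$-sized maximal-error codes, which when prepended to an average-error code at a sacrifice of only a vanishing rate yields a maximal-error code at the same asymptotic rate. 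Carrying out the measurement analysis (the concatenated decoder must first identify the pre-code index with maximal error, then decode the main codebook with average error) and verifying the error estimates are uniform in $s^l$ is the technical heart; the separation/convexity input is routine, and the capacity formula itself then follows from Theorem \ref{ahlswede-dichotomy-for-avcqcs} and Theorem \ref{comp_cq_direct_part}.
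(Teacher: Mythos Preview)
Your ``if'' direction is correct and matches the paper's argument. (The displayed inequality chain is a bit garbled: what you want is $\tr(\Theta(\eins-D_m^l))+\tr(\Theta(\eins-D_{m'}^l))=1+\tr(\Theta(\eins-D_m^l-D_{m'}^l))\geq 1$, so one of the two worst-case errors is at least $1/2$; but your concluding sentence gets this right.)

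The ``only if'' direction has a genuine gap. You correctly isolate two letters $x,x'$ with disjoint output hulls and set up the separating hyperplane, and you correctly diagnose that the Hamming-weight construction alone gives only polynomially many codewords. But the bootstrap you propose does not go through, for two independent reasons:
\begin{enumerate}
\item You invoke Theorem~\ref{ahlswede-dichotomy-for-avcqcs} to conclude that the deterministic \emph{average}-error capacity is positive. Part 2) of that theorem, however, reads ``if $\overline C_{A,d}(\A)>0$, then $\overline C_{A,d}(\A)=\overline C_{A,r}(\A)$''; it \emph{assumes} the positivity you are trying to establish. Knowing $\overline C_{A,r}(\A)>0$ does not by itself give $\overline C_{A,d}(\A)>0$ --- that is precisely the content of the dichotomy.
\item Even granting $\overline C_{A,d}(\A)>0$, prepending a small maximal-error pre-code to an average-error main code does not produce a \emph{maximal}-error code. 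The elimination technique guarantees only that the error averaged over both the randomization index $k$ and the message $m$ is small for every $s^l$; individual $(k,m)$ pairs can still have error close to $1$, and the adversary chooses $s^l$ on the main block knowing $k$. In fact there is no general passage from $\overline C_{A,d}(\A)>0$ to $C_{A,d}(\A)>0$: already classically there are AVCs that are m-symmetrizable but not symmetrizable in the average-error sense, hence with $C_{A,d}=0$ and $\overline C_{A,d}>0$.
\end{enumerate}

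The paper's proof (deferred to \cite{abbn}) keeps your separating hyperplane but avoids the average-error machinery entirely. From the Hermitian $Z$ one forms the POVM element $M=(\eins+Z/\|Z\|_\infty)/2$, which satisfies $\min_s\tr(M A_s(x))>\max_s\tr(M A_s(x'))$. Restricting the input alphabet to $\{x,x'\}$ and measuring $\{M,\eins-M\}$ coordinatewise reduces $\A$ to a classical binary-input binary-output AVC whose two row-hulls are again disjoint, i.e., which is not m-symmetrizable in the classical sense. The classical Kiefer--Wolfowitz theorem \cite{kiefer-wolfowitz} then furnishes a positive-rate deterministic \emph{maximal}-error code for this classical AVC, and that code lifts verbatim to one for $\A$.
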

\end{section}
\begin{section}{\label{sec:compound}Compound cq-channels}
In this section, we consider compound cq-channels and give a rigourous proof for the achievability part of the coding theorem under 
the average error 
criterion together with a weak converse. The channel coding problem for compound cq-channels was treated, restricted to achievability, 
by Datta and Hsieh \cite{datta-hsieh} 
for a certain class of compound channels, and Hayashi \cite{hayashi08}. In our proof, we exploit the close relationship between channel
 coding and hypothesis testing 
which was utilized by Hayashi and Nagaoka \cite{hayashi03} before. With focus set on the maximal error criterion, the compound cq channel
 coding theorem was proven in 
\cite{bb-compound} already where also a strong converse theorem was proven for this setting.\\
For orientation of the reader we sketch the contents of this section. In Lemma \ref{lemma:statistical-test-to-code} we reduce the problem of finding good channel codes 
for a finite compound channel to the problem of finding good hypothesis tests for certain quantum states generated by this channel. The existence of hypothesis tests with 
a performance sufficient for our purposes is shown in Lemma \ref{lemma:existence-of-statistical-test}. In order to establish the coding theorem for arbitrary compound 
channels, we recall some approximation results in Lemma \ref{lemma:approximation}. With these preparations, we are able to prove the direct part of the coding theorem. 
Additionally, we give a proof of the weak converse (for which we utilize the strong converse result for the maximal error criterion given in \cite{bb-compound} in 
Theorem \ref{comp_cq_direct_part}). A strong converse for coding under the average error criterion does not hold in general for compound cq-channels (for further information, 
see Remark \ref{remark10}).\\
We consider a compound channel $\mathcal{W} := \{W_t\}_{t\in T}\subset CQ(\mathbf{X},\hr)$ where T is a finite index
set. We fix an orthonormal basis $\{e_x\}_{x \in \mathbf{X}}$ in $\cc^{|\mathbf{X}|}$. For $\mathcal{W}$ and a given input probability distribution $p \in \pr(\mathbf{X})$ we define for every $t \in T$ states 
\begin{align}
 \rho_t := \sum_{x \in \mathbf{X}} p(x) \ket{e_x}\bra{e_x} \otimes W_t(x), \hspace{0.7cm} \text{and} \hspace{1.5cm}
 \hat{\sigma}_t := p \otimes \sigma_t, \label{comp_cq_rho_t}
\end{align}
on $\cc^{|\mathbf{X}|} \otimes \hr$, where $p$ and $\sigma_t$ are defined by
\begin{align}
	      p := \sum_{x \in \mathbf{X}} p(x) \ket{e_x} \bra{e_x}, \hspace{0.7cm} \text{and} \hspace{1.5cm}
       \sigma_t := \sum_{x\in \mathbf{X}} p(x) W_t(x) \label{comp_cq_sigma_t}.
\end{align}
With some abuse of notation, we use the letter $p$ for the probability distribution as well as for the according quantum state defined above. Moreover, we define for every $l \in \nn$ states
\begin{align}
\rho_l &:= \frac{1}{|T|} \sum_{t \in T} v_l\rho_t^{\otimes l}v_l^\ast \label{comp_cq_rho_l} \\
\tau_l &:= \frac{1}{|T|} \sum_{t \in T} v_l\hat{\sigma}_t^{\otimes l}v_l^\ast 
	 = p^{\otimes l} \otimes \frac{1}{|T|}\sum_{t \in T}\sigma_t^{\otimes l} \label{comp_cq_tau_l} 
\end{align}
where $v_l: (\cc^{|\mathbf{X}|} \otimes \hr)^{\otimes l} \rightarrow (\cc^{|\mathbf{X}|})^{\otimes l} 
\otimes \hr^{\otimes l}$ is the ismorphism permuting the tensor factors. The next lemma is a variant of a result
by Hayashi and Nagaoka in \cite{hayashi03}, which states that good hypothesis tests imply good message transmission codes for the average error criterion. Here it is formulated and proven for the states $\rho_l$ and $\tau_l$.
\begin{lemma}\label{lemma:statistical-test-to-code}
Let $\mathcal{W} := \{W_t\}_{t \in T} \subset CQ(\mathbf{X}, \hr)$ be a compound cq-channel with $|T| < \infty$, 
$p \in \pr(\mathbf{X})$, and $l \in \nn$. Let further $\rho_l$, $\tau_l$ be the states associated to $\mathcal{W}$,$p$
as defined in (\ref{comp_cq_rho_l}) and (\ref{comp_cq_tau_l}). If for $\lambda \in
[0,1]$, and $a > 0$ exists a projection $q_l \in 
\bo((\cc^{|\mathbf{X}|})^{\otimes l} \otimes \hr^{\otimes l})$ which fulfills the conditions
\begin{enumerate}
 \item $\tr(q_l \rho_l) \geq 1 - \lambda$ 
 \item $\tr(q_l \tau_l) \leq 2^{-la}$,
\end{enumerate}
then for any $\gamma$ with $a \geq \gamma > 0$ and $M_l := \lfloor 2^{l(a -\gamma)} \rfloor$ there is an $(l,M_l)$-code $(x_m^l, D_m^l)_{m \in [M_l]}$ 
with 
\begin{align}
 \max_{t \in T}\ \frac{1}{M_l}\sum_{m=1}^{M_l}\tr(W_t^{\otimes l}(x_m^l)(\eins_{\hr^{\otimes l}}-D_m^l)) \leq |T|(2 \lambda + 4\cdot 2^{-l \gamma})
\end{align}
\end{lemma}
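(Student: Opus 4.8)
The plan is to combine a random coding argument with the operator inequality of Hayashi and Nagaoka, using the classical--quantum structure to pull the test $q_l$ down onto the output system. First I would decompose $q_l$ along the classical register: for $x^l=(x_1,\dots,x_l)\in\mathbf X^l$ write $e_{x^l}:=e_{x_1}\otimes\cdots\otimes e_{x_l}\in(\cc^{|\mathbf X|})^{\otimes l}$ and set $E_{x^l}:=\bra{e_{x^l}}q_l\ket{e_{x^l}}\in\bo(\hr^{\otimes l})$, observing that $0\le E_{x^l}\le\eins_{\hr^{\otimes l}}$ since $0\le q_l\le\eins$. Using $v_l\rho_t^{\otimes l}v_l^\ast=\sum_{x^l}p^{\otimes l}(x^l)\ket{e_{x^l}}\bra{e_{x^l}}\otimes W_t^{\otimes l}(x^l)$ together with (\ref{comp_cq_tau_l}), the two hypotheses on $q_l$ turn into $\frac{1}{|T|}\sum_{t\in T}\sum_{x^l}p^{\otimes l}(x^l)\tr(E_{x^l}W_t^{\otimes l}(x^l))\ge 1-\lambda$ and $\frac{1}{|T|}\sum_{t\in T}\tr(Q_l\sigma_t^{\otimes l})\le 2^{-la}$, where $Q_l:=\sum_{x^l}p^{\otimes l}(x^l)E_{x^l}$ and I have used $\sum_{x^l}p^{\otimes l}(x^l)W_t^{\otimes l}(x^l)=\sigma_t^{\otimes l}$. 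Since every summand is non-negative, these inequalities yield, for \emph{each} fixed $t\in T$, the per-channel bounds $\sum_{x^l}p^{\otimes l}(x^l)\tr((\eins_{\hr^{\otimes l}}-E_{x^l})W_t^{\otimes l}(x^l))\le|T|\lambda$ and $\tr(Q_l\sigma_t^{\otimes l})\le|T|2^{-la}$ --- this is the step where the finiteness of $T$, and hence the factor $|T|$, enters the bound.

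Next I would draw codewords $X_1^l,\dots,X_{M_l}^l$ independently according to $p^{\otimes l}$ and, for a realisation $x_1^l,\dots,x_{M_l}^l$, define the decoding operators by the square-root (``pretty good'') construction $D_m:=\big(\sum_{m'=1}^{M_l}E_{x_{m'}^l}\big)^{-1/2}E_{x_m^l}\big(\sum_{m'=1}^{M_l}E_{x_{m'}^l}\big)^{-1/2}$, with the inverse taken on the support of $\sum_{m'}E_{x_{m'}^l}$; one checks that $(D_1,\dots,D_{M_l})\in\M_{M_l}(\hr^{\otimes l})$. Applying the Hayashi--Nagaoka operator inequality with $S=E_{x_m^l}$ and $T=\sum_{m'\neq m}E_{x_{m'}^l}$ gives $\eins_{\hr^{\otimes l}}-D_m\le 2(\eins_{\hr^{\otimes l}}-E_{x_m^l})+4\sum_{m'\neq m}E_{x_{m'}^l}$. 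Substituting this into the average error, taking the expectation over the codebook, and using independence of the $X_m^l$ together with $\erw[W_t^{\otimes l}(X_m^l)]=\sigma_t^{\otimes l}$, I bound, for each fixed $t$, the expected average error by $2\sum_{x^l}p^{\otimes l}(x^l)\tr((\eins_{\hr^{\otimes l}}-E_{x^l})W_t^{\otimes l}(x^l))+4(M_l-1)\tr(Q_l\sigma_t^{\otimes l})$.

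Finally I would bound $\max_{t\in T}(\cdots)$ by $\sum_{t\in T}(\cdots)$ --- legitimate because each summand is non-negative and $T$ is finite --- insert the per-channel estimates from the first paragraph, and use $M_l=\lfloor 2^{l(a-\gamma)}\rfloor\le 2^{l(a-\gamma)}$ so that $M_l2^{-la}\le 2^{-l\gamma}$; this gives $\erw\big[\max_{t\in T}\frac{1}{M_l}\sum_m\tr(W_t^{\otimes l}(X_m^l)(\eins_{\hr^{\otimes l}}-D_m))\big]\le 2|T|\lambda+4(M_l-1)|T|2^{-la}\le|T|(2\lambda+4\cdot2^{-l\gamma})$, so some realisation of the codebook, equipped with the corresponding $D_m^l:=D_m$, is the desired $(l,M_l)$-code. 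The argument is essentially bookkeeping once two ingredients are in place: the Hayashi--Nagaoka operator inequality (a short consequence of operator monotonicity, which I would cite rather than reprove) and the fact that one codebook serves all $t\in T$ simultaneously; the point that most wants care --- and that produces the $|T|$ loss --- is precisely this reduction of $\max_{t\in T}$ to a sum of per-channel expectations.
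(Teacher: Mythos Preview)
Your proof is correct and follows essentially the same route as the paper's: i.i.d.\ codewords from $p^{\otimes l}$, square-root decoders built from the $x^l$-blocks of $q_l$, the Hayashi--Nagaoka inequality, and a derandomisation step. The only difference is where the factor $|T|$ enters---the paper bounds the expected error for the averaged channel $\overline W^{l}=\frac{1}{|T|}\sum_{t}W_t^{\otimes l}$, derandomises, and then uses affinity in the channel to pass to the per-$t$ bound, whereas you bound $\erw[\max_t(\cdot)]\le\sum_t\erw[\cdot]$ before derandomising; your pinching $E_{x^l}=\bra{e_{x^l}}q_l\ket{e_{x^l}}$ is in fact slightly more careful than the paper's unargued claim that the projection $q_l$ is automatically block-diagonal.
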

The following operator inequality is a crucial ingredient in the proof of the lemma above, it was given in a more general form by Hayashi and Nagaoka in \cite{hayashi03}. 
\begin{lemma}\label{lemma:hayashi-nagaoka}
Let $a,b \in \bo(\hr)$ be operators on $\hr$ with $0 \leq a \leq 1$ and $b \geq 0$. Then
\begin{align}
\eins_\hr - (a+b)^{-\frac{1}{2}} a (a+b)^{-\frac{1}{2}} \leq 2(\eins_\hr - a) + 4b,
\end{align}
where $(\cdot)^{-1}$ denotes the generalized inverse.
\end{lemma}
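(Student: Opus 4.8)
The plan is to strip away the generalized inverse by passing to the support of $a+b$, turn the statement into a polynomial operator inequality, and close that by completing squares.

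\emph{Step 1: reduction to the support of $c:=a+b$.} Let $P$ be the orthogonal projection onto $\ran(c)$. Since $a,b\ge 0$ and $a+b=c$, any $\psi\in\ker c$ satisfies $\langle\psi,a\psi\rangle+\langle\psi,b\psi\rangle=0$, hence $a\psi=b\psi=0$; thus $\ker c\subseteq\ker a\cap\ker b$, so $PaP=a$, $PbP=b$, and the generalized inverse obeys $c^{-1/2}c\,c^{-1/2}=P$. Consequently
\begin{align*}
\eins-c^{-1/2}ac^{-1/2}=(\eins-P)+c^{-1/2}(c-a)c^{-1/2}=(\eins-P)+c^{-1/2}bc^{-1/2},
\end{align*}
whereas $2(\eins-a)+4b=2(\eins-P)+2(P-a)+4b$. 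The contribution $\eins-P\ge 0$ is thereby absorbed, so it is enough to prove on $\ran(P)$ — where $c$ is strictly positive — that $c^{-1/2}bc^{-1/2}\le 2(P-a)+4b$.

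\emph{Step 2: clearing the inverse.} On $\ran(P)$ conjugation by the invertible operator $c^{1/2}$ is an order isomorphism, so the last inequality is equivalent to $b\le 2\bigl(c-c^{1/2}ac^{1/2}\bigr)+4\,c^{1/2}bc^{1/2}$. Substituting $c^{1/2}bc^{1/2}=c^{2}-c^{1/2}ac^{1/2}$ and $b=c-a$ and rearranging, this is in turn equivalent to
\begin{align*}
6\,c^{1/2}ac^{1/2}\ \le\ a+c+4c^{2},
\end{align*}
now to be shown for all operators with $0\le a\le\eins$ and $0\le a\le c$ (the second relation being just $b\ge 0$). This mild extra hypothesis $a\le c$ cannot be dropped: for $a=\eins$, $c=\tfrac12\eins$ the inequality is false.

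\emph{Step 3: completing squares.} The estimate $6\,c^{1/2}ac^{1/2}\le a+c+4c^{2}$ I would derive by adding positive operators. At hand are $c^{1/2}(c-a)c^{1/2}=c^{2}-c^{1/2}ac^{1/2}\ge 0$ (from $a\le c$) and $c^{1/2}(\eins-a)c^{1/2}=c-c^{1/2}ac^{1/2}\ge 0$ (from $a\le\eins$), as well as $a^{2}\le a$ and, as a consequence of $a\le c$, the relation $c^{1/2}ac^{-1}ac^{1/2}\le c^{1/2}ac^{1/2}$; operator monotonicity of $t\mapsto\sqrt{t}$ enters at one point. The delicate point is that the two obvious squares alone only yield $6\,c^{1/2}ac^{1/2}\le 2c+4c^{2}$, which is strictly weaker than the claim (since $a+c+4c^{2}\le 2c+4c^{2}$): one must use $a\le\eins$ and $a\le c$ in a coupled fashion, not term by term. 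Equivalently, writing $D:=c^{-1/2}b^{1/2}$ so that $c^{-1/2}bc^{-1/2}=DD^{*}$, it suffices to find a splitting $D=E+F$ with $EE^{*}\le\eins-a$ and $FF^{*}\le 2b$ and invoke $DD^{*}=(E+F)(E+F)^{*}\le 2EE^{*}+2FF^{*}$; exhibiting this splitting (equivalently, performing the above completion of squares) is the main obstacle, while the generalized-inverse bookkeeping on $\ker c$ is routine.
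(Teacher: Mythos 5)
Your Steps 1 and 2 are correct: the kernel bookkeeping ($\ker(a+b)\subseteq\ker a\cap\ker b$, $c^{-1/2}cc^{-1/2}=P$), the absorption of $\eins-P$, and the observation that conjugation by $c^{1/2}$ on $\ran(P)$ is an order isomorphism all check out, and the resulting reformulation $6\,c^{1/2}ac^{1/2}\le a+c+4c^2$ under $0\le a\le\eins$, $a\le c$ is indeed equivalent to the lemma. But Step 3 is not a proof: you list candidate positive operators, concede that the obvious combinations only yield the strictly weaker bound $6\,c^{1/2}ac^{1/2}\le 2c+4c^2$, and then state that exhibiting the needed splitting ``is the main obstacle.'' After your reductions, that splitting (equivalently, the coupled use of $a\le\eins$ and $a\le c$) \emph{is} the content of the lemma, so the core of the statement is left unproven; this is a genuine gap, not a routine detail. (For comparison: the paper itself gives no argument either, it simply cites Lemma 2 of Hayashi--Nagaoka \cite{hayashi03}.) A minor additional slip: as stated, your sufficient condition $EE^*\le\eins-a$, $FF^*\le 2b$ does not account for the $\eins-P$ term you stripped off in Step 1; you would need $EE^*\le P-a$ there, or to re-absorb $\eins-P$ explicitly.

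The gap can in fact be closed in your own splitting language, but not with the rigid term-by-term requirements you impose. Take $D=c^{-1/2}b^{1/2}$, $F=b^{1/2}$ and $E=(c^{-1/2}-\eins)b^{1/2}$, so that $D=E+F$ and $DD^*\le 2EE^*+2FF^*=2(c^{-1/2}-\eins)b(c^{-1/2}-\eins)+2b$. Since $b\le c$, one has $(c^{-1/2}-\eins)b(c^{-1/2}-\eins)\le(c^{-1/2}-\eins)c(c^{-1/2}-\eins)=P-2c^{1/2}+c$, hence $\eins-c^{-1/2}ac^{-1/2}=(\eins-P)+DD^*\le\eins+P-4c^{1/2}+2c+2b$, and the difference between $2(\eins-a)+4b$ and this bound equals $(\eins-P)+4(c^{1/2}-a)$, which is nonnegative because $c=a+b\ge a\ge a^2$ (using $a\le\eins$) and operator monotonicity of $t\mapsto\sqrt{t}$ gives $c^{1/2}\ge a$. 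Note that here $FF^*=b$ but $EE^*$ is \emph{not} bounded by $\eins-a$; the hypotheses $a\le\eins$ and $a\le c$ do enter jointly, as you anticipated, but through the operator-monotonicity step $c^{1/2}\ge a$ in the final accounting rather than through a completion of squares of the type you were seeking.
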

\begin{proof}
See Lemma 2 in \cite{hayashi03}.\qed
\end{proof}
\begin{proof}[of Lemma \ref{lemma:statistical-test-to-code}] 
Let $l \in \nn$, $q_l$ a projection such that the assumptions of 
the lemma are fulfilled, and $ \gamma$ a number with $0<\gamma\leq a$. According to the assumptions, $q_l$ takes the form
\begin{align}
 q_l = \sum_{x^l \in \mathbf{X}^l} \ket{e_{x^l}}\bra{e_{x^l}} \otimes q_{x^l},
\end{align}
where $q_{x^l} \in \bo(\hr^{\otimes l})$ is a projection for every $x^l \in \mathbf{X}^l$. Set 
$M_l := \lfloor 2^{l(a-\gamma)} \rfloor$, and let $U_1,...,U_{M_l}$ be i.i.d. random variables with values in 
$\mathbf{X}^l$, each distributed according to the $l$-fold product $p^{\otimes l}$ of the given distribution $p$. 
We define a random operator
\begin{align}
 D_m := \left(\sum_{n=1}^{M_l} q_{U_n}\right)^{-\frac{1}{2}} q_{U_m} 
	\left(\sum_{n=1}^{M_l} q_{U_n} \right)^{-\frac{1}{2}} \label{comp_cq_dec_op}
\end{align}
for every $m \in [M_l]$ (we omit the superscript $l$ here), where again generalized inverses are taken. The particular form of 
the decoding operators $D_1,...,D_{M_l}$ in eq. (\ref{comp_cq_dec_op}) guarantees, that 
\begin{align*}
 \sum_{m=1}^{M_l} D_m \leq \eins_{\hr^{\otimes l}}
\end{align*}
holds for every outcome of $U_1,...,U_{M_l}$, and therefore $(U_m, D_m)_{m \in [M_l]}$ is a random code of size $M_l$.
The remaining task is to bound the expectation value of the average error of this random code. We introduce an 
abbreviation for the average of the channels in $\mathcal{W}$ by
\begin{align*}
 \overline{W}^{l}(\cdot) := \frac{1}{T}\sum_{t=1}^T W_t^{\otimes l}(\cdot).
\end{align*}
The error probability of the random code is bounded as follows. By virtue of Lemma \ref{lemma:hayashi-nagaoka}, 
\begin{alignat}{3}
 \erw\left[\tr\left(\overline{W}^{l}(U_m)(\eins_{\hr^{\otimes l}}-D_m)\right)\right] &\, \leq \, 
	    && 2 \ \erw\left[\tr\left(\overline{W}^{l}(U_m)(\eins_{\hr^{\otimes l}} - q_{U_m})\right)\right] \nonumber \\
					  &  && + 4 \cdot \sum_{\substack{m \in [M_l]: \\ n \neq m}}
						   \erw\left[\tr\left(\overline{W}^{l}(U_m)q_{U_n}\right)\right]& 
    \label{comp_cq_erw_rn_1}
\end{alignat}
holds. The calculation of the expectation values on the r.h.s. of the above equation is straightforward, we obtain 
for every $m\in [M_l]$
\begin{align}
 \erw[\tr(\overline{W}^{l}(U_m)(\eins_{\hr^{\otimes l}} - q_{U_m}))] 
	&= \tr(\rho_l(\eins_{\hr^{\otimes l}}- q_l)), \label{comp_cq_erw_rn_1_1}
\end{align}
and, for $n \neq m$,
\begin{align}
 \erw\left[\tr\left(\overline{W}^{l}(U_m)q_{U_n}\right)\right] = \tr(\tau_l q_l) \label{comp_cq_erw_rn_1_2}.
\end{align}
Together with the assumptions of the lemma, eqns. (\ref{comp_cq_erw_rn_1_1}) and (\ref{comp_cq_erw_rn_1_2}) imply
\begin{align*}
\erw\left[\tr\left(\frac{1}{|T|}\sum_{t \in T} W^{\otimes l}_t(U_m)(\eins_{\hr^{\otimes l}} -D_m)\right)\right] 
    &\leq 2 \lambda + 4 \cdot M_l \cdot 2^{-la}\\
    &\leq 2\lambda + 4 \cdot 2^{-l\gamma}
\end{align*}
Because this error measure is an affine function of the channel we conclude, that there exists a cq-code 
$(x_m^l, D_m)_{m=1}^{M_l}$ for $\mathcal{W}$ with average error bounded by
\begin{align}
 \frac{1}{M_l}\sum_{m=1}^{M_l}\tr(W_t^{\otimes l}(x_m^l)(\eins_{\hr^{\otimes l}}-D_m)) \leq |T|(2 \lambda + 4 \cdot 2^{-l\gamma})
\end{align}
for every $t \in T$, which is what we aimed to prove.\qed
\end{proof}
The next two lemmata contain facts which are important for later considerations. The first lemma presents a bound on 
the cardinality of the spectrum of operators on a tensor product space which are invariant under permutations of the 
tensor factors. The group $S_l$ of permutations on $[l]$ is, on $\hr^{\otimes l}$, represented by defining (with slight abuse of notation) 
for each $\sigma \in S_l$ the unitary operator $\sigma \in \bo(\hr^{\otimes l})$ 
\begin{align}
 \sigma(v_1 \otimes ... \otimes v_l) := v_{\sigma^{-1}(1)} \otimes ... \otimes v_{\sigma^{-1}(l)}. \label{tensor_prod_rep}
\end{align}
for all product vectors $v_1 \otimes ... \otimes v_l \in \cc^l$ and linear extension to the whole space $\cc^{\otimes l}$. 
\begin{lemma}\label{fact-1} Let $Y\in\mathcal B(\hr^{\otimes l})$ ($d:=\dim\hr\geq2$) satisfy $\sigma Y=Y\sigma$ for every permutation $\sigma\in S_l$. Then
\begin{align}|\spec(Y)|\leq(l+1)^{d^2}.\end{align}
\end{lemma}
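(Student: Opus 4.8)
The plan is to put $Y$ into block‑diagonal form by means of Schur--Weyl duality and then to count. Recall that $S_l$ acts on $\hr^{\otimes l}$ through the representation (\ref{tensor_prod_rep}), while the unitary group of $\hr$ acts through $u\mapsto u^{\otimes l}$; these two actions commute, and Schur--Weyl duality furnishes a unitary identification
\[
\hr^{\otimes l}\ \cong\ \bigoplus_{\lambda}\U_\lambda\otimes\V_\lambda ,
\]
the sum running over all Young diagrams $\lambda$ with $l$ boxes and at most $d$ rows; here $\U_\lambda$ carries the irreducible representation of the unitary group of highest weight $\lambda$, $\V_\lambda$ is an irreducible $S_l$‑module, and under this identification each $\sigma\in S_l$ acts as $\bigoplus_\lambda\idn_{\U_\lambda}\otimes\pi_\lambda(\sigma)$. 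By the double commutant theorem the commutant of $\{\sigma:\sigma\in S_l\}$ in $\bo(\hr^{\otimes l})$ is exactly $\bigoplus_\lambda\bo(\U_\lambda)\otimes\idn_{\V_\lambda}$. Since $\sigma Y=Y\sigma$ for every $\sigma\in S_l$ by hypothesis, $Y$ lies in this commutant, so
\[
Y\ \cong\ \bigoplus_{\lambda} Y_\lambda\otimes\idn_{\V_\lambda},\qquad Y_\lambda\in\bo(\U_\lambda).
\]

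From this block form one reads off $\spec(Y)=\bigcup_\lambda\spec(Y_\lambda)$, and since $Y_\lambda$ is an operator on the $(\dim\U_\lambda)$‑dimensional space $\U_\lambda$ it has at most $\dim\U_\lambda$ distinct eigenvalues; hence $|\spec(Y)|\le\sum_\lambda\dim\U_\lambda$. It then remains to bound the number of diagrams and each dimension. A Young diagram with $l$ boxes and at most $d$ rows is determined by its row lengths $\lambda_1\ge\cdots\ge\lambda_d\ge 0$, each lying in $\{0,1,\ldots,l\}$, so there are at most $(l+1)^{d}$ of them; and for fixed $\lambda$ Weyl's dimension formula gives
\[
\dim\U_\lambda=\prod_{1\le i<j\le d}\frac{\lambda_i-\lambda_j+j-i}{j-i}=\prod_{1\le i<j\le d}\Bigl(1+\frac{\lambda_i-\lambda_j}{j-i}\Bigr)\le(l+1)^{\binom{d}{2}},
\]
because $0\le\lambda_i-\lambda_j\le\lambda_1\le l$ and $j-i\ge 1$ whenever $i<j$. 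Combining the two estimates and using $d+\binom{d}{2}=\binom{d+1}{2}\le d^2$ (valid for $d\ge 1$) yields
\[
|\spec(Y)|\ \le\ (l+1)^{d}\,(l+1)^{\binom{d}{2}}=(l+1)^{\binom{d+1}{2}}\ \le\ (l+1)^{d^2},
\]
which is the claim; the degenerate case $l=0$ is immediate since then $\hr^{\otimes l}=\cc$.

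The single ingredient above that is not elementary is the block decomposition of the first paragraph, that is, the identification of the commutant of the $S_l$‑action on $\hr^{\otimes l}$ with a direct sum of full matrix algebras indexed by Young diagrams; this is exactly Schur--Weyl duality together with the double commutant theorem, which I would quote directly. Everything afterwards is bookkeeping, the only quantitative input being the elementary bound on the individual factors of Weyl's dimension formula. If one preferred to avoid the representation theory of $S_l$, one could instead invoke that the commutant of $S_l$ on $\hr^{\otimes l}$ is the linear span of $\{u^{\otimes l}:u\text{ unitary on }\hr\}$, but extracting the eigenvalue count from that description is less transparent, so the Schur--Weyl route is the one I would follow.
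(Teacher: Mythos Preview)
Your proof is correct and follows essentially the same route as the paper's: block-diagonalize $Y$ via the $S_l$-isotypic decomposition of $\hr^{\otimes l}$ (the paper builds this by hand from Schur's lemma, you invoke Schur--Weyl duality as a package), then bound $|\spec(Y)|$ by the number of Young diagrams times the maximal multiplicity. The only notable difference is that the paper cites \cite{christandl} for the multiplicity bound $m_i\le(l+1)^{d^2/2}$, whereas you derive it directly from Weyl's dimension formula, yielding the slightly sharper intermediate exponent $\binom{d+1}{2}\le d/2+d^2/2$.
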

\begin{proof}
It is clear that, under the action of $S_l$, $\hr^{\otimes l}$ decomposes
into a finite direct sum $\hr^{\otimes
l}=\oplus_{i=1}^M\oplus_{j=1}^{m_i}\hr_{i,j}$, where the $\hr_{i,j}$ are
irreducible subspaces of $S_l$, $m_i\in\nn$ their multiplicity and
$M\in\nn$. Moreover, $\hr_{i,j}\simeq\hr_{i,k}$ f.a. $i\in [M]$,
$j,k\in[m_i]$ and to every such choice of
indices there exists a linear operator
$Q_{i,j,k}:\hr_{i,k}\mapsto\hr_{i,j}$ such that $\sigma
Q_{i,j,k}=Q_{i,j,k}\sigma$ f.a. $\sigma\in S_l$.\\
Let us write $Y=\sum_{i,j}Y_{i,m,j,n}$, where
$Y_{i,m,j,n}:\hr_{j,n}\mapsto\hr_{i,m}$. Then according to Schur's lemma,
$Y_{i,m,j,n}=0$, ($i\neq j$) and $Y_{i,m,i,n}=c_{i,m,n}Q_{i,m,n}$ for all
valid
choices of indices and unique complex numbers $c_{i,m,n}\in\mathbb C$.\\
Thus, defining the self-adjoint operators
$Y_i:=\sum_{m,n=1}^{m_i}c_{i,m,n}Q_{i,m,n}$, we see that
\begin{align}
Y=\sum_{i=1}^MY_i
\end{align}
holds. Obviously, $Y_{i,m,i,m}=\eins_{\hr_{i,m}}$. Thus, with an
appropriate choice of bases in every single one of the $\hr_{i,m}$ and
defining the matrices $C_i$ by $(C_i)_{mn}:=c_{i,m,n}$,
we can write a matrix representation $\tilde Y_i$ of $Y_i$ as $\tilde
Y_i=C_i\otimes\eins_{\mathbb C^{\dim(\hr_{i,1})}}$.\\
Clearly then, each of the $Y_i$ can have no more than $m_i$ different
eigenvalues. Since $\supp(Y_i)\perp\supp(Y_j)$ ($i\neq j$), we get
\begin{align}
|\spec(Y)|\leq\sum_{i=1}^Mm_i.
\end{align}
Now, taking a look at \cite{christandl}, equation (1.22), we see that
$m_i\leq(l+1)^{d^2/2}$ holds. The number $M$ is the number of different
Young tableaux occuring in the representation
of $S_l$ on $\hr^{\otimes l}$ and obeys the bound $M\leq N_T([d]^l)$,
where $N_T([d]^l)$ is the number of different types on $[d]^l$, that
itself obeys
$N_T([d]^l)\leq(l+1)^d$ (Lemma 2.2 in \cite{csiszar-koerner}). For
$d\geq2$ we thus have
\begin{align}
|\spec(Y)|\leq\sum_{i=1}^Mm_i\leq(l+1)^{d^2/2}(l+1)^d\leq(l+1)^{d^2}.
\end{align}\qed
\end{proof}
Lemma \ref{lemma:existence-of-statistical-test} provides the result which will, together with Lemma \ref{lemma:statistical-test-to-code}, imply the existence of optimal codes for $\mathcal{W}$. We give a proof which is based on an idea of Ogawa and Hayashi which originally appeared in \cite{ogawa01}. An important ingredience of their proof is the operator inequality stated in the following lemma.
\begin{lemma}[\cite{hayashi01}]\label{fact-2}
Let $\chi$ be a state on on a Hilbert space $\kr$, and $\mathcal{M}:=\{P_k\}_{k=1}^K \subset \bo(\kr)$ be a collection of projections on $\kr$ with $\sum_{k=1}^K P_k= \eins_{\kr}$.
Then the operator inequality 
\begin{align}
 \chi \leq K \cdot \sum_{k=1}^K P_k \chi P_k
\end{align}
holds.
\end{lemma}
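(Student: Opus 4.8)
\emph{Proof plan.} The plan is to deduce the claimed operator inequality from the elementary estimate
\begin{align}
\Bigl(\sum_{k=1}^{K} X_k\Bigr)^{\!\ast}\Bigl(\sum_{k=1}^{K} X_k\Bigr)\ \leq\ K\sum_{k=1}^{K} X_k^{\ast} X_k,
\end{align}
which holds for an arbitrary finite family $X_1,\dots,X_K\in\bo(\kr)$, applied with the particular choice $X_k:=\chi^{1/2}P_k$, where $\chi^{1/2}$ is the unique positive square root of $\chi$.

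First I would record the auxiliary inequality above. It is the operator form of convexity of $t\mapsto\|t\|^2$: for every $v\in\kr$ one has $\bigl\|\tfrac1K\sum_k X_k v\bigr\|^2\leq\tfrac1K\sum_k\|X_k v\|^2$, and multiplying through by $K^2$ and rewriting both sides as expectation values $\langle v,\cdot\,v\rangle$ gives, since $v$ was arbitrary, precisely the displayed operator inequality. (Equivalently, it follows by expanding $0\leq\sum_{1\leq j<k\leq K}(X_j-X_k)^{\ast}(X_j-X_k)$.)

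Next I would substitute $X_k=\chi^{1/2}P_k$ and compute the two sides. Using $\sum_{k=1}^{K}P_k=\eins_{\kr}$ and self-adjointness of $\chi^{1/2}$ one gets $\sum_{k=1}^{K}X_k=\chi^{1/2}\sum_{k=1}^{K}P_k=\chi^{1/2}$, hence $\bigl(\sum_k X_k\bigr)^{\ast}\bigl(\sum_k X_k\bigr)=\chi^{1/2}\chi^{1/2}=\chi$; and, using $P_k^{\ast}=P_k=P_k^2$, one gets $X_k^{\ast}X_k=P_k\chi^{1/2}\chi^{1/2}P_k=P_k\chi P_k$. Feeding these two identities into the auxiliary inequality yields exactly $\chi\leq K\sum_{k=1}^{K}P_k\chi P_k$, which is the assertion.

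I do not expect a genuine obstacle here; the only point that needs to be "seen" is that one should symmetrise around $\chi^{1/2}$ rather than around $\chi$ itself — the naive choices $X_k=P_k$ or $X_k=P_k\chi$ do not close the argument — and that the projection property $P_k^2=P_k$ enters exactly once, namely in turning $X_k^{\ast}X_k$ into $P_k\chi P_k$. Everything else is bookkeeping.
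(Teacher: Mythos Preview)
Your proof is correct. The paper does not supply its own proof of this lemma but simply cites \cite{hayashi01}, so there is nothing to compare against; your Cauchy--Schwarz / convexity argument via the square root $\chi^{1/2}$ is the standard one and works cleanly.

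One small remark: contrary to what you write at the end, the idempotence $P_k^2=P_k$ is \emph{not} used anywhere in your computation --- the identity $X_k^\ast X_k = P_k^\ast\chi P_k = P_k\chi P_k$ needs only self-adjointness of $P_k$. Thus your argument in fact proves the stronger statement where the $P_k$ are merely self-adjoint operators summing to $\eins_\kr$, not necessarily projections.
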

\begin{lemma}\label{lemma:existence-of-statistical-test}
For every $\delta > 0$, finite compound cq-channel $\mathcal{W} := \{W_t\}_{t \in T} \subset CQ(\mathbf{X},\hr)$ and $p \in \pr(\mathbf{X})$ there exists a constant $\tilde{c}$, such that for every sufficiently large $l \in \nn$ there exists a projection $q_{l,\delta} \in \bo((\cc^{|\mathbf{X}|})^{\otimes l}\otimes \hr^{\otimes l})$ which fulfills
\begin{enumerate}
 \item $\tr(q_{l,\delta}\rho_l) \geq 1 - |T| \cdot 2^{-l\tilde{c}}$, \text{and}
 \item $\tr(q_{l,\delta}\tau_l) \leq 2^{-l(a - \delta)}$
\end{enumerate}
where $\rho_l, \tau_l$ are the states belonging to $\mathcal{W},p$ according to (\ref{comp_cq_rho_l}) and (\ref{comp_cq_tau_l}), and $a$ is defined by $a := \min_{t \in [T]} D(\rho_t|| p \otimes \sigma_t)$.
\end{lemma}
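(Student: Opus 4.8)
We may assume $0<\delta<a$, for otherwise the second condition is vacuous and $q_{l,\delta}:=\eins$ works (the first condition then being trivial). Recall that $\hat\sigma_t=p\otimes\sigma_t$ with $\sigma_t=\overline{W_t}$, so $a=\min_{t\in T}\chi(p,W_t)$. The plan is to take for $q_{l,\delta}$ a single Holevo--Helstrom test discriminating $\rho_l$ from the rescaled \emph{averaged} alternative $2^{l(a-\delta)}\tau_l$, and to reduce the two resulting error quantities to Chernoff-type bounds for the individual pairs $(\rho_t,\hat\sigma_{t'})$. The structural input that makes a single \emph{universal} test possible is the centroid identity $\sum_x p(x)D(W_t(x)\|\nu)=\chi(p,W_t)+D(\sigma_t\|\nu)$, which gives, for \emph{all} $t,t'\in T$,
\[
 D(\rho_t\,\|\,\hat\sigma_{t'})=\sum_{x\in\mathbf X}p(x)\,D\big(W_t(x)\,\|\,\sigma_{t'}\big)\ \geq\ \chi(p,W_t)\ \geq\ a .
\]

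Put $q_{l,\delta}:=\{\rho_l-2^{l(a-\delta)}\tau_l\geq0\}$. Since $\rho_l,\tau_l$ are block diagonal with respect to $\{\ket{e_{x^l}}\}_{x^l\in\mathbf X^l}$, so is $q_{l,\delta}$, which is exactly the form Lemma~\ref{lemma:statistical-test-to-code} requires. Compressing the defining operator inequality of $q_{l,\delta}$ with $q_{l,\delta}$ yields $2^{l(a-\delta)}\tr(q_{l,\delta}\tau_l)\leq\tr(q_{l,\delta}\rho_l)\leq1$, which is condition 2. For condition 1 we must show that $\tr\big(\rho_l\{\rho_l<2^{l(a-\delta)}\tau_l\}\big)=1-\tr(q_{l,\delta}\rho_l)$ is exponentially small. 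By the Audenaert--Nussbaum--Szko\l{}a--Verstraete inequality (valid for arbitrary positive operators and every $s\in[0,1]$) this quantity is at most $2^{l(a-\delta)(1-s)}\tr(\rho_l^{\,s}\tau_l^{\,1-s})$. Using operator subadditivity $(A+B)^{s}\leq A^{s}+B^{s}$ on $[0,1]$ together with $\rho_l\leq\sum_t v_l\rho_t^{\otimes l}v_l^{\ast}$ and $\tau_l\leq\sum_{t'}v_l\hat\sigma_{t'}^{\otimes l}v_l^{\ast}$ (as unnormalised operators; $v_l$ is a unitary rearrangement, so powers pass through it), and $\tr(XY)\leq\tr(X'Y')$ for $0\leq X\leq X'$, $0\leq Y\leq Y'$, we obtain
\[
 \tr\big(\rho_l^{\,s}\tau_l^{\,1-s}\big)\ \leq\ \sum_{t,t'\in T}\tr\big((\rho_t^{\otimes l})^{s}(\hat\sigma_{t'}^{\otimes l})^{1-s}\big)\ =\ \sum_{t,t'\in T}\Big(\tr\big(\rho_t^{\,s}\hat\sigma_{t'}^{\,1-s}\big)\Big)^{l}.
\]

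It remains to choose $s=1-\eta$ well. For each pair set $\phi_{t,t'}(s):=\log\tr(\rho_t^{\,s}\hat\sigma_{t'}^{\,1-s})$; then $\phi_{t,t'}(1)=\log\tr(\rho_t\Pi_{\hat\sigma_{t'}})\leq0$, and whenever $\supp\rho_t\subseteq\supp\hat\sigma_{t'}$ one has $\phi_{t,t'}(1)=0$, $\phi_{t,t'}$ is $C^{2}$ near $1$ with $\phi_{t,t'}'(1)=D(\rho_t\|\hat\sigma_{t'})\geq a$ and bounded second derivative. Hence $\eta\mapsto(a-\delta)\eta+\max_{t,t'}\phi_{t,t'}(1-\eta)$ tends to $\max_{t,t'}\phi_{t,t'}(1)\leq0$ as $\eta\downarrow0$ and, for those pairs where this limit is $0$, has left-derivative $-(D(\rho_t\|\hat\sigma_{t'})-(a-\delta))\leq-\delta<0$ at $\eta=0$; so it is $\leq-\kappa<0$ for some fixed small $\eta\in(0,1]$ (uniformly over the finitely many pairs). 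With this $s=1-\eta$ the double sum is at most $|T|^{2}2^{-l\kappa}$, whence $\tr\big(\rho_l\{\rho_l<2^{l(a-\delta)}\tau_l\}\big)\leq|T|^{2}2^{-l\kappa}\leq|T|\,2^{-l\tilde c}$ for all large $l$, with e.g.\ $\tilde c:=\kappa/2$. This establishes condition 1.

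The only genuine difficulty is the universality: one projection must be rejected by \emph{every} $\hat\sigma_t$ with exponent $\gtrsim a$ yet be accepted by the mixture $\rho_l$, which forces the alternative in the test to be the average $\tau_l$; the Type I estimate then inevitably produces the off-diagonal cross terms $\tr((\rho_t^{\otimes l})^{s}(\hat\sigma_{t'}^{\otimes l})^{1-s})$ with $t\neq t'$, and these decay only thanks to the centroid inequality above. (The paper instead follows the Ogawa--Hayashi route, pinching with respect to the permutation invariant operator $\tau_l$ to reduce to a commuting discrimination and invoking Lemma~\ref{fact-1} to bound the number of distinct eigenvalues of $\tau_l$ by $(l+1)^{(|\mathbf X|\dim\hr)^{2}}$ and Lemma~\ref{fact-2} for the associated pinching inequality; the ANSV inequality used above is precisely what lets one sidestep that machinery.)
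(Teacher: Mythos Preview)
Your overall strategy---taking the projection $q_{l,\delta}=\{\rho_l\geq 2^{l(a-\delta)}\tau_l\}$, verifying condition~2 by compression, and attacking condition~1 through the Audenaert--Nussbaum--Szko\l{}a inequality---is sound up to the point where you try to pass from $\tr(\rho_l^{\,s}\tau_l^{\,1-s})$ to the double sum $\sum_{t,t'}\tr\big((\rho_t^{\otimes l})^{s}(\hat\sigma_{t'}^{\otimes l})^{1-s}\big)$. The step you invoke as ``operator subadditivity $(A+B)^{s}\leq A^{s}+B^{s}$ on $[0,1]$'' is \emph{false} as an operator inequality for non-commuting $A,B\geq0$ and $s\in(0,1)$. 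A two-dimensional counterexample: let $A=|e_0\rangle\langle e_0|$ and $B=|\psi_+\rangle\langle\psi_+|$ with $\psi_+=\tfrac{1}{\sqrt2}(e_0+e_1)$. Since both are projections, $A^{s}+B^{s}=A+B$ for every $s>0$; but $A+B$ has an eigenvalue $\lambda_-=1-\tfrac{1}{\sqrt2}<1$, and on the corresponding eigenvector one finds $\langle v_-,((A+B)^{s}-(A+B))v_-\rangle=\lambda_-^{s}-\lambda_->0$. Hence $(A+B)^{s}\not\leq A^{s}+B^{s}$ for any $s\in(0,1)$, which is precisely the range you need. The \emph{trace} subadditivity $\tr(A+B)^{s}\leq\tr A^{s}+\tr B^{s}$ does hold, but you require the operator ordering in order to combine it with the (correct) monotonicity $\tr(XY)\leq\tr(X'Y')$. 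Nor is there a cheap repair: by Lieb's theorem the functional $(A,B)\mapsto\tr(A^{s}B^{1-s})$ is jointly \emph{concave} for $s\in[0,1]$, so replacing the averages $\rho_l,\tau_l$ by their constituents pushes the bound in the wrong direction.

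This is exactly why the paper does not take the direct route. The Ogawa--Hayashi pinching (Lemma~\ref{fact-2}) forces commutation with $\tau_l$ at the price of the polynomial factor $|\spec(\tau_l)|$ bounded via Lemma~\ref{fact-1}; once in the commuting world the analysis reduces to a one-parameter derivative estimate on $\psi_{l,\varepsilon}$, and the mixture over $t$ is handled not by any subadditivity but by the almost-convexity bound $S(\rho_l)\leq\tfrac{1}{|T|}\sum_{t}lS(\rho_t)+\log|T|$ inside the lower bound on $D(\rho_{l,\varepsilon}\|\tau_l)$. So the pinching is not an avoidable technicality that ANSV lets one sidestep; it (or something of comparable strength) is what allows decoupling the mixture $\rho_l$ from the mixture $\tau_l$ without the false inequality. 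Your centroid identity $D(\rho_t\|\hat\sigma_{t'})=\chi(p,W_t)+D(\sigma_t\|\sigma_{t'})\geq a$ is correct and would make the remainder of the argument go through \emph{if} the decoupling step were valid; as written, however, the proof has a genuine gap at that step.
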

\begin{proof}
Let $\delta > 0$ be fixed, for $l \in \nn$, we have $\ran(\rho_l) \subseteq \ran(\tau_l) := \hr_l$, which allows 
us to restrict ourselves to $\hr_l$, where $\tau_l$ is invertible. For every $\varepsilon \in (0,1)$, we define a regularized version $\rho_{l, \varepsilon}$ to $\rho_l$ by 
\begin{align}
 \rho_{l,\varepsilon} := (1-\varepsilon) \rho_{l} + \varepsilon \tau_l. \label{rho_l_regularization}
\end{align}
These operators are invertible on $\hr_l$ and approximate $\rho_l$, i.e.
\begin{align}
 \|\rho_{l,\varepsilon} - \rho_{l} \|_1 \leq 2 \varepsilon. \label{fannes_abstand}
\end{align}
holds for every $\epsilon > 0$. We also define an operator 
\begin{align}
 \overline{\rho}_{l,\varepsilon} := \sum_{\lambda \in \spec(\tau_l) \setminus \{0\}} E_\lambda \rho_{l,\varepsilon} E_\lambda,
\end{align}
which is the pinching of $\rho_{l,\varepsilon}$ to the eigenspaces of $\tau_l$ (here $E_\lambda$ is the projection which 
projects onto the eigenspace belonging to the eigenvalue $\lambda$ for every $\lambda \in \spec(\tau_l)$). This definition
guarantees 
\begin{align}
 \tau_l\overline{\rho}_{l,\varepsilon} = \overline{\rho}_{l,\varepsilon}\tau_l. \label{comp_cq_hyp_commute}
\end{align}
With $a$ as assumed in the lemma, we define the operator
\begin{align}
 T_\varepsilon:= \overline{\rho}_{l,\varepsilon} - 2^{l(a -\delta)} \tau_l \label{te_epsilon}
\end{align}
with spectral decomposition 
\begin{align}
 T_{\varepsilon} = \sum_{\mu \in \spec(T_{\varepsilon})} \mu P_\mu.
\end{align}
The projection $q_{l,\delta}$ onto the nonnegative part of $T_{\varepsilon}$, defined by
\begin{align}
 q_{l,\delta} := \sum_{\mu \in \spec(T_{\varepsilon}): \mu \geq 0} P_\mu. \label{comp_cq_hyp_proj}
\end{align}
will now be shown to suffice the bounds stated in the lemma. Clearly, $q_{l,\delta}T_{\varepsilon}q_{l,\delta}$ is a positive semidefinite operator, therefore, with (\ref{te_epsilon}) 
the inequality
\begin{align}
 q_{l,\delta} \tau_l q_{l,\delta} \leq 2^{-l(a - \delta)} q_{l,\delta} \overline{\rho}_{l,\varepsilon} q_{l,\delta}. 
\label{comp_cq_hyp_proj_ineq}
\end{align}
is valid. Taking traces in (\ref{comp_cq_hyp_proj_ineq}) yields
\begin{align}
 \tr(q_{l,\delta}\tau_l) &\leq 2^{-l(a - \delta)} \tr(q_{l,\delta}\overline{\rho}_{l,\varepsilon}) \\
			 &\leq 2^{-l(a - \delta)}
\end{align}
which shows, that $q_{l,\delta}$ fulfills the second bound in the lemma. 
We shall now prove, that $q_{l,\delta}$ for $l$ large enough actually also suffices the first one. To this end we derive an upper bound 
on $\tr((\eins - q_{l,\delta}) \rho_{l,\varepsilon})$ for any given $\varepsilon>0$, which implies (together with 
(\ref{fannes_abstand})) a bound on $\tr((\eins - q_{l,\delta})\rho_{l})$. In fact it is sufficient to find an upper 
bound on $\tr((\eins - q_{l,\delta})\overline{\rho}_{l,\varepsilon})$, which can be seen as follows. Because 
$\overline{\rho}_{l,\varepsilon}$ and $\tau_l$ commute by construction (see eq. (\ref{comp_cq_hyp_commute})), 
$T_{\varepsilon}$ and $\tau_l$ commute as well. This in turn implies that $q_{l,\delta}$ commutes with the operators 
$E_1,...,E_{|\spec(\tau_l)|}$ in the spectral decomposition of $\tau_l$ which eventually ensures us, that
\begin{align}
 \tr((\eins_{\hr^{\otimes l}} - q_{l,\delta})\overline{\rho}_{l,\varepsilon}) = \tr((\eins_{\hr^{\otimes l}} - q_{l,\delta})\rho_{l,\varepsilon}) 
	\label{comp_cq_hyp_wbwob}
\end{align}
holds. For an arbitrary but fixed number $s \in [0,1]$ we have 
\begin{align}
 \tr((\eins_{\hr^{\otimes l}} - q_{l,\delta})\overline{\rho}_{l,\varepsilon}) 
	  & = \tr(\overline{\rho}_{l,\varepsilon}^{(1-s)}\overline{\rho}_{l,\varepsilon}^s(\eins_{\hr^{\otimes l}} - q_{l,\delta})) \\
	  & \leq 2^{-ls(a-\delta)}\tr(\overline{\rho}_{l,\varepsilon}^{(1-s)}\tau_l^s(\eins_{\hr^{\otimes l}} - q_{l,\delta})) 
	    \label{tr_mono} \\
	  & \leq 2^{-ls(a -\delta)} \tr(\overline{\rho}_{l,\varepsilon}^{(1-s)}\tau_l^s). \label{comp_cq_hyp_ineq_1}
\end{align}
The inequality in (\ref{tr_mono}) is justified by the following argument. Since $\overline{\rho}_{\varepsilon,l}$ and 
$\tau_l$ commute, they are both diagonal in the same orthonormal basis $\{g_i\}_{i=1}^d$, i.e. they have spectral 
decompositions of the form
\begin{align}
 \overline{\rho}_{l,\varepsilon} = \sum_{i=1}^d \chi_i \ket{g_i}\bra{g_i},\qquad \text{and}
      \hspace{0.3cm} \tau_l = \sum_{i=1}^d \theta_i \ket{g_i}\bra{g_i}. \label{comp_cq_hyp_spd}
\end{align}
Because $q_{l,\delta}$ projects onto the eigenspaces corresponding to nonnegative eigenvalues of $T_{\varepsilon}$, we have
\begin{align}
 \eins_{\hr^{\otimes l}} - q_{l,\delta} = \sum_{i \in N} \ket{g_i}\bra{g_i},
\end{align}
where the set $N$ is defined by $N := \{i \in [d]: \chi_i - 2^{l(a-\delta)}\theta_i < 0\}$.  It follows
\begin{align}
\chi_i^s \leq 2^{ls(a-\delta)} \theta_i^s
\end{align}
for all $i \in N$ and $s \in [0,1]$. This in turn implies, via (\ref{comp_cq_hyp_commute}) and (\ref{comp_cq_hyp_spd}),
\begin{align}
 \overline{\rho}_{l,\varepsilon}^s(\eins_{\hr^{\otimes l}} - q_{l,\delta}) \leq 2^{ls(a-\delta)} \tau_{l}^s (\eins_{\hr^{\otimes l}}-q_{l,\delta}),
\end{align}
which shows (\ref{tr_mono}). Combining eqns. (\ref{comp_cq_hyp_wbwob}) and (\ref{comp_cq_hyp_ineq_1}) we obtain
\begin{align}
 \tr((\eins_{\hr^{\otimes l}} - q_{l,\delta})\rho_{l,\varepsilon})	
    &\leq 2^{ls(a -\delta)} \tr(\overline{\rho}_{l,\varepsilon}^{(1-s)}\tau_l^s) \nonumber \\
    &=2^{ls(a - \delta)} \tr(\overline{\rho}_{l,\varepsilon} \tau_l^{\frac{s}{2}} 
	  \overline{\rho}_{l,\varepsilon}^{-s}\tau_l^{\frac{s}{2}}) \nonumber \\
    &=2^{ls(a - \delta)} \tr(\rho_{l,\varepsilon} \tau_l^{\frac{s}{2}} 
	  \overline{\rho}_{l,\varepsilon}^{-s}\tau_l^{\frac{s}{2}}). \label{comp_cq_hyp_ineq_2}
\end{align}
Here we used the fact, that $\overline{\rho}_{l,\varepsilon}$ and $\tau_l$ commute in the first equality. Eq. 
(\ref{comp_cq_hyp_ineq_2}) is justified, because the eigenprojections of $\tau_l$ wich appear in the definition 
of $\overline{\rho}_{l,\varepsilon}$ are absorbed by $\tau_l^{\frac{1}{2}}$. We can further upper bound the above 
expressions in the following way. Note, that
\begin{align} 
 \rho_{l,\varepsilon} \leq |\spec(\tau_l)| \overline{\rho}_{l,\varepsilon}. \label{comp_cq_hyp_spbd}
\end{align}
holds by Lemma \ref{fact-2}. Because $-(\cdot)^{-s}$ is an operator monotone function for every $s \in [0,1]$ 
(see e.g. \cite{bhatia97}), (\ref{comp_cq_hyp_spbd}) implies
\begin{align}
 \overline{\rho}_{l,\varepsilon}^{-s} \leq |\spec(\tau_l)|^s \rho_{l,\varepsilon}^{-s}. \nonumber
\end{align}
Using the above relation, one obtains
\begin{align}
\tr(\rho_{l,\varepsilon} \tau_l^{\frac{s}{2}}\overline{\rho}_{l,\varepsilon}^{-s}\tau_l^{\frac{s}{2}})
 &\leq |\spec(\tau_l)|^s \tr(\rho_{l,\varepsilon}\tau^{\frac{s}{2}}\rho_{l,\varepsilon}^{-s}\tau_l^{\frac{s}{2}}). \nonumber 
\end{align}
By combination with (\ref{comp_cq_hyp_ineq_2}) this leads to
\begin{align}
\tr((\eins_{\hr^{\otimes l}} - q_{l,\delta})\rho_{l,\varepsilon}) 
    & \leq |\spec(\tau_l)|^s 2^{ls(a-\delta)}\tr(\rho_{l,\varepsilon}\tau^{\frac{s}{2}}
	\rho_{l,\varepsilon}^{-s}\tau_l^{\frac{s}{2}}) \\
    & \leq (l+1)^{d^2} \exp \{l[(a-\delta)s - \tfrac{1}{l}\psi_{l,\varepsilon}(s)]\} \label{comp_cq_expo} \\
    & =  \exp \{l[(a-\delta)s - \tfrac{1}{l}\psi_{l,\varepsilon}(s) + w(l)]\}, \label{comp_cq_expo_2}
\end{align}
where $d:=\dim \hr$. In (\ref{comp_cq_expo}), we used the definition 
\begin{align}
\psi_{l,\varepsilon}(s) := -\log \tr(\rho_{l,\varepsilon}\tau_l^{\frac{s}{2}}\rho_{l,\varepsilon}^{-s}\tau_l^{\frac{s}{2}}),
\end{align}
in the last line we introduced the function $w$ defined by $w(l) := \frac{d^2}{l}\log(l+1)$ for every $l \in 
\nn$. Notice, that we also used the bound $|\spec(\tau_l)| \leq (l+1)^{d^2}$ on the spectrum of $\tau_l$ which is justified by Lemma \ref{fact-1}. In fact, by observation of (\ref{comp_cq_tau_l}), it is easy to see, that for every $\sigma$ in the tensor product representation of $S_l$ on $\hr^{\otimes l}$ (see (\ref{tensor_prod_rep})),
\begin{align}
(\eins_{\cc^{|\mathbf{X}|}}^{\otimes l} \otimes \sigma)\tau_l = \tau_l(\eins_{\cc^{|\mathbf{X}|}}^{\otimes l} \otimes \sigma)
\end{align}
holds. We will now show, that the argument of the exponential in (\ref{comp_cq_expo_2}) becomes strictly negative for a 
suitable choice of $s$, sufficiently small $\varepsilon$ and large enough $l$. We define
\begin{align}
 f_{l,\varepsilon}(s) := (a-\delta)s - \frac{1}{l}\psi_{l,\varepsilon}(s).
\end{align}
By the mean value theorem it suffices to show 
that $f'_{l,\varepsilon}(0) < 0$ for small enough $\varepsilon >0$.
For the derivative, we have
\begin{align}
 f'_{l,\varepsilon}(0) = a -\delta - \frac{1}{l}D(\rho_{l,\varepsilon}||\tau_l). \label{f_abl}
\end{align}
The relative entropy term in (\ref{f_abl}) can be lower bounded as follows. It holds
\begin{align}
 D(\rho_{l,\varepsilon}||\tau_l) 
    & = -S(\rho_{l,\varepsilon}) - \tr(\rho_{l,\varepsilon}\log \tau_l) \nonumber \\
    & = -S(\rho_{l,\varepsilon}) + lS(p) + S\left(\frac{1}{|T|}\sum_{t \in T} \sigma_t^{\otimes l}\right) \label{comp_cq_reg_ent} \\
    & \geq -S(\rho_{l,\varepsilon}) + l S(p) + \frac{1}{|T|} \sum_{t\in T} lS(\sigma_t). \label{d_low_bnd}
\end{align}
Notice that the equality in (\ref{comp_cq_reg_ent}) indeed holds, because the marginals on $(\cc^{|\mathbf{X}|})^{\otimes l}$ and $\hr^{\otimes l}$ of $\rho_l$ and $\tau_l$ are equal and therefore equal to the marginals of $\rho_{l,\epsilon}$ by definition for each $\epsilon \in (0,1)$. The inequality in (\ref{d_low_bnd}) is valid due to concavity of the von Neumann entropy. Because (\ref{fannes_abstand}) holds, 
\begin{align}
 S(\rho_{l,\varepsilon}) &\leq S(\rho_l) + 2\varepsilon \log \frac{\dim(\hr_l)}{2\varepsilon}  \nonumber \\
		      &\leq S(\rho_l) + 2\varepsilon l\log \frac{d}{2\varepsilon} \label{fannes_lwb}	  
\end{align}
is valid for $\epsilon < \frac{1}{2e}$, since for two states $\rho,\sigma \in \mathcal{S}(\hr)$ with $\|\rho-\sigma\|_1 \leq \varepsilon \leq \frac{1}{e}$, 
Fannes' inequality \cite{fannes73},
\begin{align}
 |S(\rho) - S(\sigma)| \leq \varepsilon \log\frac{\dim \hr}{\varepsilon},
\end{align}
is valid.
Together with (\ref{fannes_lwb}), (\ref{d_low_bnd}) implies
\begin{align}
 D(\rho_{l,\varepsilon}|| \tau_l) 
  & \geq -S(\rho_l) - 2\varepsilon l \log \frac{d}{2 \varepsilon} + l S(p) + \frac{1}{|T|} \sum_{t\in T} l S(\sigma_t) 
	\nonumber \\
  & \geq -\frac{1}{|T|} \sum_{t\in T} l S(\rho_t) - \log |T| - 2\varepsilon l \log \frac{d}{2 \varepsilon}
	    + l S(p) + \frac{1}{|T|} \sum_{t\in T} l S(\sigma_t) \label{lwb_2_5} \\
  & = \frac{l}{|T|} \sum_{t\in T} D(\rho_t || p \otimes \sigma_t) - \log|T|- 2\varepsilon l \log \frac{d}{2 \varepsilon}. \label{lwb_3}
\end{align}
The inequality in (\ref{lwb_2_5}) results from the fact, that the von Neumann entropy is an almost convex function, i.e.
\begin{align}   
S(\rho) \leq \sum_{i=1}^N {p_i} S(\rho_i) + \log(N)
\end{align}
for any mixture $\rho = \sum_{i=1}^N p_i \rho_i$ of states. Inserting (\ref{lwb_3}) in (\ref{f_abl}) gives
\begin{align}
f'_{l,\varepsilon}(0) 
  & \leq \min_{t \in T} D(\rho_t|| p \otimes \sigma_t) -\delta - 
    \frac{1}{|T|} \sum_{t\in T} D(\rho_t || p \otimes \sigma_t) + 2\varepsilon \log\frac{d}{2 \varepsilon}
    + \frac{1}{l} \log |T| \nonumber \\
  & < -\frac{\delta}{2} + \frac{1}{l} \log |T|, \label{abl_bnd}
\end{align}
provided that $0 < \varepsilon < \varepsilon_0(\delta)$ where $\varepsilon_0$ is small enough to ensure $2\varepsilon 
\log \frac{d}{2 \varepsilon} < \frac{\delta}{2}$. The mean value theorem shows that for $s \in (0,1]$
\begin{align}
 f_{l,\varepsilon}(s) = f_{l,\varepsilon}(0) + f'_{l,\varepsilon}(s')\cdot s \nonumber
\end{align}
holds for some $s' \in (0,s)$. Since $f_{l,\varepsilon}(0) = 0$, (\ref{abl_bnd}) shows that we can guarantee
\begin{align}
 f_{l,\varepsilon}(s) < \left(- \frac{\delta}{2}+ \frac{1}{l}\log |T|\right)s  \label{comp_cq_tr_ineq}
\end{align}
for small enough $s$. By (\ref{comp_cq_expo_2}) and (\ref{comp_cq_tr_ineq}) we obtain for $\varepsilon < \varepsilon_0(\delta)$
and $l$ large enough to make $w(l) < \frac{\delta s}{8}$ valid,
\begin{align}
 \tr((\eins_{\hr^{\otimes l}}-q_{l,\delta})\rho_{l,\varepsilon})  
    & \leq \exp \{l[f_{l,\varepsilon}(s) + w(l)]\} \nonumber \\
    & \leq \exp \left\{-l\left(\frac{\delta}{4}s - w(l)\right)\right\} \\
    & \leq |T| \cdot \exp \{-l\frac{\delta}{8}s\}. \nonumber
\end{align}
Using (\ref{fannes_abstand}), we have (with $\varepsilon < \varepsilon_0$) 
\begin{align}
 \tr((\eins_{\hr^{\otimes l}} - q_{l,\delta})\rho_{l}) 
    & \leq \|\rho_{l,\varepsilon} - \rho_l \|_1 + \tr((\eins_{\hr^{\otimes l}} - q_{l,\delta})\rho_{l,\varepsilon}) \nonumber \\
    & \leq 2 \varepsilon + |T| \exp\{-l \frac{\delta}{8}s\}. \nonumber
\end{align}
We can in fact, choose the parameter $\epsilon$ dependent on $l$ in a way that $(\varepsilon_l)_{l=1}^\infty$ decreases exponentially in $l$, which proves the second claim of the lemma.\qed
\end{proof}
In order to prove the direct part of the coding theorem for general sets of channels we have to approximate arbitrary sets of channels by finite ones. 
For $\alpha>0$, an $\alpha$-net in $CQ(\mathbf{X},\hr)$ is a finite set $\mathcal{N}_{\alpha} := \{W_i\}_{i=1}^{N_\alpha}
\subset CQ(\mathbf{X}, \hr)$ with the property, that for every channel $W \in CQ(\mathbf{X}, \hr)$ there exists an index 
$i \in [N_\alpha]$ such that
\begin{align}
 \|W - W_i\|_{cq} < \alpha
\end{align}
holds. For a given set $\mathcal{W} \subset CQ(\mathbf{X},\hr)$ an $\alpha$-net $\mathcal{N}_{\alpha}$ in $CQ(\mathbf{X}, \hr)$ generates an approximating set $\widetilde{\mathcal{W}}_{\alpha}$ defined by
\begin{align}
 \widetilde{\mathcal{W}}_{\alpha} := \{W_i \in \mathcal{N}_\alpha : B_{cq}(\alpha, W_i) \cap \mathcal{W} \neq \emptyset\}.
\end{align}
where $B_{cq}(\alpha, A)$ denotes the $\alpha$-ball with center $A$ regarding the norm $\|\cdot\|_{cq}$. The above definition does not guarantee, that 
$\widetilde{\mathcal{W}}_{\alpha}$ is a subset of $\mathcal{W}$ but each $\widetilde{\mathcal{W}}_\alpha$ clearly generates a set $\mathcal{W}_{2\alpha}\subset\mathcal W$ of at most the same 
cardinality, such that for every $W \in \mathcal{W}$ exists an index $i \in [N_{\alpha}]$ 
with
\begin{align}
 \|W - W_i\|_{cq} < 2\alpha.
\end{align}
The next lemma states that we find good approximations of arbitrary compound cq-channels among such sets as defined above. The proof can be given by minor variations of the corresponding results in \cite{bb-compound}, \cite{bbn-1}, and we omit it here.
\begin{lemma}\label{lemma:approximation}
Let $\mathcal{W} := \{W_t\}_{t \in T} \subset CQ(\mathbf{X},\hr)$ and $\alpha \in (0,\frac{1}{e})$. There exists a set 
$T_\alpha \subseteq T$ which fulfills the following conditions
\begin{enumerate}
 \item $|T_{\alpha}| < \left(\frac{6}{\alpha}\right)^{2|\mathbf{X}|d^2}$,
 \item given any $l \in \nn$, to every $t \in T$ one finds an index $t' \in T_{\alpha}$ such that 
	\begin{align}	
	  \|W_t^{\otimes l}(x^l) - W_{t'}^{\otimes l}(x^l)\|_1 < 2l\alpha.
	\end{align}
       holds for every $x^l \in \mathbf{X}^l$. Moreover,
 \item for every $p \in \mathfrak{P}(\mathbf{X})$, 
       \begin{align}
	\left|\underset{t'\in T_{\alpha}}{\min} \chi(p,W_{t'}) - \underset{t \in T}{\inf} \chi(p, W_t)\right| 
	  \leq 2\alpha \log\frac{d}{2\alpha}
       \end{align}
	holds.
\end{enumerate}
\end{lemma}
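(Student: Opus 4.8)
The plan is to construct $T_\alpha$ as a subset of $T$ whose image under $W \mapsto W(x)$, $x \in \mathbf X$, is close to the full image of $T$ in the $\|\cdot\|_{cq}$-norm, and then to read off properties 1--3 from the single-letter approximation together with tensorisation and continuity estimates. First I would invoke a covering (net) argument: the set $CQ(\mathbf X,\hr)$ embeds isometrically into $\bigoplus_{x\in\mathbf X}\bo(\hr)$ via $W\mapsto(W(x))_{x\in\mathbf X}$, and the states $\cs(\hr)$ form a compact convex subset of the real vector space of Hermitian operators, of real dimension $d^2$ (or $d^2-1$ if one subtracts the trace constraint, which only improves the bound). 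Hence $CQ(\mathbf X,\hr)$ sits inside a compact convex set of real dimension $|\mathbf X|d^2$, and a standard volumetric argument (cf. \cite{bb-compound},\cite{bbn-1}) shows that an $\alpha$-net $\mathcal N_\alpha$ in $CQ(\mathbf X,\hr)$ can be chosen with $|\mathcal N_\alpha|<(6/\alpha)^{|\mathbf X|d^2}$; actually we need the $2\alpha$-net refinement described just before the lemma, where one passes from the net points to genuine elements of $\mathcal W$, which squares the relevant quantity and yields the stated bound $|T_\alpha|<(6/\alpha)^{2|\mathbf X|d^2}$. Concretely, set $\mathcal W_{2\alpha}$ to be the set described in the paragraph preceding the lemma, i.e. for each net point $W_i$ that is within $\alpha$ of some channel in $\mathcal W$ pick one such channel; let $T_\alpha\subseteq T$ index these picks. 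Then property 1 is immediate, and for every $t\in T$ there is $t'\in T_\alpha$ with $\|W_t-W_{t'}\|_{cq}<2\alpha$, i.e. $\|W_t(x)-W_{t'}(x)\|_1<2\alpha$ for all $x\in\mathbf X$.

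Next I would establish property 2 by tensorisation of the trace-norm estimate. For product states one has the elementary telescoping bound $\|\rho_1\otimes\cdots\otimes\rho_l-\sigma_1\otimes\cdots\otimes\sigma_l\|_1\le\sum_{i=1}^l\|\rho_i-\sigma_i\|_1$ whenever all factors are states; applying this with $\rho_i=W_t(x_i)$ and $\sigma_i=W_{t'}(x_i)$ gives, for every $x^l=(x_1,\dots,x_l)\in\mathbf X^l$,
\begin{align*}
\|W_t^{\otimes l}(x^l)-W_{t'}^{\otimes l}(x^l)\|_1\le\sum_{i=1}^l\|W_t(x_i)-W_{t'}(x_i)\|_1<2l\alpha,
\end{align*}
which is exactly the claimed inequality. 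This step is routine once the single-letter closeness from property 1 is in hand.

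Finally, property 3 follows from uniform continuity of the Holevo quantity $\chi(p,\cdot)$ in the channel, with modulus governed by Fannes' inequality. Fix $p\in\mathfrak P(\mathbf X)$; write $\chi(p,W)=S(\overline W)-\sum_x p(x)S(W(x))$ with $\overline W=\sum_x p(x)W(x)$. If $\|W_t-W_{t'}\|_{cq}<2\alpha$ then $\|W_t(x)-W_{t'}(x)\|_1<2\alpha$ for each $x$ and also $\|\overline{W_t}-\overline{W_{t'}}\|_1\le\sum_x p(x)\|W_t(x)-W_{t'}(x)\|_1<2\alpha$; since $2\alpha<2/e$ — here one uses $\alpha<1/e$, possibly after replacing $\alpha$ by $\alpha/2$ in the net, which again only affects the constant $6$ — Fannes' inequality $|S(\rho)-S(\sigma)|\le\varepsilon\log(d/\varepsilon)$ for $\|\rho-\sigma\|_1\le\varepsilon\le 1/e$ applies (with the mild convexity-in-$\varepsilon$ remark that $\varepsilon\mapsto\varepsilon\log(d/\varepsilon)$ is increasing on $(0,1/e)$, so $\varepsilon<2\alpha$ suffices), giving
\begin{align*}
|\chi(p,W_t)-\chi(p,W_{t'})|\le|S(\overline{W_t})-S(\overline{W_{t'}})|+\sum_{x}p(x)|S(W_t(x))-S(W_{t'}(x))|\le 2\cdot 2\alpha\log\frac{d}{2\alpha}.
\end{align*}
Wait — this produces $4\alpha\log(d/2\alpha)$, a factor $2$ off from the statement; the discrepancy is absorbed by choosing the underlying net finer (an $\alpha$-net instead of a $2\alpha$-net, trading the constant in property 1, which the bound in property 1 is generous enough to accommodate), or equivalently by working throughout with half the radius, so that $\|W_t-W_{t'}\|_{cq}<\alpha$ and the bound becomes $2\alpha\log(d/2\alpha)$ as claimed. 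Granting this normalisation, $\min_{t'\in T_\alpha}\chi(p,W_{t'})$ and $\inf_{t\in T}\chi(p,W_t)$ differ by at most $2\alpha\log(d/2\alpha)$: for each $t$ pick $t'\in T_\alpha$ close to it to see $\inf_{t\in T}\chi(p,W_t)\ge\min_{t'\in T_\alpha}\chi(p,W_{t'})-2\alpha\log(d/2\alpha)$, and the reverse inequality is trivial since $T_\alpha\subseteq T$.

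The only genuine subtlety — and the step I would flag as the main obstacle — is the bookkeeping of constants: extracting the covering number $(6/\alpha)^{|\mathbf X|d^2}$ with the \emph{correct} exponent and base requires a careful volumetric estimate for nets of $\cs(\hr)$ inside its affine hull, and the passage from net points to channels in $\mathcal W$ (needed so that $T_\alpha\subseteq T$, which property 3 and the later coding-theorem application use) is what forces the exponent $2|\mathbf X|d^2$ rather than $|\mathbf X|d^2$. This is precisely why the authors defer to \cite{bb-compound},\cite{bbn-1} rather than reprove it; everything else is telescoping plus Fannes.
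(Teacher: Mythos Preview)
The paper itself omits the proof of this lemma entirely, stating only that ``the proof can be given by minor variations of the corresponding results in \cite{bb-compound}, \cite{bbn-1}, and we omit it here.'' Your three-step outline---volumetric covering for the cardinality bound, a telescoping sum for the tensorised trace-norm estimate, and Fannes' inequality for the continuity of $\chi$---is indeed the standard route and almost certainly what those references carry out.

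Two small points are worth flagging. First, your explanation of why the exponent is $2|\mathbf X|d^2$ rather than $|\mathbf X|d^2$ is not right: passing from net points in $CQ(\mathbf X,\hr)$ to representatives inside $\mathcal W$ cannot \emph{increase} the cardinality of the approximating set (you select at most one element of $\mathcal W$ per occupied net cell), so that step does not ``square'' anything. The factor $2$ in the exponent is simply a generous upper bound inherited from the referenced covering estimate (e.g.\ by embedding Hermitian matrices into the ambient real space $\mathbb R^{2d^2}$ of all complex matrices rather than their $d^2$-dimensional real subspace); it is not forced by the construction you describe. Second, your observation about the factor $2$ in property~3 is accurate: with $\|W_t-W_{t'}\|_{cq}<2\alpha$ the direct Fannes bound on both entropy terms in $\chi$ yields $4\alpha\log(d/2\alpha)$, not $2\alpha\log(d/2\alpha)$. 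Since the paper does not reproduce the argument, this is either a harmless looseness in the stated constant or the references use a slightly finer net radius; in either case it has no bearing on the subsequent coding-theorem application, where $\alpha$ is chosen to decay exponentially in $l$.
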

The following lemma is from \cite{ahlswede69} and will be used to establish the weak converse in Theorem \ref{comp_cq_direct_part}. It states that codes which have small 
average error probability for a finite compound cq-channel contain subcodes with good maximal error probability of not substantially smaller size. 
\begin{lemma}[cf. \label{average_to_maximum}\cite{ahlswede69}, Lemma 1]\label{lemma7}
Let $\mathcal{W} :\{W_t\}_{t \in T} \subset CQ(\mathbf{X},\hr)$ be a compound channel with $|T| < \infty$ and $l \in 
\nn$. If $(u_i^l,D_i^l)_{i=1}^{M_l}$ is an $(l,M_l)$-code with
\begin{align}
 \max_{t\in T}\ \frac{1}{M_l}\sum_{i=1}^{M_l}\tr(W_t^{\otimes l}(u_i^l)(\eins_\hr - D_i^l)) \leq \overline{\lambda}.
\end{align}
Then there exists for every $\epsilon>0$ a subcode $(u_{i_j}^l,D_{i_j}^l)_{j=1}^{M_{l,\epsilon}}$ of size 
$M_{l,\epsilon} = \lfloor \frac{\epsilon}{1 - \epsilon} M_l \rfloor$ with
\begin{align}
\max_{t \in T}\ \max_{j \in [M_{l,\epsilon}]} \tr(W_t^{\otimes l}(u_{i_j}^l)(\eins_\hr - D_{i_j}^l)) \leq |T| (\overline{\lambda} + \epsilon)
\end{align}
\end{lemma}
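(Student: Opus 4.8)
The plan is to reduce the worst-case-over-$T$ maximal error to a purely combinatorial one-dimensional counting argument by routing everything through the $T$-averaged error of each codeword. Write $e(i,t):=\tr(W_t^{\otimes l}(u_i^l)(\eins_{\hr^{\otimes l}}-D_i^l))$ for the error incurred by message $i$ under channel $t$; since $0\le D_i^l\le\eins_{\hr^{\otimes l}}$ (because $\sum_{m}D_m^l\le\eins_{\hr^{\otimes l}}$) and $W_t^{\otimes l}(u_i^l)$ is a state, each $e(i,t)$ lies in $[0,1]$. Put $\overline{e}(i):=\frac{1}{|T|}\sum_{t\in T}e(i,t)$. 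Interchanging the two finite sums in the hypothesis gives $\frac{1}{M_l}\sum_{i=1}^{M_l}\overline{e}(i)\le\frac{1}{|T|}\sum_{t\in T}\bigl(\frac{1}{M_l}\sum_{i=1}^{M_l}e(i,t)\bigr)\le\overline{\lambda}$. The one structural observation that carries the argument is that, all summands being non-negative, $e(i,t)\le\sum_{t'\in T}e(i,t')=|T|\,\overline{e}(i)$ for every $i$ and every $t\in T$; hence any subcode on which $\overline{e}$ is small automatically has small maximal error over $T$, and it suffices to select many indices with small $\overline{e}$.

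For the selection I would use a Markov-type bound: at most $\tfrac{M_l\overline{\lambda}}{\overline{\lambda}+\epsilon}$ of the messages can satisfy $\overline{e}(i)>\overline{\lambda}+\epsilon$, since otherwise $\sum_i\overline{e}(i)$ would already exceed $M_l\overline{\lambda}$. Consequently at least $M_l\bigl(1-\tfrac{\overline{\lambda}}{\overline{\lambda}+\epsilon}\bigr)=M_l\tfrac{\epsilon}{\overline{\lambda}+\epsilon}$ indices, say $i_1,\dots,i_K$, satisfy $\overline{e}(i_j)\le\overline{\lambda}+\epsilon$; as $\tfrac{\epsilon}{\overline{\lambda}+\epsilon}\ge\tfrac{\epsilon}{1-\epsilon}$ in the relevant regime (where $\overline{\lambda}$ is small --- in the application to the weak converse one has $\overline{\lambda}\to 0$), we have $K\ge\lfloor\tfrac{\epsilon}{1-\epsilon}M_l\rfloor=M_{l,\epsilon}$ and may keep the first $M_{l,\epsilon}$ of these indices. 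Applying $e(i,t)\le|T|\,\overline{e}(i)$ to each kept index then yields $e(i_j,t)\le|T|(\overline{\lambda}+\epsilon)$ for all $j\in[M_{l,\epsilon}]$ and all $t\in T$, so $(u_{i_j}^l,D_{i_j}^l)_{j=1}^{M_{l,\epsilon}}$ is the claimed subcode.

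I do not expect a genuine obstacle: this is exactly Ahlswede and Wolfowitz's elimination-of-average-error device from \cite{ahlswede69}, and nothing quantum enters beyond the remark that $e(i,t)\in[0,1]$. The only care required is the bookkeeping of constants, so that the extracted subcode simultaneously has size exactly $\lfloor\tfrac{\epsilon}{1-\epsilon}M_l\rfloor$ and maximal error at most $|T|(\overline{\lambda}+\epsilon)$; the crude estimates above achieve this comfortably whenever $\overline{\lambda}$ is small, which is precisely the regime in which the lemma is invoked --- to deduce the weak converse in Theorem~\ref{comp_cq_direct_part} from the strong converse for the maximal error criterion of \cite{bb-compound}.
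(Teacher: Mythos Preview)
The paper does not supply a proof of this lemma; it merely states the result and cites \cite{ahlswede69}, Lemma~1, so there is nothing in the paper to compare your argument against. Your approach---averaging the per-message error over $T$, bounding $e(i,t)\le|T|\,\overline{e}(i)$, and then expurgating via a Markov inequality on $\overline{e}$---is exactly the classical Ahlswede--Wolfowitz device the paper is invoking, and the argument is sound.

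You are also right to flag that the size guarantee $\lfloor\tfrac{\epsilon}{1-\epsilon}M_l\rfloor$ only follows from $\tfrac{\epsilon}{\overline{\lambda}+\epsilon}\ge\tfrac{\epsilon}{1-\epsilon}$, i.e.\ when $\overline{\lambda}\le 1-2\epsilon$; taken literally for arbitrary $(\overline{\lambda},\epsilon)$ the stated size is not always attainable (consider $|T|=1$, $M_l=9$, five codewords with error $1$ and four with error $0$, $\epsilon=0.4$). The paper, however, only invokes the lemma with $\epsilon=\tfrac{1}{2|\hat T|}$ fixed and $\overline{\lambda}_l\to0$, which is precisely the regime you identify, so your caveat is the correct reading.
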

Finally, we have gathered all the prerequisites to prove Theorem \ref{comp_cq_direct_part}:
\begin{proof}[of Theorem \ref{comp_cq_direct_part}]
The direct part (i.e. the assertion that the r.h.s. lower-bounds the l.h.s. in (\ref{comp_cq_capacity})) is proven by combining Lemma \ref{lemma:statistical-test-to-code} with Lemma \ref{lemma:existence-of-statistical-test}. 
Let $p = \argmax_{p' \in \pr(\mathbf{X})} \inf_{t \in T} \chi(p', W_t)$. We show that for any $\delta > 0$, 
\begin{align}
 \inf_{t \in T} \chi(p,W_t) - \delta \label{comp_cq_direct_rate}
\end{align}
is an achievable rate. We can restrict ourselves to the case, where $\inf_{t \in T} \chi(p,W_t) >  \delta > 0$ holds,  because otherwise the above statement is trivially fulfilled. The above mentioned lemmata consider finite sets of channels, therefore we choose an approximating set $\mathcal{W}_{\alpha_l}$ (of cardinality $T_{\alpha_{l}}$) according to Lemma \ref{lemma:approximation} for every $l \in \nn$, where we leave the sequence $\alpha_1,\alpha_2,...$ initially unspecified. 
For every $l \in \nn$ and $t' \in T_{\alpha_l}$, let $\rho_{t'}$, $\sigma_{t'}$ be defined according to eq. 
(\ref{comp_cq_rho_t}) and (\ref{comp_cq_sigma_t}), and further define states
\begin{align}
 \rho_l := \frac{1}{|T_{\alpha_l}|} \sum_{t' \in T_{\alpha_l}} v_l\rho_{t'}^{\otimes l}v_l^\ast
\end{align}
and
\begin{align}
 \tau_l := p^{\otimes l} \otimes \frac{1}{|T_{\alpha_l}|} \sum_{t' \in T_{\alpha_l}} \sigma_{t'}^{\otimes l}.
\end{align}
For a given number $\eta$ with $0 < \eta < a_l$, Lemma \ref{lemma:existence-of-statistical-test} guarantees (for large enough $l$), with a suitable constant
$\tilde{c}>0$, the existence of a projection $q_{l,\eta} \in 
\bo((\cc^{|\mathbf{X}|})^{\otimes l} \otimes \hr^{\otimes l})$ with
\begin{align}
\tr(q_{l,\eta}\rho_l) \geq 1 - |T_{\alpha_l}| \cdot 2^{-l\tilde{c}}
\end{align}
and
\begin{align}
 \tr(q_{l,\eta}\tau_l) \leq 2^{-l(a_l-\eta)}
\end{align}
where we defined $a_l := \min_{t' \in T_{\alpha_l}} D(\rho_{t'} ||p \otimes \sigma_{t'})$. 
This by virtue of Lemma \ref{lemma:statistical-test-to-code} implies for every $\gamma > 0$ such that $\eta + \gamma < a_l$ the existence of a cq-code 
$(x_m^{l},D_m^l)_{m \in [M_l]}$ of size 
\begin{align}
 M_l = \lfloor 2^{l(a_l - \eta - \gamma)}\rfloor \label{comp_dir_rte}
\end{align}
and average error bounded by
\begin{align}
 \max_{t' \in T_{\alpha_l}}\ \frac{1}{M_l}\sum_{m=1}^{M_l}\tr(W_{t'}^{\otimes l}(u_m^l)(\eins_{\hr^{\otimes l}}-D_m^l))		
    \leq 2 |T_{\alpha_l}|^2 2^{-l\tilde{c}} + 4\cdot |T_{\alpha_l}|2^{-l\gamma} \label{comp_cq_dir_error_1}.
\end{align}
Notice, that for other positive numbers $\gamma, \delta$, trivial codes have $M_l = 1  \geq \lfloor 2^{l(a_l - \eta - \gamma)}\rfloor$.
Using (\ref{comp_dir_rte}) we obtain, 
\begin{align}
 \frac{1}{l} \log M_l & \geq  \min_{t' \in T_{\alpha_l}} \chi(p, W_{t'})- \eta - \gamma\\
                      & \geq  \inf_{t \in T} \chi(p, W_t) - \eta -\gamma - 4 \alpha_l \log\frac{d}{2 \alpha_l}, \label{endrate}
\end{align}
where  the second inequality follows from Lemma \ref{lemma:approximation}. For the average error, it holds, 
\begin{align}
  &\sup_{t \in T}\frac{1}{M_l} \sum_{m \in [M_l]} \tr\left(W_t^{\otimes l}(u_m^l)(\eins_{\hr^{\otimes l}} - D_m^l)\right) \\
 &\leq \max_{t' \in T_{\alpha_l}} \frac{1}{M_l} \sum_{m \in [M_l]} \tr\left(W_{t'}^{\otimes l}(u_m^l)(\eins_{\hr^{\otimes l}} - D_m^l)\right) + 2l\alpha_l \\
 &\leq 2|T_{\alpha_l}|^22^{-l\tilde{c}} + 4|T_{\alpha_l}|2^{-l\gamma} + 2l\alpha_l. \label{endfehler}
\end{align}
The first of the above inequalities follows from Lemma \ref{lemma:approximation}, the second one is by 
(\ref{comp_cq_dir_error_1}). Because we chose the approximating sets according to Lemma \ref{lemma:approximation}, 
\begin{align}
|T_{\alpha_l}| \leq \left(\frac{6}{\alpha_l}\right)^{2|\mathbf{X}|d^2}
\end{align}
holds. In fact, if we specify $\alpha_l$ to be $\alpha_l := 2^{-l\hat{c}}$ for every $l \in \nn$, where $\hat{c}$ is a 
constant with $0 < \hat{c} < \min\left\{\frac{\tilde{c}}{4|\mathbf{X}|d^2},\frac{\eta}{2|\mathbf{X}|d^2} \right\}$ , 
the r.h.s of (\ref{endfehler}) decreases exponentially for $l \rightarrow \infty$.
If we additionally choose $\eta$ and $\gamma$, small enough to validate $\delta > \eta + \gamma + 2 \alpha_l \log \frac{d}{2\alpha_l}$
for sufficiently large $l$, the rate defined in (\ref{comp_cq_direct_rate}) is shown to be achievable by (\ref{endfehler}) and (\ref{endrate}). Since $\delta$ was arbitrary, the direct statement follows.\\
It remains to prove the converse statement. For the proof, we will construct a good code for transmission under the maximal error criterion and invoke the strong converse result given in \cite{bb-compound} (see Remark \ref{str_con_rem_1}). We show, that for any $\delta > 0$,
\begin{align} 
 \overline{C}_C(\mathcal{W}) < \max_{p \in \pr(\mathbf{X})}\ \inf_{t \in T}\ \chi(p,W_t) + \delta. \label{comp_cq_conv_r}
\end{align}
Let $\delta > 0$ and assume that for some fixed $l \in \nn$, $\mathcal{C}_l:=(u_m^l, D_m^l)_{m = 1}^{M_l}$ is an $(l,M_l)$-code with
\begin{align}
 \sup_{t \in T}\ \frac{1}{M_l}\sum_{m=1}^{M_l}\tr(W_t^{\otimes l}(u_m^l)(\eins_{\hr^{\otimes l}}-D_m^l)) \leq \overline{\lambda}_l.
\end{align}
We always can find a finite subset $\hat{T} \subset T$ such that
\begin{align}
 \left|\underset{p \in \pr(\mathbf{X})}{\max} \underset{t \in T}{\inf} \chi(p,W_t) -  
	  \underset{p \in \pr(\mathbf{X})}{\max} \underset{t \in \hat{T}}{\min} \chi(p,W_t) \right| \leq \frac{\delta}{2}
       \label{chi_nahe_dran}
\end{align}
holds (e.g. a set $T_{\alpha}$ as in Lemma \ref{lemma:approximation} for suitable $\alpha$). We set $\epsilon := \frac{1}{2|\hat{T}|}$. By virtue of Lemma \ref{average_to_maximum} we find a subcode 
$(u_{i_j}^l, D_{i_j}^l)_{j=1}^{M_{l,\epsilon}} \subseteq \mathcal{C}_l$ of $\mathcal{C}_l$ which has 
size 
\begin{align}
 M_{l,\epsilon} := \left\lfloor \frac{\epsilon}{1 - \epsilon} M_l \right\rfloor \label{comp_cq_str_conv_rate}
\end{align}
and maximal error bounded by
\begin{align}
 \max_{t \in \hat{T}}\ \max_{j \in M_{l, \epsilon}} \tr\left(W_t^{\otimes l}(u_{i_j}^l)(\eins_{\hr^{\otimes l}} - D_{i_j}^l)\right) 
  \leq \overline{\lambda}_l |\hat{T}| + \frac{1}{2}. \\
\end{align}
If $l$ is sufficiently large, the r.h.s. is strictly smaller than one. Therefore, by the strong converse theorem for 
coding under the maximal error criterion (see \cite{bb-compound}, Theorem 5.13), we have (with some constant $K>0$)
\begin{align}
 \frac{1}{l}\log M_{l,\epsilon} 
   &\leq \underset{p \in \pr(\mathbf{X})}{\max} \underset{t \in \hat{T}}{\min} \chi(p,W_t) + K \frac{1}{\sqrt{l}} \\
   &\leq \underset{p \in \pr(\mathbf{X})}{\max} \underset{t \in T}{\inf} \chi(p,W_t) + 
      \frac{\delta}{2} + K \frac{1}{\sqrt{l}}. \label{comp_cq_conv_contr_1}
\end{align}
The second line above follows from (\ref{chi_nahe_dran}). On the other hand, by (\ref{comp_cq_str_conv_rate}), we have
\begin{align}
 \log M_l \leq \log M_{l,\epsilon} + \log\left(\frac{\varepsilon}{2(1-\varepsilon)}\right). \label{comp_cq_conv_contr_2}
\end{align}
Dividing both sides of (\ref{comp_cq_conv_contr_2}) by $l$ and combinig the result with (\ref{comp_cq_conv_contr_1}) shows that for sufficiently large $l$
\begin{align}
 \frac{1}{l} \log M_l 
  &\leq \underset{p \in \pr(\mathbf{X})}{\max} \underset{t \in T}{\inf} \chi(p,W_t) + 
      \frac{\delta}{2} + K \frac{1}{\sqrt{l}} + \frac{1}{l}\log\left(\frac{\varepsilon}{2(1-\varepsilon)}\right) \\
  &\leq \underset{p \in \pr(\mathbf{X})}{\max} \underset{t \in T}{\inf} \chi(p,W_t) + \delta
\end{align}
holds, which shows (\ref{comp_cq_conv_r}). Since $\delta$ was an arbitrary positive number, we are done.\qed
\end{proof}
\end{section}
\begin{remark}\label{str_con_rem_1}
While the achievability part for cq-compound channels regarding the maximal error criterion given in \cite{bb-compound} required technical effort, the strong converse proof 
was rather uncomplicated. It was given there by a combination of Wolfowitz' proof technique for the strong converse in case of classical compound channels and a lemma from 
\cite{winter99}.
\end{remark}
\begin{remark}\label{remark10}
We remark here, that a general strong converse does not hold for the capacity of compound cq-channels if the average error
is considered as criterion for reliability of the message transmission. This can be seen by a counterexample given by Ahlswede in \cite{ahlswede68} (Example 1) regarding 
classical compound channels. However, we will see in the proof of Theorem \ref{avcqc_strong_converse}, that in certain situations (especially, where $\mathcal{W}$ is a 
convex set) a strong converse proof can be established. 
\end{remark}
As a corollary to the achievability part of Theorem \ref{comp_cq_direct_part} above, we immediately obtain a direct coding theorem for the capacity of a finite cq-compound channel under the maximal error criterion. 
\begin{corollary}
For a finite compound cq-channel $\mathcal{W}:=\{W_t\}_{t \in T} \subset CQ(\mathbf{X},\hr)$ we have
\begin{align}
 C_C(\mathcal{W}) \geq \underset{p \in \pr(\mathbf{X})}{\max}\, \underset{t \in T}{\min} \, \chi(p,W_t)
\end{align}
\end{corollary}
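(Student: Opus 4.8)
The plan is to obtain the corollary as a direct consequence of the achievability part of Theorem~\ref{comp_cq_direct_part} combined with the expurgation lemma, Lemma~\ref{average_to_maximum}. Since here $T$ is finite, no approximation step is needed: Lemma~\ref{lemma:existence-of-statistical-test} and Lemma~\ref{lemma:statistical-test-to-code} apply directly to $\mathcal{W}$ itself. First I would fix $\delta>0$, put $a:=\max_{p'\in\pr(\mathbf{X})}\min_{t\in T}\chi(p',W_t)$ and let $p$ attain this maximum; with the states $\rho_t$ of (\ref{comp_cq_rho_t}) one has $D(\rho_t\|p\otimes\sigma_t)=\sum_{x}p(x)D(W_t(x)\|\sigma_t)=\chi(p,W_t)$, so $a=\min_{t\in T}D(\rho_t\|p\otimes\sigma_t)$. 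Applying Lemma~\ref{lemma:existence-of-statistical-test} with a small $\eta>0$ and then Lemma~\ref{lemma:statistical-test-to-code} with a small $\gamma>0$ (so that $\eta+\gamma<a$) produces, for all large $l$, an $(l,M_l)$-code $(u_m^l,D_m^l)_{m=1}^{M_l}$ with
\[
\tfrac1l\log M_l\ \ge\ a-\eta-\gamma
\qquad\text{and}\qquad
\overline\lambda_l:=\max_{t\in T}\tfrac1{M_l}\sum_{m=1}^{M_l}\tr\big(W_t^{\otimes l}(u_m^l)(\eins_{\hr^{\otimes l}}-D_m^l)\big)\ \le\ c_1\,2^{-l c_2}
\]
for constants $c_1,c_2>0$ depending only on $\mathcal{W},p,\eta,\gamma$.

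Next I would apply Lemma~\ref{average_to_maximum} to this code with a parameter $\epsilon=\epsilon_l$ chosen to vanish subexponentially, say $\epsilon_l:=1/l$. This yields a subcode $(u_{i_j}^l,D_{i_j}^l)_{j=1}^{M_{l,\epsilon_l}}$ of size $M_{l,\epsilon_l}=\lfloor\tfrac{\epsilon_l}{1-\epsilon_l}M_l\rfloor$ with
\[
\max_{t\in T}\ \max_{j\in[M_{l,\epsilon_l}]}\tr\big(W_t^{\otimes l}(u_{i_j}^l)(\eins_{\hr^{\otimes l}}-D_{i_j}^l)\big)\ \le\ |T|\,(\overline\lambda_l+\epsilon_l)\ \xrightarrow[l\to\infty]{}\ 0,
\]
the convergence being clear since $|T|<\infty$, $\overline\lambda_l\to0$ and $\epsilon_l\to0$. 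The rate is essentially unchanged: from $M_{l,\epsilon_l}\ge\tfrac{\epsilon_l}{1-\epsilon_l}M_l-1$ one gets $\tfrac1l\log M_{l,\epsilon_l}\ge\tfrac1l\log M_l+\tfrac1l\log\tfrac{\epsilon_l}{1-\epsilon_l}-\tfrac1l$, and $\tfrac1l\log\epsilon_l=-\tfrac1l\log l\to0$, so $\liminf_l\tfrac1l\log M_{l,\epsilon_l}\ge a-\eta-\gamma$. Letting $\eta,\gamma\to0$ shows that every rate strictly below $a$ is $0$-achievable under the maximal error criterion, i.e.\ $C_C(\mathcal{W})\ge a=\max_{p'\in\pr(\mathbf{X})}\min_{t\in T}\chi(p',W_t)$.

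I do not expect a real obstacle here; the only point that needs a moment's thought is the choice of $\epsilon_l$, which must tend to $0$ (so that the weak-capacity requirement $\limsup_l(\text{maximal error})\le0$ is met) yet decay slowly enough on the exponential scale, $\tfrac1l\log\epsilon_l\to0$, so that the rate loss incurred in Lemma~\ref{average_to_maximum} vanishes; any such sequence works. It is also worth recording that this step uses the finiteness of $T$ (through the factor $|T|$ in Lemma~\ref{average_to_maximum}) and the fact that the average error of the code supplied by Theorem~\ref{comp_cq_direct_part} is $o(1)$ — in fact exponentially small; for an infinite compound cq-channel one would first reduce to a finite approximating subfamily exactly as in the converse part of the proof of Theorem~\ref{comp_cq_direct_part}.
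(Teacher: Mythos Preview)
Your proposal is correct and follows essentially the same route as the paper: obtain average-error codes with exponentially small error for the finite compound channel (you go directly through Lemmas~\ref{lemma:existence-of-statistical-test} and~\ref{lemma:statistical-test-to-code}, the paper simply cites Theorem~\ref{comp_cq_direct_part}), then apply the expurgation Lemma~\ref{average_to_maximum} with a vanishing $\epsilon_l$ to pass to maximal error. The only cosmetic difference is your choice $\epsilon_l=1/l$ versus the paper's $\epsilon_l=2^{-l\delta/3}$; both satisfy the two requirements you correctly identify ($\epsilon_l\to0$ and $\tfrac1l\log\epsilon_l\to0$), and your remark that any such sequence works is exactly the point.
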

\begin{proof}
For an arbitrary number $\delta > 0$, we show, that 
\begin{align}
 \underset{p \in \pr(\mathbf{X})}{\max}\, \underset{t \in T}{\min} \, \chi(p,W_t) - \delta
\end{align}
is an achievable rate. Let $\{\mathcal{C}_l\}_{l \in \nn}$, $\mathcal{C}_l := (u_m^l,D_m^l)_{m=1}^{M_l} \forall l \in \nn$,
 be a sequence of $(l,M_l)$-codes with 
\begin{align}
 \liminf_{l \to \infty} \frac{1}{l}\log M_l \geq \underset{p \in \pr(\mathbf{X})}{\max}\, \underset{t \in T}{\min} \, \chi(p,W_t) - \frac{1}{\delta}.
\end{align}
and
\begin{align}
 \max_{t \in T} \frac{1}{M_l} \sum_{m=1}^{M_l} \tr\left(W_t^{\otimes l}(u_m^l)(\eins_{\hr^{\otimes l}}-D_m^l)\right) \leq \lambda_l
\end{align}
for every $l \in \nn$, where $\lim_{l \rightarrow \infty} \lambda_l = 0$. Such codes exist by virtue of Theorem \ref{comp_cq_direct_part}. 
Because of Lemma \ref{average_to_maximum}, we find for each $l \in \nn$  a subcode 
$\tilde{\mathcal{C}}_l := (u_{m_i}^l,D_{m_i}^l)_{i \in [\tilde{M}_l]} \subseteq \mathcal{C}_l$ of size
 $\widetilde{M}_l := \lfloor \frac{\epsilon_l}{1 - \epsilon_l}M_l \rfloor$ and maximal error
 \begin{align}
  \max_{t \in T}\ \max_{i \in [\widetilde{M}_l]} \tr\left(W_t^{\otimes l}(u_{m_i}^l)(\eins_{\hr^{\otimes l}}-D_{m_i}^l)\right) \leq (\lambda_l + \epsilon_l)|T|. \label{comp_cq_corr_1}
 \end{align}
with the sequence $(\epsilon_l)_{l=1}^\infty$ defined by $\epsilon_l := 2^{-l\frac{\delta}{3}}$ f.a. $l \in \nn$,
 it is clear that we find a sequence of $(l,\tilde{M}_l)$-subcodes $\{\tilde{\mathcal{C}}_l\}_{l\in \nn}$, 
where $\widetilde{\mathcal{C}}_l := (u_{m_i}^l,D_{m_i}^l)_{i=1}^{\widetilde{M}_l}$ f.a. $l \in \nn$, which fulfills  
 \begin{align}
  \lim_{l \rightarrow \infty}  \max_{t \in T}\ \max_{i \in [\widetilde{M}_l]} \tr\left(W_t^{\otimes l}(u_{m_i}^l)
(\eins_{\hr^{\otimes l}}-D_{m_i}^l)\right) = 0
 \end{align}
 and 
 \begin{align}
  \underset{l \rightarrow \infty}{\liminf}\frac{1}{l} \log \tilde{M}_l = \liminf_{l \to \infty}\frac{1}{l}\log M_l \geq \underset{p \in \pr(\mathbf{X})}{\max}\, \underset{t \in T}{\min} \, \chi(p,W_t) - \delta.  
 \end{align}\qed
\end{proof}
\begin{remark}
 The above corollary, although proven here for finite sets, can be extended to arbitrary compound sets by approximation 
 arguments, as carried out in \cite{bb-compound}. Moreover, an inspection of the proofs in this section shows that the speed of convergence of the errors
remains exponential.
\end{remark}
\begin{section}{\label{sec:avcqc}AVCQC}
\begin{subsection}{\label{subsec:The Ahlswede-Dichotomy for AVcqCs}The Ahlswede-Dichotomy for AVcqCs}
In this section, we prove Theorem \ref{ahlswede-dichotomy-for-avcqcs} and Theorem \ref{avcqc_strong_converse}. The proof of Theorem \ref{ahlswede-dichotomy-for-avcqcs} is 
carried out via robustification of codes for a suitably chosen 
compound cq-channel. 
More specifically, to a given AVcqC $\A$ we take a sequence of codes for the compound channel $\W:=\conv(\A)$ that operates close to the capacity of
$\W$. Thanks to Theorem 
\ref{comp_cq_direct_part}, we know that there exist codes for $\W$ that, additionally, have an exponentially fast decrease of average error
probability. The robustification 
technique then produces a sequence of random codes for $\A$ that have a discrete, but super-exponentially large support and, again, an exponentially
fast decrease of average 
error probability.\\
An intermediate result here is the (tight) lower bound on $\overline C_{A,r}(\A)$.\\
A variant of the elimination technique of \cite{ahlswede-elimination} is proven that is adapted to AVcqCs and reduces the amount of randomness from
super-exponential to polynomial, 
while slowing down the speed of convergence of the average error probability from exponential to polynomial at the same time.\\
Then, under the assumption that $C_{A,d}(\A)>0$ holds, the sender can send the required amount of subexponentially many messages in order to establish
sufficiently much common 
randomness. After that, sender and receiver simply use the random code for $\A$.\\
We now start out on our predescribed way. The following Theorem \ref{robustification-technique} and Lemma \ref{innerproduct-lemma} will be put to good
use, but are far from 
being new so we simply state them without proof.\\
Let, for each $l\in\nn$, $\textrm{Perm}_l$ denote the set of permutations acting on $\{1,\ldots, l\}$. Let us further suppose that we are given a
finite set $\bS$. We use the 
natural action of $\textrm{Perm}_l$ on $\bS^l$ given by $\sigma:\mathbf S^l\rightarrow\mathbf S^l$, $\sigma(s^l)_i:=s_{\sigma^{-1}(i)}$.\\ 
Let $T(l,\bS)$ denote the set of types on $\bS$ induced by the elements of $\bS^l$, i.e. the set of empirical distributions on $\bS$ generated by
sequences in 
$\bS^l$. Then Ahlswede's robustification can be stated as follows.
\begin{theorem}[Robustification technique, cf. Theorem 6 in \cite{ahlswede-coloring}]\label{robustification-technique}\ \\
Let $\bS$ be a set with $|\bS|<\infty$ and $l\in\nn$. If a function $f:\bS^l\to [0,1]$ satisfies
\begin{equation}\label{eq:robustification-1}
 \sum_{s^l\in\bS^l}f(s^l)q(s_1)\cdot\ldots\cdot q(s_l)\ge 1-\gamma
\end{equation}
for all $q\in T(l,\bS)$ and some $\gamma\in [0,1]$, then
\begin{equation}\label{eq:robustification-2}
  \frac{1}{l!}\sum_{\sigma\in\textup{Perm}_l}f(\sigma(s^l))\ge 1-(l+1)^{|\bS  |}\cdot \gamma\qquad \forall s^l\in \bS^l.
\end{equation}
\end{theorem}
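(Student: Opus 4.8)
The plan is to fix the sequence $s^l$, pass to the type $q\in T(l,\bS)$ that it generates, and exploit the fact that the uniform average over $\textup{Perm}_l$ of $f$ evaluated at $s^l$ coincides with the average of $f$ over the whole type class of $q$, a set which the product distribution $q^{\otimes l}$ weights uniformly. The hypothesis (\ref{eq:robustification-1}), applied with this particular $q$, then does the rest once one knows that the type class of $q$ is not too unlikely under $q^{\otimes l}$.

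First I would fix $s^l\in\bS^l$, let $q\in T(l,\bS)$ be its empirical distribution, and write $N(a):=l\,q(a)$ for the number of occurrences of $a\in\bS$ in $s^l$. Let $\mathcal{T}_q:=\{t^l\in\bS^l:\ t^l\text{ has empirical distribution }q\}$ denote the type class of $q$; since two sequences lie in the same $\textup{Perm}_l$-orbit if and only if they are rearrangements of one another, $\mathcal T_q$ is precisely the orbit of $s^l$ under $\textup{Perm}_l$, so that $|\mathcal T_q|=l!/\prod_{a\in\bS}N(a)!$ and the stabilizer of each $t^l\in\mathcal T_q$ has cardinality $\prod_{a\in\bS}N(a)!$. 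Grouping the $\sigma\in\textup{Perm}_l$ according to the value $\sigma(s^l)$ then yields the identity
\begin{equation*}
\frac{1}{l!}\sum_{\sigma\in\textup{Perm}_l}f(\sigma(s^l))=\frac{1}{|\mathcal T_q|}\sum_{t^l\in\mathcal T_q}f(t^l).
\end{equation*}

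Next I would feed the type $q$ into (\ref{eq:robustification-1}). For $t^l\in\mathcal T_q$ the weight $q(t_1)\cdots q(t_l)=\prod_{a\in\bS}q(a)^{N(a)}=:P_q$ is independent of $t^l$, and $\sum_{t^l\in\mathcal T_q}q(t_1)\cdots q(t_l)=|\mathcal T_q|\,P_q=q^{\otimes l}(\mathcal T_q)$ is the $q^{\otimes l}$-probability of the type class. Splitting the sum in (\ref{eq:robustification-1}) into the part over $\mathcal T_q$ and its complement and using $f\le 1$ on the complement gives
\begin{equation*}
1-\gamma\ \le\ P_q\sum_{t^l\in\mathcal T_q}f(t^l)\ +\ \bigl(1-q^{\otimes l}(\mathcal T_q)\bigr),
\end{equation*}
hence $P_q\sum_{t^l\in\mathcal T_q}f(t^l)\ge q^{\otimes l}(\mathcal T_q)-\gamma$. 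Dividing by $q^{\otimes l}(\mathcal T_q)=|\mathcal T_q|\,P_q$ and invoking the identity above gives
\begin{equation*}
\frac{1}{l!}\sum_{\sigma\in\textup{Perm}_l}f(\sigma(s^l))\ \ge\ 1-\frac{\gamma}{q^{\otimes l}(\mathcal T_q)}.
\end{equation*}

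Finally I would invoke the classical method-of-types estimate that the true type is the most likely one, i.e. $q^{\otimes l}(\mathcal T_{q'})\le q^{\otimes l}(\mathcal T_q)$ for every $q'\in T(l,\bS)$; summing this over all types, of which there are at most $(l+1)^{|\bS|}$, gives $1\le(l+1)^{|\bS|}q^{\otimes l}(\mathcal T_q)$, i.e. $q^{\otimes l}(\mathcal T_q)\ge(l+1)^{-|\bS|}$. Substituting this into the last display yields (\ref{eq:robustification-2}), and since $s^l\in\bS^l$ was arbitrary the claim follows. The only mildly delicate points are the combinatorial identity of the first step — matching the permutation average to the type-class average via stabilizer sizes — and the decision to bound $f$ trivially on the complement of the type class; the type-counting inequality used at the end is standard (cf. the bound from \cite{csiszar-koerner} already employed in the proof of Lemma \ref{fact-1}).
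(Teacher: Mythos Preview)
Your argument is correct and is essentially the standard proof of the robustification lemma. Note, however, that the paper does not actually supply its own proof of this statement: it merely records the theorem and refers the reader to \cite{ahlswede-coloring} and \cite{abbn} for proofs. The argument you give---identifying the permutation average with the type-class average, splitting the expectation in (\ref{eq:robustification-1}) along $\mathcal T_q$ and its complement, and then invoking the method-of-types lower bound $q^{\otimes l}(\mathcal T_q)\ge (l+1)^{-|\bS|}$---is precisely the proof one finds in those references, so there is nothing to compare.
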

The original theorem can, together with its proof, be found in \cite{ahlswede-coloring}. A proof of Theorem \ref{robustification-technique} can be
found in \cite{abbn}. The following 
Lemma is borrowed from \cite{ahlswede-elimination}.
\begin{lemma}\label{innerproduct-lemma}
Let $K\in\nn$ and real numbers $a_1,\ldots,a_K,b_1,\ldots, b_K\in [0,1]$ be given. Assume that
\begin{equation}\frac{1}{K}\sum_{i=1}^Ka_i\ge 1-\eps\qquad \textrm{and} \qquad \frac{1}{K}\sum_{i=1}^K b_i\ge 1-\eps,  \end{equation}
hold. Then
\begin{equation}\frac{1}{K}\sum_{i=1}^Ka_ib_i\ge 1-2\eps.  \end{equation}
\end{lemma}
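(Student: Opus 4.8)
The plan is to reduce the inequality to the elementary pointwise bound
\[
a_ib_i \ \ge\ a_i + b_i - 1 \qquad (i \in [K]),
\]
which holds for any two reals $a_i,b_i\in[0,1]$ because then $(1-a_i)(1-b_i)\ge 0$, and expanding this product and rearranging yields exactly the displayed inequality. (Equivalently, reading $1-a_i$ and $1-b_i$ as ``failure weights'' in $[0,1]$, the bound is a union-bound statement: $1-a_ib_i \le (1-a_i)+(1-b_i)$.) This is the only place where the hypothesis $a_i,b_i\in[0,1]$ is used, and it is essential there.

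Having this, I would average the pointwise inequality over $i=1,\ldots,K$ to obtain
\[
\frac{1}{K}\sum_{i=1}^{K} a_ib_i \ \ge\ \frac{1}{K}\sum_{i=1}^{K} a_i \ +\ \frac{1}{K}\sum_{i=1}^{K} b_i \ -\ 1 ,
\]
and then insert the two hypotheses $\frac1K\sum_{i=1}^K a_i\ge 1-\eps$ and $\frac1K\sum_{i=1}^K b_i\ge 1-\eps$ into the right-hand side. This gives $\frac1K\sum_{i=1}^K a_ib_i \ge (1-\eps)+(1-\eps)-1 = 1-2\eps$, which is the assertion.

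There is essentially no obstacle: the argument is a two-line computation and uses no earlier result from the paper. The only point that deserves a word of care is to invoke $a_i,b_i\in[0,1]$ explicitly when passing to $a_ib_i\ge a_i+b_i-1$, since that pointwise inequality fails in general without the interval constraint.
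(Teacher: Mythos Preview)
Your argument is correct and is the standard two-line proof of this inequality. The paper itself does not give a proof of this lemma at all: it explicitly states the result without proof and attributes it to Ahlswede's elimination paper \cite{ahlswede-elimination}, so there is nothing to compare against beyond noting that your elementary union-bound derivation is exactly what one would expect.
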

We now come to the promised application of the robustification technique to AVcqCs.
\begin{lemma}\label{lemma:avcqc-random-achiev}
Let $\A=\{A_s \}_{s\in\bS}$ be an AVcqC. For every $\eta>0$ there is a sequence of $(l,M_l)$-codes for the compound channel $\W:=\conv(\A)$ and an 
$l_0\in\nn$ such that the following two statements are true.
\begin{equation}\label{conversion-1}
\liminf_{l\rightarrow\infty}\frac{1}{l}\log M_l\ge \overline{C}_{\textup C}(\W)  -\eta
\end{equation}
\begin{equation}\label{conversion-2}
 \min_{s^l\in\bS^l}\frac{1}{l!}\sum_{\sigma\in\textup{Perm}_l}\frac{1}{M_l}\sum_{i=1}^{M_l}\tr(A_{s^l}(\sigma^{-1}(x^l_i))\sigma^{-1}(D_i^l))\ge 1-
(l+1)^{|\bS|}\cdot 2^{-lc}\qquad \forall l\geq l_0
\end{equation}
with a positive number $c=c(|\mathbf X|,\dim\hr,\A,\eta)$.
\end{lemma}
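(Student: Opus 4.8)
The plan is to combine the compound channel coding theorem (Theorem~\ref{comp_cq_direct_part}) with the exponential error decay available from Lemmas~\ref{lemma:statistical-test-to-code} and~\ref{lemma:existence-of-statistical-test}, and then feed the resulting codes into the robustification technique (Theorem~\ref{robustification-technique}). First I would fix $\eta>0$ and apply Theorem~\ref{comp_cq_direct_part} to the compound cq-channel $\W:=\conv(\A)$: since $\mathbf S$ is finite, $\conv(\A)$ is compact, and the capacity equals $\max_{p}\inf_{W\in\W}\chi(p,W)$. Choosing a maximizing input distribution $p$ and invoking the approximation machinery (Lemma~\ref{lemma:approximation}) together with Lemmas~\ref{lemma:statistical-test-to-code} and~\ref{lemma:existence-of-statistical-test} exactly as in the proof of Theorem~\ref{comp_cq_direct_part}, I obtain for every sufficiently large $l$ an $(l,M_l)$-code $(x_i^l,D_i^l)_{i=1}^{M_l}$ with $\frac{1}{l}\log M_l\ge \overline C_{\textup C}(\W)-\eta$ and, crucially, an \emph{exponential} bound on the average error: there is $c_0>0$ (depending on $|\mathbf X|$, $\dim\hr$, $\A$, $\eta$) with
\begin{equation}
\sup_{W\in\W}\frac{1}{M_l}\sum_{i=1}^{M_l}\tr\big(W^{\otimes l}(x_i^l)(\eins_{\hr^{\otimes l}}-D_i^l)\big)\le 2^{-lc_0}.
\end{equation}
This gives (\ref{conversion-1}) immediately and sets up the application of robustification.

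The key observation for (\ref{conversion-2}) is that for a fixed code the function
\begin{equation}
f(s^l):=\frac{1}{M_l}\sum_{i=1}^{M_l}\tr\big(A_{s^l}(x_i^l)D_i^l\big)
\end{equation}
maps $\bS^l$ to $[0,1]$, and for any type $q\in T(l,\bS)$ the quantity $\sum_{s^l}f(s^l)q(s_1)\cdots q(s_l)$ is precisely the (one minus the) average error of the same code when used over the memoryless channel $A_q^{\otimes l}$, where $A_q:=\sum_{s}q(s)A_s\in\conv(\A)=\W$. Hence the exponential error bound above yields
\begin{equation}
\sum_{s^l\in\bS^l}f(s^l)\,q(s_1)\cdots q(s_l)\ge 1-2^{-lc_0}
\end{equation}
for \emph{every} $q\in T(l,\bS)$, which is exactly hypothesis (\ref{eq:robustification-1}) of Theorem~\ref{robustification-technique} with $\gamma=2^{-lc_0}$. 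Applying the robustification conclusion (\ref{eq:robustification-2}) then gives, for all $s^l\in\bS^l$,
\begin{equation}
\frac{1}{l!}\sum_{\sigma\in\textup{Perm}_l}\frac{1}{M_l}\sum_{i=1}^{M_l}\tr\big(A_{s^l}(\sigma(x_i^l))\,\sigma(D_i^l)\big)\ge 1-(l+1)^{|\bS|}2^{-lc_0},
\end{equation}
after using the covariance identity $\tr(A_{s^l}(\sigma^{-1}(x^l))\,\sigma^{-1}(D))=\tr(A_{\sigma(s^l)}(x^l)\,D)$ (which follows from the tensor-permutation representation in (\ref{tensor_prod_rep}) and $A_{\sigma(s^l)}=\sigma A_{s^l}\sigma^{-1}$) together with the fact that the minimum over $s^l$ is invariant under $s^l\mapsto\sigma(s^l)$. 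Re-indexing the permutation sum (replace $\sigma$ by $\sigma^{-1}$) puts this in the exact form of (\ref{conversion-2}) with $c$ any constant strictly below $c_0$, absorbing the polynomial factor $(l+1)^{|\bS|}$ for $l\ge l_0$.

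The main obstacle I anticipate is bookkeeping rather than conceptual: one must verify carefully that the permuted code $\{(\sigma(x_i^l),\sigma(D_i^l))\}_i$ is still a legitimate code (the decoding operators still sum to at most $\eins$, which is clear since $\sigma$ acts by a unitary conjugation), and that the covariance relation interchanging the permutation on the channel index sequence $s^l$ and the permutation on the tensor factors holds with the correct placement of $\sigma$ versus $\sigma^{-1}$ — the statement in (\ref{conversion-2}) is written with $\sigma^{-1}$ applied to both $x_i^l$ and $D_i^l$, so the re-indexing of the sum over $\textup{Perm}_l$ must be tracked. A secondary point is ensuring the dependence of $c$ is only on $|\mathbf X|$, $\dim\hr$, $\A$, $\eta$: this follows because $c_0$ from Lemma~\ref{lemma:existence-of-statistical-test} (via the choice of $\delta$, $\eta$, $\gamma$ and the net parameter $\alpha_l$) depends only on these quantities and the finite index set $\mathbf S$, and $|\bS|$ contributes only through the polynomial prefactor which is harmless. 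Everything else is a direct substitution into the two black-box results already quoted.
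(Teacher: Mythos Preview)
Your proposal is correct and follows essentially the same route as the paper: obtain compound codes for $\conv(\A)$ with exponentially small average error via Theorem~\ref{comp_cq_direct_part} (and its proof ingredients), identify $\sum_{s^l}f(s^l)\prod_i q(s_i)$ with the success probability against $W_q^{\otimes l}$ for $W_q\in\conv(\A)$, and then apply the robustification Theorem~\ref{robustification-technique} together with the permutation covariance to pass from $f(\sigma(s^l))$ to permuted codewords and decoders. One minor remark: the polynomial factor $(l+1)^{|\bS|}$ already appears explicitly in~(\ref{conversion-2}), so you may take $c=c_0$ directly and need not absorb it into the exponent.
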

\begin{remark}
The above result can be gained for arbitrary, non-finite sets $\bS$ as well. A central idea then is the approximation of $\conv(\A)$ from the outside
by a convex polytope. 
Since $CQ(\mathbf X,\hr)$ is not a polytope itself (except for trivial cases), an additional step consists of applying a depolarizing channel $\cn_p$
and approximate 
$\cn_p(\conv(\A))$, a set which does not touch the boundary of $CQ(\mathbf X,\hr)$, instead of $\conv(\A)$.\\ 
This step can then be absorbed into the measurement operators, i.e. one uses operators $\cn_p^*(D_i^l)$ instead of the original $D_i^l$
($i=1,\ldots,M_l$).\\
A thorough application of this idea can be found in \cite{abbn}, where the robustification technique gets applied in the case of entanglement
transmission over arbitrarily varying 
quantum channels.
\end{remark}
\begin{proof}
According to Lemma \ref{comp_cq_direct_part} there is a sequence of $(l,M_l)$ codes for the compound channel
$\conv(\A)=\{W_q:W_q=\sum_{s\in\bS}q(s)A_s,\ q\in\mathfrak P(\bS)\}$ fulfilling
\begin{eqnarray}
\liminf_{l\rightarrow\infty}\frac{1}{l}\log M_l\geq\overline{C}_{\textup C}(\conv(\A))  -\eta\end{eqnarray}
and
\begin{eqnarray}\exists l_0\in\nn:\
\inf_{W\in\conv(\A)}\frac{1}{M_l}\sum_{i=1}^{M_l}\tr(W^{\otimes l}(x^l_i)D_i^l)\geq1-2^{-lc}\ \forall l\geq l_0.
\end{eqnarray}
The idea is to apply Theorem \ref{robustification-technique}. Let us, for the moment, fix an $\nn\ni l\geq l_0$ and define a function
$f_l:\bS^l\rightarrow[0,1]$ by
\begin{equation}
 f_l(s^l):=\frac{1}{M_l}\sum_{i=1}^{M_l}\tr(A_{s^l}(x^l_i)D_i^l). \label{avcqc_achiev_funct}
\end{equation}
Then for every $q\in\mathfrak P(\bS)$ we have
\begin{equation}
\sum_{s^l\in\bS^l}f_l(s^l)\prod_{i=1}^lq(s_i)=\frac{1}{M_l}\sum_{i=1}^{M_l}\tr(W_q^{\otimes l}(x^l_i)D_i^l)\geq1-2^{-lc}. \label{avcqc_ach_av_to_convex}
\end{equation}
It follows from Theorem \ref{robustification-technique}, that 
\begin{align}
1-(l+1)^{|\bS  |}\cdot 2^{-lc}
& \leq\frac{1}{l!}\sum_{\sigma\in\textup{Perm}_l}f_l(\sigma(s^l)) \\
& =\frac{1}{l!}\sum_{\sigma\in\textup{Perm}_l}\frac{1}{M_l}\sum_{i=1}^{M_l}\tr(A_{
s^l}(\sigma^{-1}(x^l_{i}))\sigma^{-1}(D_{i}^l))\qquad \forall s^l\in \bS^l
\end{align}
holds, where
\begin{equation}
\sigma(B_1\otimes\ldots\otimes B_l):=B_{\sigma^{-1}(1)}\otimes\ldots\otimes B_{\sigma^{-1}(l)}\qquad \forall\ B_1,\ldots,B_l\in\B(\hr) 
\end{equation} 
defines, by linear extension, the usual representation of $\textrm{Perm}_l$ on $\mathcal B(\hr)^{\otimes l}$ and the action of $\textrm{Perm}_l$ on $\bX^l$ is analogous to that on $\bS^l$.\qed
\end{proof}
It is easily seen from the above Lemma \ref{lemma:avcqc-random-achiev} and Theorem \ref{comp_cq_direct_part}, that the following theorem holds.
\begin{theorem}
For every AVcqC $\A$, 
\begin{align}
\overline C_{A,r}(\A)\geq\overline C_{C}(\conv(\A))=\max_{p\in\pr(\bX)}\inf_{A\in\conv(\A)}\chi(p,A). \label{avcqc_rand_dir}
\end{align}
\end{theorem}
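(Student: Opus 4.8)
The equality on the right is nothing but Theorem~\ref{comp_cq_direct_part} applied to the compound cq-channel $\conv(\A)$, so the only real content is the inequality $\overline C_{A,r}(\A)\ge\overline C_{C}(\conv(\A))$. The plan is to upgrade the permutation-robustified deterministic codes produced by Lemma~\ref{lemma:avcqc-random-achiev} into a genuine random code for $\A$, where the randomness is simply a uniform choice of permutation in $\textup{Perm}_l$.

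First I would fix $\eta>0$ and invoke Lemma~\ref{lemma:avcqc-random-achiev} to obtain a sequence of $(l,M_l)$-codes $(x^l_i,D^l_i)_{i=1}^{M_l}$ for $\W:=\conv(\A)$, an index $l_0\in\nn$ and a constant $c>0$ such that (\ref{conversion-1}) and (\ref{conversion-2}) hold. For $l\ge l_0$ I define $\mu_l$ to be the uniform distribution on the (at most $l!$) permuted codes $\bigl(\sigma^{-1}(x^l_i),\,\sigma^{-1}(D^l_i)\bigr)_{i=1}^{M_l}$, $\sigma\in\textup{Perm}_l$, and for $l<l_0$ I take a trivial code. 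Then I would check that each $\mu_l$ is a bona fide $(l,M_l)$-random code in the sense of Definition~\ref{def:avc_rand-code}: since conjugation by a permutation unitary is positive and unital, $0\le\sigma^{-1}(D^l_i)$ and $\sum_{i=1}^{M_l}\sigma^{-1}(D^l_i)=\sigma^{-1}\bigl(\sum_{i=1}^{M_l}D^l_i\bigr)\le\eins_{\hr^{\otimes l}}$, so every permuted tuple is again a valid $(l,M_l)$-code; and $\mu_l$ has finite support, so it is measurable with respect to any $\Sigma_l$ containing the singletons.

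The remaining step is bookkeeping. For every $s^l\in\bS^l$ and $l\ge l_0$, the $\mu_l$-average of the average error equals $1-\frac{1}{l!}\sum_{\sigma\in\textup{Perm}_l}\frac{1}{M_l}\sum_{i=1}^{M_l}\tr\bigl(A_{s^l}(\sigma^{-1}(x^l_i))\sigma^{-1}(D^l_i)\bigr)$, which by (\ref{conversion-2}) is at most $(l+1)^{|\bS|}2^{-lc}$; maximising over $s^l$ and noting that a polynomial times $2^{-lc}$ tends to zero shows that the error vanishes asymptotically. Together with the rate bound (\ref{conversion-1}) this exhibits $\overline C_{C}(\W)-\eta$ as a $\lambda$-achievable rate for random codes for every $\lambda\in[0,1)$, in particular for $\lambda=0$. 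Hence $\overline C_{A,r}(\A)\ge\overline C_{C}(\conv(\A))-\eta$, and letting $\eta\downarrow 0$ concludes the argument, the equality with $\max_{p}\inf_{A\in\conv(\A)}\chi(p,A)$ being immediate from Theorem~\ref{comp_cq_direct_part}.

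I do not expect a serious obstacle here, since the real analytic work is already packaged into Lemma~\ref{lemma:avcqc-random-achiev} (which in turn rests on Theorem~\ref{comp_cq_direct_part} and Ahlswede's robustification). The only points needing a little care are that the bound (\ref{conversion-2}) is stated only for $l\ge l_0$, which is harmless because achievability is a $\liminf/\limsup$ statement, and the elementary verifications that the permuted decoding operators still form a measurement and that $\mu_l$ is measurable in the chosen $\sigma$-algebra.
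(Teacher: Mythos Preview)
Your proposal is correct and is exactly the argument the paper has in mind: the theorem is stated right after Lemma~\ref{lemma:avcqc-random-achiev} with the remark that it ``is easily seen from the above Lemma~\ref{lemma:avcqc-random-achiev} and Theorem~\ref{comp_cq_direct_part}'', and you have simply spelled out the obvious step of turning the permutation average in (\ref{conversion-2}) into a finitely supported random code $\mu_l$. The additional checks you make (that permuted POVMs are still POVMs, and that $\mu_l$ is measurable) are routine and in line with the paper's conventions.
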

In the following we give a proof of the remaining inequality in (\ref{ahlswede-dichotomy-for-avcqcs_1}). In fact, we  
prove the stronger statement Theorem \ref{avcqc_strong_converse}:
\begin{proof}[of Theorem \ref{avcqc_strong_converse}:]
We define $\mathcal{W}:= \conv(\mathcal{A})$. Since $|\bS|$ is finite, this set is compact. 
The function $\chi(\cdot,\cdot)$ is a concave-convex function (see eq. (\ref{def:holevo-chi})), therefore by the Minimax Theorem, 
\begin{align}
    \underset{p \in \pr(\mathbf{X})}{\max}\ \underset{W \in {\mathcal{W}}}{\min} \chi(p,W)
  = \underset{W \in {\mathcal{W}}}{\min}\ \underset{p \in \pr(\mathbf{X})}{\max} \chi(p,W) \label{avcqc_minimax}
\end{align}
holds. Both sides of the equality are well defined, because we are dealing with a compact set. Let an arbitrary $W_q \in \mathcal{W}$ be given by the formula
\begin{align}
 W_q = \sum_{s\in\bS} q(s) A_{s},
\end{align}
where $q\in\pr(\bX)$. Set, for every $l\in\nn$, $q^{\otimes l}(s^l):=\prod_{i=1}^lq(s_i)$. Let $\lambda\in[0,1)$, $\delta>0$ and $(\mu_l)_{l\in\nn}$ be a sequence of $(l,M_l)$-random codes such 
that both
\begin{align}
\liminf_{l\to\infty}\frac{1}{l}\log M_l= \overline C_{A,r}(\A,\lambda)-\delta
\end{align}
and
\begin{align}
\liminf_{l\to\infty}\min_{s^l\in\bS^l}\frac{1}{M_l}\sum_{i=1}^{M_l}\tr(A_{s^l}(u^l_i)D^l_i)d\mu_l((u^l_i,D^l_i)_{i=1}^{M_l})\geq1-\lambda.
\end{align}
For every $l\in\nn$ it holds that
\begin{align}
&\int \sum_{i=1}^{M_l} \tr(W_q^{\otimes l}(u_i^l)D_i^l)\ d\mu_l((u_i^l,D_i^l)_{i=1}^{M_l}) \\
&=\sum_{s^l \in \bS^l}q^{\otimes l}(s^l) \int \sum_{i=1}^{M_l} \tr(A_{s^l}(u_i^l)D_i^l)\ d\mu_l((u_i^l,D_i^l)_{i=1}^{M_l}) \label{avcqc_str_con_1}\\ 
&\geq\min_{s^l\in\bS^l}\int\sum_{i=1}^{M_l} \tr(A_{s^l}(u_i^l)D_i^l)\ d\mu_l((u_i^l,D_i^l)_{i=1}^{M_l}),
\end{align}
which shows, that 
\begin{align}
\liminf_{l\to\infty}\int\frac{1}{M_l}\sum_{i=1}^{M_l} \tr(W_q^{\otimes l}(u_i^l)\D_i^l)\ d\mu_l((u_i^l,D_i^l)_{i=1}^{M_l})\geq1-\lambda
\end{align}
holds. It follows the existence of a sequence $(u^l_i,D^l_i)_{l\in\nn}$ of $(l,M_l)$-codes for the discrete memoryless cq-channel $W_q$ satisfying
\begin{align}
&\liminf_{l\to\infty}\frac{1}{l}\log M_l=\overline C_{A,d}(\A,\lambda)-\delta\qquad \textrm{and} \\
&\liminf_{l\to\infty}\frac{1}{M_l}\sum_{i=1}^{M_l}\tr(W_q^{\otimes l}(u^l_i)D^l_i)\geq1-\lambda. 
\end{align}
By virtue of the strong converse theorem for single cq-DMCs given in \cite{winter99} (also to be found and independently obtained in \cite{ogawa-nagaoka}), for any $\lambda \in [0,1)$, $\delta>0$ it follows
\begin{align}
\overline C_{A,r}(\A,\lambda)-\delta&=\liminf_{l\to\infty}\frac{1}{l} \log M_l\\
&\leq \underset{p \in \pr(\mathbf{X})}{\max}(p,W_q)
\end{align}
and, since $W_q\in\mathcal W$ was arbitrary,
\begin{align}
\overline C_{A,r}(\A,\lambda)-\delta&\leq \min_{W \in {\mathcal{W}}}\underset{p \in \pr(\mathbf{X})}{\max}\chi(p,W) \label{avcqc_str_con_2} \\
&= \underset{p \in \pr(\mathbf{X})}{\max}\min_{W \in \mathcal{W}}\chi(p,W). \label{avcqc_str_con_3}
\end{align}
The equality in (\ref{avcqc_str_con_3}) holds by 
(\ref{avcqc_minimax}). Since $\delta$ was an arbitrary positive number, we are done.\qed
\end{proof}
The following lemma contains the essence of the derandomization procedure.
\begin{lemma}[Random Code Reduction]\label{random-code-reduction}
Let $\A=\{A_s\}_{s\in\mathbf S}$ be an AVcqC, $l\in\nn$, $\mu_l$ an $(l,M_l)$ random code for $\A$ and $1>\eps_l\geq0$ with
\begin{equation}\label{eq:random-code-reduction}
e(\mu_l,\A):=\inf_{s^l\in\bS^l}\int \frac{1}{M_l}\sum_{i=1}^{M_l}\tr(A_{s^l}(x^l_i)D_i^l)d\mu_l((x^l_i,D_i^l)_{i=1}^{M_l})\ge
1-\eps_l.
\end{equation}
Let $n,m\in\mathbb R$. Then if $4\eps_l\leq l^{-m}$ and $2\log|\bS|< l^{n-m-1}$ there exist $l^n$ $(l,M_l)$-deterministic codes 
$(x^l_{1,j},\ldots,x^l_{M_l,j},D^l_{1,j},\ldots,D^l_{M_l,j})$ ($1\leq j\leq l^n$) for $\A$  such that
\begin{equation}\label{eq:random-code-reduction-a}
\frac{1}{l^n}\sum_{j=1}^{l^n} \frac{1}{M_l}\sum_{i=1}^{M_l}\tr(A_{s^l}(x^l_{i,j})D^l_{i,j})\geq1-l^{-m} \qquad  \forall s^l\in\mathbf S^l.
\end{equation}
\end{lemma}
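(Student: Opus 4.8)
The plan is to use the standard probabilistic (random selection) argument of Ahlswede's elimination technique. The random code $\mu_l$ is a probability measure on the set of deterministic $(l,M_l)$-codes. I would draw $l^n$ codes $C_1,\dots,C_{l^n}$ independently, each according to $\mu_l$, and show that with positive probability the resulting family satisfies \eqref{eq:random-code-reduction-a} simultaneously for all $s^l\in\mathbf S^l$; since the number of state sequences $|\mathbf S|^l$ is only exponential in $l$ while the failure probability for a fixed $s^l$ will be doubly exponentially small, a union bound closes the argument. Concretely, fix $s^l\in\mathbf S^l$ and set $Z_j:=\frac{1}{M_l}\sum_{i=1}^{M_l}\tr(A_{s^l}(x^l_{i,j})D^l_{i,j})$ for the random code $C_j$. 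These are i.i.d. random variables taking values in $[0,1]$, and by \eqref{eq:random-code-reduction} their common expectation is at least $1-\eps_l$.

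The key step is a large-deviation (Markov/Chernoff) bound on $\frac{1}{l^n}\sum_{j=1}^{l^n} Z_j$. Writing $Y_j:=1-Z_j\in[0,1]$ with $\mathbb E Y_j\le \eps_l$, I want $\Pr\bigl[\frac{1}{l^n}\sum_j Y_j > l^{-m}\bigr]$ to be doubly exponentially small in $l$. Using $4\eps_l\le l^{-m}$, so that $\mathbb E\bigl[\frac{1}{l^n}\sum_j Y_j\bigr]\le \eps_l\le \tfrac14 l^{-m}$, a Chernoff bound for sums of independent $[0,1]$-valued variables (e.g. with exponential parameter $t=1$, using $\mathbb E e^{tY_j}\le 1+(e^t-1)\mathbb E Y_j\le \exp((e-1)\eps_l)$) gives
\begin{align}
\Pr\!\left[\frac{1}{l^n}\sum_{j=1}^{l^n} Y_j > l^{-m}\right]
 \le \exp\!\left(-l^n\bigl(l^{-m} - (e-1)\eps_l\bigr)\right)
 \le \exp\!\left(-c\, l^{\,n-m}\right)
\end{align}
for a suitable constant $c>0$, once $l$ is large enough that the bracketed term is bounded below by a constant multiple of $l^{-m}$ (which holds since $(e-1)\eps_l\le \tfrac{e-1}{4} l^{-m}<l^{-m}$). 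Then the probability that \eqref{eq:random-code-reduction-a} fails for \emph{some} $s^l$ is at most $|\mathbf S|^l\cdot\exp(-c\,l^{n-m}) = \exp\!\bigl(l\log|\mathbf S| - c\, l^{\,n-m}\bigr)$, and the hypothesis $2\log|\mathbf S|<l^{\,n-m-1}$, i.e. $l\log|\mathbf S| < \tfrac12 l^{\,n-m}$, forces this exponent to be negative. Hence the failure probability is strictly less than $1$, so a good family of $l^n$ deterministic codes exists.

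The main obstacle — really just bookkeeping — is matching the precise numerical hypotheses ($4\eps_l\le l^{-m}$ and $2\log|\mathbf S|<l^{\,n-m-1}$) to whatever explicit Chernoff constant one uses, so that the two competing exponents $l\log|\mathbf S|$ and $c\,l^{\,n-m}$ come out on the right side. One should also note that the deterministic codes produced are among the support points of $\mu_l$, so each $(x^l_{i,j},D^l_{i,j})_i$ is indeed a legitimate $(l,M_l)$-code (the measurement condition $\sum_i D^l_{i,j}\le\eins$ is inherited from $\mu_l$ being supported on $\M_{M_l}(\hr^{\otimes l})$-valued codes). No issue of measurability arises beyond what was already arranged in Section~\ref{subsec:avcqc-definitions}, since the map $(x^l_i,D^l_i)_i\mapsto Z_j$ for fixed $s^l$ is measurable w.r.t.\ $\Sigma_l$ by construction.
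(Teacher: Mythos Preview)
Your proposal is correct and follows essentially the same route as the paper: draw $K=l^n$ i.i.d.\ codes from $\mu_l$, apply a Markov/Chernoff bound to the $[0,1]$-valued average success variables (the paper uses $2^t\le 1+t$ where you use $e^t\le 1+(e-1)t$, which is the same idea with a different base), and finish with a union bound over $|\bS|^l$ state sequences. One small cleanup: the inequality $l^{-m}-(e-1)\eps_l\ge (1-\tfrac{e-1}{4})\,l^{-m}$ already holds for \emph{every} $l$ satisfying $4\eps_l\le l^{-m}$, so no ``for $l$ large enough'' caveat is needed there; also be careful that in this paper $\log$ is base~$2$, so $|\bS|^l\exp(-c\,l^{n-m})$ is not literally $\exp(l\log|\bS|-c\,l^{n-m})$, though the conclusion survives once the constants are converted.
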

\begin{proof}
Set $\eps:=2l^{-m}$. By the assumptions of the lemma we have
\begin{equation}
e(\mu_l,\A):=\min_{s^l\in\bS^l}\int \frac{1}{M_l}\sum_{i=1}^{M_l}\tr(A_{s^l}(x^l_i)D_i^l)d\mu_l((x_i^l,D_i^l)_{i=1}^{M_l})\ge
1-\eps_l.
\end{equation}
For a fixed $K\in\nn$, consider $K$ independent random variables $\Lambda_i$ with values in $((\mathbf X^l)^{M_l})\times\M_{M_l}(\hr^{\otimes l}))$
which are distributed according to 
$\mu_l$.\\
Define, for each $s^l\in\bS^l$, the function $p_{s^l}:((\mathbf X^l)^{M_l})\times\M_{M_l}(\hr^{\otimes l}))\rightarrow[0,1]$,\\ 
$(x^l_1,\ldots,x^l_{M_l},D_1^l,\ldots,D_{M_l}^l)\mapsto\frac{1}{M_l}\sum_{i=1}^{M_l}\tr(A_{s^l}(x^n_i)D_i^l)$.\\
We get, by application of Markovs inequality, for every $s^l\in\bS^l$:
\begin{eqnarray}
 \mathbb P(1-\frac{1}{K}\sum_{j=1}^{K}p_{s^l}(\Lambda_j)\geq\eps/2)&=& \mathbb P(2^{ K- \sum_{j=1}^Kp_{s^l}(\Lambda_j)}\geq2^{K\eps/2})\\
&\leq&2^{-K\eps/2}\mathbb E(2^{ (K-\sum_{j=1}^Kp_{s^l}(\Lambda_j))}).
\end{eqnarray}
The $\Lambda_i$ are independent and it holds $2^{t}\leq1+t$ for every $t\in[0,1]$ as well as $\log(1+\eps_l)\leq2\eps_l$ and so we get
\begin{eqnarray}
 \mathbb P(1-\frac{1}{K}\sum_{j=1}^{K}p_{s^l}(\Lambda_j)\geq\eps/2)&\leq& 2^{-K\eps/2}\mathbb E(2^{K-\sum_{j=1}^Kp_{s^l}(\Lambda_j)})\\
&=&2^{-K\eps/2}\mathbb E(2^{\sum_{j=1}^K(1-p_{s^l}(\Lambda_j))})\\
&=&2^{-K\eps/2}\mathbb E(2^{(1-p_{s^l}(\Lambda_1))})^K\\
&\leq&2^{-K\eps/2}\mathbb E(1+(1-p_{s^l}(\Lambda_1)))^K\\
&\leq&2^{-K\eps/2}(1+\eps_l)^K\\
&\leq&2^{-K\eps/2}2^{K\eps/4}\\
&=&2^{-K\eps/4}.
\end{eqnarray}
Therefore,
\begin{equation}
  \mathbb P(\frac{1}{K}\sum_{j=1}^{K}p_{s^l}(\Lambda_j)\geq1-\eps/2)\geq1-|\bS|^l2^{-K\eps/4}.
\end{equation}
By assumption, $2\log|\bS|\leq l^{(n-m-1)}$ and thus the above probability is larger than zero, so there exists a realization 
$\Lambda_1,\ldots,\Lambda_{l^n}$ such that
\begin{equation}
\frac{1}{l^n}\sum_{i=1}^{l^n}\frac{1}{M_l}\tr(W_{s^l}(x^l_i)D_i^l)\geq1-\frac{1}{l^m}.
\end{equation}\qed
\end{proof}
Now we pass to the proof of Theorem \ref{ahlswede-dichotomy-for-avcqcs}. If $C_{A,r}(\A)=0$ or $C_{A,d}(\A)=0$ there is nothing to prove. So, let
$\overline C_{A,r}(\A)>0$ and 
$\overline C_{A,d}(\A)>0$. Then we know that, to every $l\in\nn$, there exists a deterministic code 
for $\A$ that, for sake of simplicity, is denoted by $(x^l_1,\ldots,x^l_{l^2},D_1,\ldots,D_{l^2})$, such that
\begin{align}
\min_{s^l\in\bS^l}\frac{1}{l^2}\sum_{i=1}^{l^2}\tr(A_{s^l}(x^l_i)D_i^l)\geq1-\eps_l
\end{align}
and $\eps_l\searrow0$. Also, by Lemma \ref{lemma:avcqc-random-achiev}, to every $\eps>0$ there is a sequence $(\mu_m)_{m\in\nn}$ of random codes for
transmission of messages over $\A$ using the average error 
probability criterion and an $m_0\in\nn$ such that
\begin{align}
&\liminf_{m\to\infty}\frac{1}{m}\log M_m\geq \overline C_{A,r}(\A)-\eps\\
\int\frac{1}{M_m}\sum_{j=1}^{M_m}\tr(A_{s^m}(x^m_j)&D_j^l)d\mu_m((x^m_1,\ldots,x^m_{M_m},D_1^l,\ldots,D_{M_m}^l))\geq1-2^{-mc}\ 
\end{align}
for all $m  \geq m_0$ with a suitably chosen (and possibly very small) $c>0$. This enables us to define the following sequence of codes:
Out of the random code, by application of Lemma \ref{random-code-reduction} and for a suitably chosen $m_1\geq m_0$ such that the preliminaries of
Lemma \ref{random-code-reduction} 
are fulfilled, we get for 
every $m\geq m_1$ a discrete random code supported only on the set $\{(y^m_{1,j},\ldots,y^m_{M_m,j},E_{1,j},\ldots,E_{M_m,j})\}_{j=1}^{m^2}$ such
that 
\begin{align}
\liminf_{m\to\infty}\frac{1}{l}\log M_m&\geq \overline C_{A,r}(\A)-\eps\\
\frac{1}{m^2}\sum_{j=1}^{m^2}\frac{1}{M_m}\sum_{i=1}^{M_m}\tr(A_{s^m}(y^m_{i,j})E_{i,j})&\geq1-\frac{1}{m}\qquad \forall m\geq m_1.
\end{align}
Now all we have to do is combine the two codes: For $l,m\in\nn$, define an $(l+m,\frac{1}{l^2M_m})$-deterministic code with the doubly-indexed message
set 
$\{i,j\}_{i=1,j=1}^{l^2,M_m}$ by the following sequence:
\begin{align}
((x_i^l,y_{ij}^m),D_i^l\otimes E_{ij})_{i=1,j=1}^{l^2,M_m}.
\end{align}
For the average success probability, by Lemma \ref{innerproduct-lemma} it then holds
\begin{align}
 \min_{(s^l,s^m)\in\bS^{l+m}}\frac{1}{l^2M_m}\sum_{i=1}^{l^2}\sum_{j=1}^{M_m}\tr(A_{(s^l,s^m)}((x^l_i,y^m_{ij}))D_i^l\otimes
E_{ij})\geq1-2\max\{\eps_l,\frac{1}{m}\}.
\end{align}
Now let there be sequences $(l_t)_{t\in\nn}$ and $(m_t)_{t\in\nn}$ such that $l_t=o(l)$ and $l_t+m_t=t$ f.a. $t\in\nn$. Define a sequence of
$(t,\frac{1}{l_t^2M_{m_t}})$-deterministic 
codes $(\hat x^t_1,\ldots,\hat x^t_{l_t^2M_{m_t}},\hat D_1,\ldots,\hat D_{l_t^2,M_{m_t}})$ for $\A$ by applying, for each $t\in\nn$, the above
described procedure with $m=m_t$ 
and $l=l_t$. Then 
\begin{align}
 \liminf_{t\to\infty}\frac{1}{t}\log l_t^2M_{m_t}\geq R\qquad \textup{and}\\
\lim_{t\to\infty}\min_{s^t\in\bS^t}\frac{1}{l_t^2M_{m_t}}\sum_{k=1}^{l_t^2M_{m_t}}\tr(A_{s^t}(\hat x_k^t)\hat D_k)=1.
\end{align}

\end{subsection}
\begin{subsection}{\label{subsec:maximal-and-zero-error}M-Symmetrizability}
In this section, we prove Theorem \ref{theorem:c-det=0-for-maximal-error}. 
\begin{proof}
We adapt the strategy of \cite{kiefer-wolfowitz}, that has already been successfully used in \cite{abbn}. Assume $\A$ is m-symmetrizable. Let
$l\in\nn$. Take any $a^l,b^l\in\mathbf X^l$. 
Then there exist corresponding probability distributions $p(\cdot|a_1),\ldots,p(\cdot|a_l),p(\cdot|b_1),\ldots,p(\cdot|b_l)\in\pr(\bS)$ such that the
probability distributions 
 $p(\cdot|a^l),p(\cdot|b^l)\in\mathfrak P(\bS^l)$ defined by $p(s^l|a^l):=\prod_{i=1}^lp(s_i|a_i)$, $p(s^l|b^l):=\prod_{i=1}^lp(s_i|b_i)$ satisfy
\begin{align}
 \sum_{s^l\in\bS^l}p(s^l|a^l)A_{s^l}(a^l)= \sum_{s^l\in\bS^l}p(s^l|b^l)A_{s^l}(b^l)
\end{align}
and thereby lead, for every two measurement operators $D_a,D_{b}\geq0$ satisfying $D_a+D_{b}\leq\eins_{\hr^{\otimes l}}$, to the following inequality:
\begin{align}
\sum_{s^l\in\bS^l}p(s^l|a^l)\tr(A_{s^l}(a^l)D_a)&=\sum_{s^l\in\bS^l}p(s^l|b^l)\tr(A_{s^l}(b^l)D_a)\\
&\leq \sum_{s^l\in\bS^l}p(s^l|b^l)\tr(A_{s^l}(b^l)(\mathbf1_{\hr^{\otimes l}}-D_b))\\
&=1-\sum_{s^l\in\bS^l}p(s^l|b^l)\tr(A_{s^l}(b^l)D_b).
\end{align}
Let a sequence of $(l,M_l)$ codes for message transmission over $\A$ using the maximal error probability criterion satisfying $M_l\geq2$ and 
$\min_{i\in[M_l]}\min_{s^l\in\bS^l}\tr(A_{s^l}(x^l_i)D^l_i)=1-\eps_l$ be given, where $\eps_l\searrow0$. Then from the above inequality we get
\begin{align}
 1-\eps_l\leq1-(1-\eps_l)\qquad &\Leftrightarrow\qquad\eps_l\geq1/2.
\end{align}
Therefore, $C_{A,d}(\A)=0$ has to hold.\\
Now, assume that $\A$ is not m-symmetrizable. Then there are $x,y\in\bX$ such that 
\begin{align}
\conv(\{\A_s(x)\}_{s\in\bS})\cap \conv(\{\A_s(y)\}_{s\in\bS})=\emptyset.
\end{align}
The rest of the proof is identical to that in \cite{abbn} with $\hat l$ set to one.\qed
\end{proof}

\end{subsection}
\begin{subsection}{\label{subsec:zero-error}Relation to the zero-error capacity}
A remarkable feature of classical arbitrarily varying channels is their connection to the zero-error capacity of (classical) d.m.c.s, which was
established by Ahlswede in \cite[Theorem 3]{ahlswede-note}.\\
We shall first give a reformulation of Ahlswede's original result and then consider two straightforward generalizations of it result, 
one for cq-channels, the other for 
quantum channels. In both cases it is shown, that no such straightforward generalization is possible.
\begin{paragraph}{Ahlswede's original result.}
Ahlswede's result can be formulated using the following notation. For two finite sets $\bA,\bB$, $C(\bA,\bB)$ stands for the set of channels from $\bA$ to $\bB$, i.e. each element 
of $W\in C(\bA,\bB)$ defines a set of output probability distributions $\{W(\cdot|a)\}_{a\in\bA}$. With slight abuse of notation, for each $D\subset \bB$ and $a\in\bA$, 
$W(D|a):=\sum_{b\in D}W(b|a)$. The (finite) set of extremal points of the (convex) set $C(\bA,\bB)$ will be written $E(\bA,\bB)$.\\
For two channels $W_1,W_2\in C(\bA,\bB)$, their product $W_1\otimes W_2\in C(\bA^2,\bB^2)$ 
is defined through $W_1\otimes W_2(b^2|a^2):=W_1(b_1|a_1)W_2(b_2|a_2)$. An arbitrarily varying channel (AVC) is, in this setting, defined through a set 
$\mathbb W=\{W_s\}_{s\in\bS}\subset C(\bA,\bB)$ (we assume $\bS$ and, hence, $|\mathbb W|$, to be finite). The different realizations of the channel are written
\begin{equation}
W_{s^l}:=W_{s_1}\otimes\ldots\otimes W_{s_l}\qquad (s^l\in\bS^l) 
\end{equation}
and, formally, the AVC $\mathbb W$ consists of the set $\{W_{s^l}\}_{s^l\in\bS^l,\ l\in\nn}$.\\
An $(l,M_l)$-code for the AVC $\mathbb W$ is given by a set 
$\{a^l_i\}_{i=1}^{M_l}\subset \bA^l$ called the 'codewords' and a set $\{D^l_i\}_{i=1}^{M_l}$ of subsets of $\bB^l$ called the 'decoding sets',
 that satisfies 
$D^l_i\cap D^l_j=\emptyset,\ i\neq j$.\\
A nonnegative number $R\in\rr$ is called an achievable maximal-error rate for the AVC $\mathbb W$, 
if there exists a sequence of $(l,M_l)$ codes for $\mathbb W$ such that both 
\begin{equation}
 \liminf_{l\to\infty}\frac{1}{l}\log M_l\geq R\qquad \mathrm{and}\qquad \lim_{l\to\infty}\min_{s^l\in\bS^l}\min_{1\leq i\leq M_l}W_{s^l}(D^l_i|x^l_i)=1.
\end{equation}
The (deterministic) maximal error capacity $C_{\mathrm{max}}(\mathbb W)$ of the AVC $\mathbb W$ is, as usually, defined as the supremum over all achievable maximal-error rates for $\mathbb W$.\\
Much stronger requirements concerning the quality of codes can be made. An $(l,M_l)$-code is said to have zero error for the AVC $\mathbb W$,
 if for all $1\leq i\leq M_l$ and 
$s^l\in\bS^l$ the equality $W_{s^l}(D^l_i|x^l_i)=1$ holds.\\
The zero error capacity $C_0(\mathbb W)$ of the AVC $\mathbb W$ is defined as 
\begin{equation}
C_0(\mathbb W):=\lim_{l\to\infty}\max\{\frac{1}{l}\log M_l:\exists\ (l,M_l)\mathrm{-code\ with\ zero\ error\ for\ }\mathbb W\}.
\end{equation}
The above definitions carry over to single channels $W\in C(\bA,\bB)$ by identifying $W$ with the set $\{W\}$.\\
In short form, the connection \cite[Theorem 3]{ahlswede-note} between the capacity of certain arbitrarily varying channels and the zero-error capacity of stationary memoryless 
channels can now be reformulated as follows:
\begin{theorem}\label{theorem:ahlswede-connection}
Let $W\in C(\bA,\bB)$ have a decomposition $W=\sum_{s\in\bS}q(s)W_s$, where $\{W_s\}_{s\in\bS}\subset E(\bA,\bB)$ and $q(s)>0\ \forall s\in\bS$. Then for the AVC 
$\mathbb W:=\{W_s\}_{s\in\bS}$:
\begin{equation}
C_0(W)=C_{\mathrm{max}}(\mathbb W).\label{eqn:ahlswede-connection-1}
\end{equation}
Conversely, for every AVC $\mathbb W=\{W_s\}_{s\in\bS}\subset E(\bA,\bB)$ and every $q\in\mathfrak P(\bS)$ with $q(s)>0\ \forall s\in\bS$, equation (\ref{eqn:ahlswede-connection-1}) 
holds for the channel $W:=\sum_{s\in\bS}q(s)W_s$.
\end{theorem}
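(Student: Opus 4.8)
The plan is to prove both halves at once. Since the ``conversely'' part is nothing but the first part read with the decomposition $W=\sum_{s\in\bS}q(s)W_s$ running in the opposite direction, it suffices to fix a finite family $\{W_s\}_{s\in\bS}\subset E(\bA,\bB)$ together with $q\in\mathfrak P(\bS)$ having $q(s)>0$ for all $s$, to set $W:=\sum_{s\in\bS}q(s)W_s$ and $\mathbb W:=\{W_s\}_{s\in\bS}$, and to show $C_0(W)=C_{\mathrm{max}}(\mathbb W)$. The structural observation driving everything is the shape of $E(\bA,\bB)$: the set $C(\bA,\bB)=\prod_{a\in\bA}\mathfrak P(\bB)$ is a Cartesian product of simplices, so its extreme points are exactly the tuples whose $a$-th entry is an extreme point of $\mathfrak P(\bB)$, i.e. a point mass. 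Hence every $W_s\in E(\bA,\bB)$ is \emph{deterministic}: there is a map $f_s\colon\bA\to\bB$ with $W_s(\cdot|a)=\delta_{f_s(a)}$.

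First I would exploit determinism to reduce the maximal-error capacity of $\mathbb W$ to its zero-error capacity. For deterministic $W_s$ the block channels $W_{s^l}$ are deterministic as well, so $W_{s^l}(D^l_i|x^l_i)\in\{0,1\}$ for any decoding sets, and therefore $\min_{s^l\in\bS^l}\min_{1\le i\le M_l}W_{s^l}(D^l_i|x^l_i)$ is $\{0,1\}$-valued for each $l$. A sequence of codes for which this quantity tends to $1$ must thus satisfy it with equality for all large $l$, i.e. it is eventually a zero-error code for $\mathbb W$; conversely a zero-error code trivially has this quantity equal to $1$. This gives $C_{\mathrm{max}}(\mathbb W)=C_0(\mathbb W)$, and it is exactly here that extremality is indispensable.

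Next I would match the zero-error codes of $\mathbb W$ with those of the single channel $W$. Because $q(s)>0$ for every $s$, the identity
\[
\{\,f_s(a):s\in\bS\,\}=\{\,b\in\bB:W(b|a)>0\,\}=\supp(W(\cdot|a))
\]
holds for every $a\in\bA$ (the sum $W(b|a)=\sum_{s:\,f_s(a)=b}q(s)$ is positive iff some $s$ maps $a$ to $b$). Consequently, for a codeword $x^l$ the full set of outputs the AVC can produce is $\prod_{t=1}^l\{f_s(x_t):s\in\bS\}=\prod_{t=1}^l\supp(W(\cdot|x_t))=\supp(W^{\otimes l}(\cdot|x^l))$. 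Hence a family $(x^l_i,D^l_i)_{i=1}^{M_l}$ with pairwise disjoint $D^l_i$ is a zero-error code for $\mathbb W$ precisely when $D^l_i\supseteq\supp(W^{\otimes l}(\cdot|x^l_i))$ for all $i$, which is verbatim the condition for it to be a zero-error code for $W$. The two families of codes coincide, so $C_0(\mathbb W)=C_0(W)$, and combining with the previous step yields $C_{\mathrm{max}}(\mathbb W)=C_0(W)$.

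I expect the genuine content of the argument to be conceptual rather than computational: the crucial step is recognizing that admitting only \emph{extreme} channels forces all channel realizations to be functions, so the adversary can only shuffle a codeword among the symbols lying in the $W$-support of its letters, and the maximal-error constraint degenerates into the zero-error constraint. The only point demanding a little care is the passage from ``error $\to 1$'' to ``error $=1$ for large $l$'', which is valid solely because of the two-point range $\{0,1\}$; dropping the extremality hypothesis destroys this step and, with it, the theorem.
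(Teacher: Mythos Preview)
The paper does not supply its own proof of this theorem; it is quoted as Ahlswede's result \cite{ahlswede-note} (Theorem~3 there), and the surrounding text only alludes to the mechanism (``the decoding sets \ldots\ have to be mutually disjoint, together with the perfect distinguishability of different non-equal outputs of the special channels''). Your argument is correct and is precisely a clean realization of that sketch: extremality in $C(\bA,\bB)=\prod_{a\in\bA}\mathfrak P(\bB)$ forces each $W_s$ to be deterministic, whence the maximal-error success probability is $\{0,1\}$-valued and $C_{\mathrm{max}}(\mathbb W)=C_0(\mathbb W)$; the full-support assumption on $q$ then gives $\{f_s(a):s\in\bS\}=\supp W(\cdot|a)$, so the possible AVC outputs of $x^l$ equal $\supp W^{\otimes l}(\cdot|x^l)$ and the zero-error codes for $\mathbb W$ and for $W$ coincide. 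There is nothing further to compare against.
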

\begin{remark}
Let us note at this point, that the original formulation of the theorem did not make reference to extremal points of the set of channels, but rather used the equivalent 
notion ''channels of $0-1$-type``.
\end{remark}
\begin{remark}
By choosing $W\in E(\bA,\bB)$, one gets the equality $C_0(W)=C_{\textrm{max}}(W)$. The quantity $C_{\textrm{max}}(W)$ being well-known and easily computable, it may seem that 
Theorem \ref{theorem:ahlswede-connection} solves Shannons's zero-error problem. This is not the case, as one can verify by looking at the famous pentagon channel that was introduced 
in \cite[Figure 2.]{shannon}. The pentagon channel is far from being extremal. That its zero-error capacity is positive \cite{shannon} is due to the fact that it is not a member of 
the relative interior $ri E(\bA,\bB)$.
\end{remark}
Recently, in \cite{abbn}, this connection was investigated with a focus on entanglement and strong subspace transmission over arbitrarily varying
quantum channels. The complete 
problem was left open, although partial results were obtained.
\end{paragraph}
\begin{paragraph}{A no-go result for cq-channels.}
We will show below that, even for message transmission over AVcqCs, there is (in general) no equality between the capacity $C_0(W)$ of a channel $W\in
CQ(\bX,\hr)$ and any AVcqC 
$\A=\{A_s\}_{s\in\bS}$ constructed by choosing the set $\{A_s\}_{s\in\bS}$ to be a subset of the set of extremal points of $CQ(\bX,\hr)$ such that
\begin{align}
 W=\sum_{s\in\bS}\lambda(s)A_s\label{eqn-21}
\end{align}
holds for a $\lambda\in\pr(\bS)$. Observe that the requirement that each $A_s$ ($s\in\bS$) be extremal in $CQ(\bX,\hr)$ is a natural analog of the
decomposition into channels of 
$0-1$-type that is used in the second part of \cite{ahlswede-note}.\\
A first hint why the above statement is true can be gained by looking at the method of proof used in \cite{ahlswede-note}, especially equation (22)
there. The fact that the decoding sets of a code for an arbitrarily 
varying channel as described in \cite{ahlswede-note} have to be mutually disjoint, together with the perfect distinguishability of different non-equal
outputs of the special 
channels that are used in the second part of this paper, is at the heart of the argumentation.\\
The following lemma shows why, in our case, it is impossible to make a step that is comparable to that from \cite[equation (21)]{ahlswede-note} to
\cite[equation (22)]{ahlswede-note}.
\begin{lemma}
Let $\A=\{A_s\}_{s\in\bS}$ be an AVcqC with $C_{A,d}(\A)>0$ and $0<R<C_{A,d}(\A)$. To every sequence of $(l,M_l)$ codes satisfying
$\liminf_{l\rightarrow\infty}\frac{1}{l}\log M_l\geq R$ and 
$\lim_{l\to\infty}\min_{i\in[M_l]}\min_{s^l\in\bS^l}\tr(A_{s^l}(x^l_i)D^l_i)=1$ there is another sequence of $(l,M_l)$ codes with modified decoding
operators $\tilde D^l_i$ 
such that
\begin{align}
1)&\qquad\liminf_{l\rightarrow\infty}\frac{1}{l}\log  M_l\geq R\\
2)&\qquad\lim_{l\to\infty}\min_{i\in[M_l]}\min_{s^l\in\bS^l}\tr(A_{s^l}(x^l_i)\tilde D^l_i)=1\\
3)&\qquad \forall\ i\in[M_l],\ l\in\nn,\ \ \tr(A_{s^l}(x^l_i)\tilde D_i^{l})<1
\end{align}
\end{lemma}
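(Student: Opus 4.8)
The plan is to leave the codewords $x_1^l,\ldots,x_{M_l}^l$ untouched and merely shrink every decoding operator by a factor slightly below one. This pushes all success probabilities strictly below $1$ while affecting neither the rate nor the asymptotics of the maximal error, so the real content of the lemma is the observation itself rather than any technical difficulty.

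First I would record that, by the hypothesis $\lim_{l\to\infty}\min_{i\in[M_l]}\min_{s^l\in\bS^l}\tr(A_{s^l}(x_i^l)D_i^l)=1$, the quantity $\delta_l:=1-\min_{i\in[M_l]}\min_{s^l\in\bS^l}\tr(A_{s^l}(x_i^l)D_i^l)$ tends to $0$. I then fix any sequence $(\eps_l)_{l\in\nn}\subset(0,1)$ with $\eps_l\to 0$ (for instance $\eps_l:=\tfrac{1}{l+1}$) and set $\tilde D_i^l:=(1-\eps_l)D_i^l$ for all $i\in[M_l]$ and $l\in\nn$.

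Next I would verify that $(x_i^l,\tilde D_i^l)_{i=1}^{M_l}$ is again an $(l,M_l)$-code: clearly $\tilde D_i^l\geq 0$, and $\sum_{i=1}^{M_l}\tilde D_i^l=(1-\eps_l)\sum_{i=1}^{M_l}D_i^l\leq(1-\eps_l)\eins_{\hr^{\otimes l}}\leq\eins_{\hr^{\otimes l}}$, so $(\tilde D_1^l,\ldots,\tilde D_{M_l}^l)\in\M_{M_l}(\hr^{\otimes l})$. Since the code size is unchanged, $\liminf_{l\to\infty}\tfrac1l\log M_l\geq R$ still holds, which is claim $1)$. For the other two claims I would use the exact identity $\tr(A_{s^l}(x_i^l)\tilde D_i^l)=(1-\eps_l)\tr(A_{s^l}(x_i^l)D_i^l)$, valid for all $i\in[M_l]$, $s^l\in\bS^l$, $l\in\nn$. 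Because $A_{s^l}(x_i^l)$ is a state and $D_i^l\leq\eins_{\hr^{\otimes l}}$ one has $\tr(A_{s^l}(x_i^l)D_i^l)\leq 1$, hence $\tr(A_{s^l}(x_i^l)\tilde D_i^l)\leq 1-\eps_l<1$, which is claim $3)$; and since $\tr(A_{s^l}(x_i^l)D_i^l)\geq 1-\delta_l$ uniformly in $i$ and $s^l$, we obtain $\min_{i\in[M_l]}\min_{s^l\in\bS^l}\tr(A_{s^l}(x_i^l)\tilde D_i^l)\geq(1-\eps_l)(1-\delta_l)\to 1$, which is claim $2)$.

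The construction involves no obstacle worth the name; the only points requiring care are that a single modification must be shown to meet all three conditions simultaneously, and that the rescaled operators still form an admissible measurement. It is also worth stressing that the hypotheses $C_{A,d}(\A)>0$ and $0<R<C_{A,d}(\A)$ enter only to guarantee that a code sequence of the assumed kind exists (so that the statement is not vacuous); they are not used in the argument. This is precisely the phenomenon that obstructs a cq-analogue of the passage from \cite[equation (21)]{ahlswede-note} to \cite[equation (22)]{ahlswede-note}: unlike decoding \emph{sets}, decoding \emph{operators} can always be perturbed so as to destroy exact error-free decoding while leaving the asymptotic performance intact, so there is no canonical way to ``round'' a good code for the AVcqC up to a genuine zero-error code.
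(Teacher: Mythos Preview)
Your proof is correct and follows essentially the same idea as the paper: perturb each decoding operator by a vanishing amount so that every success probability drops strictly below $1$ without affecting the rate or the asymptotic performance. The paper's one-line proof uses the slightly different formula $\tilde D^l_i:=(1-2^{-lc})D^l_i+2^{-lc}\tfrac{1}{M_l}(\eins_{\hr^{\otimes l}}-D_0^l)$, but your pure rescaling $\tilde D^l_i=(1-\eps_l)D^l_i$ is an equally valid (and arguably cleaner) realization of the same device.
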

\begin{proof}
 Just use, for some $c>0$, the transformation $\tilde D^l_i:=(1-2^{-lc})D^l_i+2^{-lc}\frac{1}{M_l}(\eins_{\hr^{\otimes l}}-D_0^l)$.\qed
\end{proof}
After this preliminary statement, we give an explicit example that shows where the construction in equation (\ref{eqn-21}) must fail.
\begin{lemma}
Let $\bX=\{1,2\}$ and $\hr=\mathbb C^2$. Let $\{e_1,e_2\}$ be the standard basis of $\hr$ and $\psi_+:=\sqrt{1/2}(e_1+e_2)$. Define $W\in CQ(\bX,\hr)$
by $W(1)=|e_1\rangle\langle e_1|$ 
and $W(2)=|\psi_+\rangle\langle\psi_+|$. Then the following hold.
\begin{enumerate}
\item $W$ is extremal in $CQ(\bX,\hr)$
\item For every set $\{A_s\}_{s\in\bS}\subset CQ(\bX,\hr)$ and every $\lambda\in\pr(\bS)$ such that (\ref{eqn-21}) holds, $\{A_s\}_{s\in\bS}=\{W\}$.
\item $C_0(W)=0$, but $C_{A,d}(\{W\})>0$.
\end{enumerate}
\end{lemma}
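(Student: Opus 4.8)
The three assertions are handled separately.

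\emph{Extremality (1) and uniqueness (2).} Because $\bX=\{1,2\}$, the map $W\mapsto(W(1),W(2))$ identifies $CQ(\bX,\hr)$ affinely with the Cartesian product $\cs(\hr)\times\cs(\hr)$. The extreme points of a product of convex sets are exactly the pairs of extreme points, and the extreme points of $\cs(\hr)$ are precisely the pure states (rank-one projections). Since $W(1)=\ket{e_1}\bra{e_1}$ and $W(2)=\ket{\psi_+}\bra{\psi_+}$ are pure, $W$ is an extreme point of $CQ(\bX,\hr)$; this proves (1). Assertion (2) is then just the defining property of an extreme point: if $W=\sum_{s\in\bS}\lambda(s)A_s$ with $\lambda\in\pr(\bS)$ then, after deleting the indices carrying weight $\lambda(s)=0$, $W$ is written as a convex combination with strictly positive coefficients of the remaining channels $A_s$, which forces $A_s=W$ for each of them; hence $\{A_s\}_{s\in\bS}=\{W\}$.

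\emph{The zero-error capacity vanishes.} Put $v_1:=e_1$ and $v_2:=\psi_+$, so that $W(x)=\ket{v_x}\bra{v_x}$ and $W^{\otimes l}(x^l)=\ket{v_{x^l}}\bra{v_{x^l}}$ with $v_{x^l}:=v_{x_1}\otimes\dots\otimes v_{x_l}$ for $x^l\in\bX^l$. I will use the elementary fact that if $0\le D\le\eins_\hr$ and $\langle v|D|v\rangle=1$ for a unit vector $v$, then $Dv=v$ (diagonalise $D$ and compare the two expansions of $1=\langle v|v\rangle$). Now let $(x^l_i,D^l_i)_{i=1}^{M_l}$ be an $(l,M_l)$ zero-error code for $W$ and suppose $M_l\ge2$. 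For $i\neq j$ we have $D^l_i+D^l_j\le\eins_{\hr^{\otimes l}}$, and since $\langle v_{x^l_j}|D^l_j|v_{x^l_j}\rangle=1$ this yields $\langle v_{x^l_j}|D^l_i|v_{x^l_j}\rangle=0$, hence $D^l_i v_{x^l_j}=0$ by positivity, whereas $D^l_i v_{x^l_i}=v_{x^l_i}$ by the fact just quoted. Decomposing $v_{x^l_j}=\alpha v_{x^l_i}+w$ with $w\perp v_{x^l_i}$, applying $D^l_i$ and pairing with $v_{x^l_i}$ gives $\alpha=0$, i.e.\ $v_{x^l_i}\perp v_{x^l_j}$. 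But $\langle v_{x^l_i}|v_{x^l_j}\rangle=\prod_{k=1}^l\langle v_{(x^l_i)_k}|v_{(x^l_j)_k}\rangle$, and every factor is either $1$ (when the two symbols agree) or $\langle e_1|\psi_+\rangle=\tfrac{1}{\sqrt2}$ (when they differ); the product is therefore nonzero, a contradiction. So $M_l=1$ for all $l$, and consequently $C_0(W)=\lim_{l\to\infty}\tfrac1l\log 1=0$.

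\emph{The deterministic maximal-error capacity is positive.} For the AVcqC $\{W\}$ the index set $\bS$ is a singleton, so $A_{s^l}=W^{\otimes l}$ identically and $C_{A,d}(\{W\})$ coincides with the ordinary capacity of the memoryless cq-channel $W$ under the maximal error criterion. By the single-channel case of the Corollary to Theorem~\ref{comp_cq_direct_part} this is at least $\max_{p\in\pr(\bX)}\chi(p,W)$. Since $W(1)$ and $W(2)$ are pure, $\chi(p,W)=S\!\big(p(1)\ket{e_1}\bra{e_1}+p(2)\ket{\psi_+}\bra{\psi_+}\big)$; choosing $p$ uniform, $\overline W=\tfrac12\big(\ket{e_1}\bra{e_1}+\ket{\psi_+}\bra{\psi_+}\big)$ is a nontrivial mixture of two distinct pure states (as $|\langle e_1|\psi_+\rangle|^2=\tfrac12\neq1$), hence has rank two and $S(\overline W)>0$. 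Therefore $C_{A,d}(\{W\})\ge S(\overline W)>0$.

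\textbf{Main obstacle.} The only step that is not pure bookkeeping is the orthogonality argument in the proof of $C_0(W)=0$: one must turn the combination of the zero-error conditions $\tr(W^{\otimes l}(x^l_i)D^l_i)=1$ and the sub-POVM constraint $\sum_iD^l_i\le\eins_{\hr^{\otimes l}}$ into exact orthogonality of the code vectors $v_{x^l_i}$, and then observe that a tensor product of the pairwise non-orthogonal single-letter vectors $e_1,\psi_+$ can never be orthogonal. The remaining ingredients -- extremality of $W$, uniqueness of its convex decomposition, and strict positivity of the Holevo quantity -- are routine.
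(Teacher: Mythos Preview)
Your proof is correct. Parts (1) and (2) are handled in the same way as in the paper (the paper simply writes out your product/extremality observation by hand and then declares (2) ``equivalent to (1)''). The substantive differences are in part (3).

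For $C_0(W)=0$ you exploit the pure-state structure of the outputs directly: from $\langle v_{x^l_i}|D^l_i|v_{x^l_i}\rangle=1$ and the sub-POVM constraint you extract $D^l_i v_{x^l_i}=v_{x^l_i}$ and $D^l_i v_{x^l_j}=0$, whence $v_{x^l_i}\perp v_{x^l_j}$, which contradicts $\langle e_1|\psi_+\rangle\neq0$. The paper instead introduces the complement $E:=\eins-C-D$, derives $\sqrt{E}W^{\otimes l}(a^l)\sqrt{E}=\sqrt{D}W^{\otimes l}(a^l)\sqrt{D}=\dots=0$, expands $\tr\{W^{\otimes l}(a^l)W^{\otimes l}(b^l)\}$ into nine Hilbert--Schmidt inner products via $C+D+E=\eins$, and kills each with Cauchy--Schwarz. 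Your argument is shorter and more transparent; the paper's has the minor advantage that it does not use purity of the output states per se, only that $\tr\{W(i)W(j)\}>0$ together with the zero-error relations.

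For $C_{A,d}(\{W\})>0$ you invoke the direct part of the coding theorem (the Corollary to Theorem~\ref{comp_cq_direct_part}) and compute $\max_p\chi(p,W)>0$. The paper instead appeals to Theorem~\ref{theorem:c-det=0-for-maximal-error}: the singleton AVcqC $\{W\}$ is m-symmetrizable iff $W(1)=W(2)$, which is false, hence $C_{A,d}(\{W\})>0$. Both routes are valid; yours avoids the symmetrizability machinery at the cost of invoking a coding theorem, while the paper's is self-contained once Theorem~\ref{theorem:c-det=0-for-maximal-error} is in place.
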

\begin{proof}
1) Let, for an $x\in(0,1)$ and $W_1,W_2\in CQ(\bX,\hr)$,
\begin{align}
W=xW_1+(1-x)W_2.
\end{align}
Then, clearly,
\begin{align}
|e_1\rangle\langle e_1|=xW_1(1)+(1-x)W_2(1)\qquad\Longrightarrow\qquad W_1(1)=W_2(1)=W(1)
\end{align}
and
\begin{align}
|\psi_+\rangle\langle \psi_+|=xW_1(2)+(1-x)W_2(2)\qquad\Longrightarrow\qquad W_1(2)=W_2(2)=W(2),
\end{align}
so $W=W_1=W_2$.\\
2) is equivalent to 1).\\
3) It holds $\tr\{W(i)W(j)\}>1/2$ ($i,j\in\bX$). Let $l\in\nn$. Assume there are two codewords $a^l,b^l\in\bX^l$ and corresponding decoding operations
$C,D\geq0$, $C+D\leq\eins_{\mathbb C^2}^{\otimes l}$, such that 
\begin{align}
&\tr\{W^{\otimes l}(a^l)C\}=\tr\{W^{\otimes l}(b^l)D\}=1 \nonumber \\
(\Longrightarrow\ \ &\tr\{W^{\otimes l}(a^l)D\}=\tr\{W^{\otimes
l}(b^l)C\}=0).\label{eqn:example-2}
\end{align}
Then we may add a third operator $E:=\eins_{\mathbb C^2}^{\otimes l}-C-D$ and it holds that
\begin{align}
 \tr\{W^{\otimes l}(a^l)E\}=\tr\{W^{\otimes l}(b^l)E\}=0\label{eqn:example-1}.
\end{align}
From equations (\ref{eqn:example-1}) and (\ref{eqn:example-2}) we deduce the following:
\begin{align}\label{eqn:example-3}
&\sqrt{E}W^{\otimes l}(a^l)\sqrt{E}\ =\ \sqrt{E}W^{\otimes l}(b^l)\sqrt{E}\ \nonumber \\ 
=\ &\sqrt{D}W^{\otimes l}(a^l)\sqrt{D}\ =\ \sqrt{C}W^{\otimes l}(b^l)\sqrt{C}\ =\ 0.
\end{align}
With these preparations at hand, we are led to the following 
chain of inequalities:

\begin{align}
0&<\tr\{W^{\otimes l}(a^l)W^{\otimes l}(b^l)\}\\
 &=\tr\{W^{\otimes l}(C+D+E)(a^l)W^{\otimes l}(b^l)(C+D+E)\}\\
 &=\langle CW^{\otimes l}(a^l),W^{\otimes l}(b^l)C\rangle_{HS} 
   +\langle CW^{\otimes l}(a^l),W^{\otimes l}(b^l)D\rangle_{HS}& \nonumber\\
   & +\langle CW^{\otimes l}(a^l),W^{\otimes l}(b^l)E\rangle_{HS} 
   +\langle DW^{\otimes l}(a^l),W^{\otimes l}(b^l)C\rangle_{HS}& \nonumber \\
   & +\langle DW^{\otimes l}(a^l),W^{\otimes l}(b^l)D\rangle_{HS}
   +\langle DW^{\otimes l}(a^l),W^{\otimes l}(b^l)E\rangle_{HS}& \nonumber \\  
   & +\langle EW^{\otimes l}(a^l),W^{\otimes l}(b^l)C\rangle_{HS}
   +\langle EW^{\otimes l}(a^l),W^{\otimes l}(b^l)D\rangle_{HS}&  \nonumber \\
   & +\langle EW^{\otimes l}(a^l),W^{\otimes l}(b^l)E\rangle_{HS}& \\
 &=0,
\end{align}
as can be seen from a repeated application of the Cauchy-Schwarz-inequality to every single one of the above terms and use of equation
(\ref{eqn:example-3}). Thus, by 
contradiction, $C_0(W)=0$ has to hold.\\
Now assume that the AVcqC $\{W\}$ is m-symmetrizable. This is the case only if
\begin{align}
W(1)=W(2)
\end{align}
holds, which is clearly not the case. Thus, $C_{A,d}(\{W\})>0$.\qed
\end{proof}
\end{paragraph}
\begin{paragraph}{A no-go result for quantum channels.}
We now formulate a straightforward analogue of Theorem \ref{theorem:ahlswede-connection} for quantum channels. To this end, let us introduce some notation. We heavily rely 
on \cite{abbn}. The set of completely positive and trace-preserving maps from $\mathcal B(\hr)$ to $\mathcal B(\kr)$ (where both $\hr$ and $\kr$ are finite-dimensional) is denoted 
$\mathcal C(\hr,\kr)$. For a Hilbert space $\hr$, $S(\hr)$ denotes the set of vectors of unit lenght in it.\\
An arbitrarily varying quantum channel (AVQC) is defined by any set $\fri=\{\cn_s\}_{s\in\bS}\subset\mathcal C(\hr,\kr)$ and formally given by 
$\{\cn_{s^l}\}_{s^l\in\bS^l,l\in\nn}$, where
\begin{equation}
 \cn_{s^l}:=\cn_{s_1}\otimes\ldots\otimes\cn_{s_l}\qquad (s^l\in\bS^l).
\end{equation}
Let $\fri=\{\cn_s\}_{s\in\bS}$ be an AVQC. An $(l,k_l)-$\emph{strong subspace transmission code} for $\fri$ is a pair 
$(\cP^l,\crr^l)\in\mathcal C(\fr_l,\hr^{\otimes l})\times\mathcal C(\kr^{\otimes l},\fr_l')$, where $\fr_l,\ \fr_l'$ are Hilbert spaces and $\dim\fr_l=k_l$, $\fr_l\subset\fr_l'$. 
\begin{definition}\label{def:random-cap-strsub-trans}
A non-negative number $R$ is said to be an achievable strong subspace transmission rate for the AVQC $\fri=\{\cn_s  \}_{s\in\bS}$ if there is a sequence of $(l,k_l)-$strong subspace transmission codes such that
\begin{enumerate}
\item $\liminf_{l\rightarrow\infty}\frac{1}{l}\log k_l\geq R$ and
\item $\lim_{l\rightarrow\infty}\inf_{s^l\in\bS^l} \min_{\psi\in S(\fr_l)}\langle\psi,\crr^l\circ\cn_{s^l}\circ\cP^l(|\psi\rangle\langle\psi|)\psi\rangle=1$.
\end{enumerate}
The random strong subspace transmission capacity $\A_{\textup{s,random}}(\fri)$ of $\fri$ is defined by
\begin{eqnarray}\A_{\textup{s,det}}(\fri):=\sup\left\{R:\begin{array}{l} R \textrm{ is an achievable strong subspace}\\ \textrm{transmission rate for}\ \fri\end{array}\right \}.\end{eqnarray}
\end{definition}
Self-evidently, we will also need a notion of zero-error capacity:
\begin{definition}
An $(l,k)$ zero-error quantum code (QC for short) $(\fr,\cP,\crr)$ for $\cn\in \mathcal{C}(\hr,\kr) $ consists of a Hilbert space $\fr$, $\cP \in \mathcal{C}(\fr,\hr^{\otimes l})$, $\crr\in \mathcal{C}(\kr^{\otimes l},\fr)$ with $\dim \fr=k$ such that
\begin{equation}\label{def-q-0-code}
  \min_{x\in\fr, ||x||=1}\langle x, \crr\circ \cn^{\otimes l}\circ \cP(|x\rangle\langle x|) x\rangle =1.
\end{equation}
The zero-error quantum capacity $Q_0(\cn)$ of $\cn\in\mathcal{C}(\hr,\kr)$ is now defined by
\begin{equation}\label{0-error-q-capacity}
  Q_0(\cn):=\lim_{l\to\infty}\frac{1}{l}\log \max\{\dim \fr : \exists (l,k) \textrm{ zero-error QC for }\cn \}.
\end{equation}
\end{definition}
\begin{conjecture}\label{conjecture:quantum-ahlswede-connection}
Let $\cn\in \mathcal C(\hr,\kr)$ have a decomposition $\cn=\sum_{s\in\bS}q(s)\cn_s$, where each $\cn_s$ is extremal in $\mathcal C(\hr,\kr)$ and $q(s)>0\ \forall s\in\bS$. Then 
for the AVQC $\fri:=\{\cn_s\}_{s\in\bS}$:
\begin{equation}
Q_0(\cn)=\A_{\mathrm{s,det}}(\fri).\label{eqn:quantum-ahlswede-connection-1}
\end{equation}
Conversely, for every AVQC $\fri=\{\cn_s\}_{s\in\bS}$ with $\cn_s$ being extremal for every $s\in\bS$ and every $q\in\mathfrak P(\bS)$ with $q(s)>0\ \forall s\in\bS$, equation 
(\ref{eqn:quantum-ahlswede-connection-1}) holds for the channel $\cn:=\sum_{s\in\bS}q(s)\cn_s$.
\end{conjecture}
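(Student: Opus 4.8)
The plan is to split the claimed equality into its two inequalities and attack them by completely different means, mirroring the two halves of the classical proof in \cite{ahlswede-note}. The inequality $\A_{\mathrm{s,det}}(\fri)\geq Q_0(\cn)$ should hold without any extremality hypothesis and is the easy half: given an $(l,k)$ zero-error quantum code $(\fr,\cP,\crr)$ for $\cn$, expand $\cn^{\otimes l}=\sum_{s^l\in\bS^l}q^{\otimes l}(s^l)\,\cn_{s^l}$ with $q^{\otimes l}(s^l):=\prod_{i=1}^lq(s_i)$; by linearity of composition with the fixed maps $\cP,\crr$ and of the sandwich $\langle\psi|\,\cdot\,|\psi\rangle$, the number $\langle\psi,\crr\circ\cn^{\otimes l}\circ\cP(|\psi\rangle\langle\psi|)\psi\rangle$ is the $q^{\otimes l}$-average over $s^l\in\bS^l$ of the numbers $\langle\psi,\crr\circ\cn_{s^l}\circ\cP(|\psi\rangle\langle\psi|)\psi\rangle\in[0,1]$. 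Since this average equals $1$ for every unit vector $\psi\in\fr$ and $q(s)>0$ for all $s\in\bS$, every summand equals $1$, so $(\fr,\cP,\crr)$ transmits perfectly through each $\cn_{s^l}$ and hence is a perfect strong subspace transmission code for $\fri$; a rate-optimal sequence of zero-error quantum codes for $\cn$ therefore witnesses $\A_{\mathrm{s,det}}(\fri)\geq Q_0(\cn)$, and the identical computation applies in the ``converse'' half of the conjecture.

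For the reverse inequality $\A_{\mathrm{s,det}}(\fri)\leq Q_0(\cn)$ one would like to imitate Ahlswede: in the classical case the extremal channels are of $0$--$1$ type, so a code whose error over the AVC tends to zero has, at every sufficiently large blocklength, error \emph{exactly} zero (the output of a deterministic channel on a codeword is a single point), and the full-support mixture then inherits this zero-error code. The attempted quantum argument would take a sequence of $(l,k_l)$ strong subspace transmission codes $(\cP^l,\crr^l)$ for $\fri$ with fidelity tending to $1$ and try to round them to zero-error quantum codes for $\cn$ of the same asymptotic rate. The main obstacle is that extremal elements of $\mathcal C(\hr,\kr)$ are in general noisy: Choi's characterization of extremality (linear independence of the operators $A_j^\ast A_k$ built from a minimal Kraus family) does not imply that distinct images are perfectly distinguishable, as already the extremal cq-channel with $W(2)=|\psi_+\rangle\langle\psi_+|$ of the previous subsection shows. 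Hence there is no finite-blocklength rounding step, and a vanishing but positive error cannot be forced to be zero.

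Consequently I expect the reverse inequality, and therefore the conjecture as stated, to be \emph{false}, and the real content is to exhibit a counterexample in which $Q_0(\cn)=0$ while $\A_{\mathrm{s,det}}(\fri)>0$. A lift of the cq example does not work ($\cn(\rho)=\sum_x\langle e_x|\rho|e_x\rangle W(x)$ is entanglement breaking, hence has vanishing strong subspace transmission capacity), so a natural candidate is $\fri=\{\cn\}$ with $\cn\in\mathcal C(\cc^2,\cc^2)$ a single extremal but noisy channel having positive quantum capacity and vanishing zero-error quantum capacity --- the amplitude-damping channel with a small damping parameter $\gamma\in(0,\tfrac12)$ being the obvious choice. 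The verification then has three ingredients: (i) $\cn$ is extremal, so the trivial decomposition with $\bS$ a singleton and weight $1$ meets the hypothesis, which is a one-line Choi-criterion check on the two Kraus operators; (ii) $Q_0(\cn)=0$, since amplitude damping admits no noiseless subspace of dimension $\geq 2$ at any blocklength, to be proved by an overlap argument in the spirit of the proof of $C_0(W)=0$ above but carried out for the quantum instrument implementing $\crr^l$ rather than for a POVM; and (iii) $\A_{\mathrm{s,det}}(\{\cn\})>0$, which is positivity of the quantum capacity of amplitude damping for $\gamma<\tfrac12$ together with the coincidence of the single-channel strong subspace transmission capacity with the quantum capacity. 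Of these, step (ii) is where I expect the real work: in the fully quantum setting the decoder is an arbitrary completely positive map, so the clean ``append a third measurement operator and apply Cauchy--Schwarz'' device of the cq case must be redone for quantum operations, controlling the off-diagonal Hilbert--Schmidt terms of a Kraus decomposition of $\crr^l\circ\cn^{\otimes l}\circ\cP^l$.
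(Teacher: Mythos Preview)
Your conclusion matches the paper's: the statement is a conjecture that the paper immediately refutes (Theorem~\ref{theorem:conjecture-wrong}), and you correctly propose the same counterexample family, the qubit amplitude-damping channel, together with the trivial decomposition $\bS$ a singleton. Your ``easy half'' $\A_{\mathrm{s,det}}(\fri)\ge Q_0(\cn)$ is a valid observation that the paper does not bother to record.

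The two verifications you flag diverge from the paper in instructive ways. For $\A_{\mathrm{s,det}}(\{\cn\})>0$ you invoke the known positivity of the amplitude-damping quantum capacity for $\gamma<\tfrac12$; the paper instead observes that $\cn_x\to\mathrm{id}_{\cc^2}$ in diamond norm as $x\to 1$ and appeals to continuity of $\A_{\mathrm{s,det}}$ in the channel (via \cite{leung-smith}, \cite{abbn}). Both are fine; yours is more explicit, theirs avoids any capacity computation.

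The substantive difference is step~(ii), which you rightly flag as ``where the real work is''. Your plan is to adapt the Cauchy--Schwarz overlap argument from the cq lemma to the Kraus decomposition of $\crr^l\circ\cn^{\otimes l}\circ\cP^l$. The paper bypasses this entirely with a clean algebraic argument: for the two-Kraus amplitude-damping channel, Choi's extremality criterion says $\{A_i^\ast A_j\}_{i,j=1}^2$ are linearly independent in $M(\cc^2)$, hence they \emph{span} it; this passes to tensor powers, so $\mathrm{span}\{A_{i^l}^\ast A_{j^l}\}$ is all of $M((\cc^2)^{\otimes l})$. By the Knill--Laflamme-type criterion recorded in \cite{duan-severini-winter}, a two-dimensional subspace spanned by $\phi,\psi$ is a zero-error code iff $|\psi\rangle\langle\phi|$ is orthogonal to that span---impossible when the span is everything. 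This one-line argument replaces the delicate instrument-level Cauchy--Schwarz you anticipate, and it is worth internalizing: for extremal channels with the maximal number of Kraus operators on $\cc^d$, the Choi criterion forces $Q_0=0$ automatically.
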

\begin{remark}\label{remark:quantum-ahlswede-connection} One could formulate weaker conjectures than the one above. A crucial property of extremal classical channels that 
was used in the proof of Theorem \ref{theorem:ahlswede-connection} was that $W_{s^l}(\cdot|x^l_i)$ is a dirac-measure for every codeword $x^l_i$, if only $\{W_{s^l}\}_{s\in\bS}\subset E(\bA,\bB)$.\\
This property gets lost for the extremal points of $\mathcal C(\hr,\kr)$ (see the channels that are used in the proof of Theorem \ref{theorem:conjecture-wrong}), but could be regained by restriction 
to channels consisting of only one single Kraus operator.
\end{remark}
This conjecture leads us to the following theorem:
\begin{theorem}\label{theorem:conjecture-wrong}
Conjecture \ref{conjecture:quantum-ahlswede-connection} is wrong.
\end{theorem}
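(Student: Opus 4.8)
The plan is to refute Conjecture~\ref{conjecture:quantum-ahlswede-connection} by producing a \emph{single} channel which is extremal in $\mathcal C(\cc^2,\cc^2)$, has positive deterministic strong subspace transmission capacity, but has vanishing zero-error quantum capacity. Concretely I would take the qubit amplitude damping channel $\cn_\gamma$ with Kraus operators $A_0=|0\rangle\langle0|+\sqrt{1-\gamma}\,|1\rangle\langle1|$ and $A_1=\sqrt\gamma\,|0\rangle\langle1|$ for a fixed $\gamma\in(0,\tfrac12)$. First I would verify extremality: the four operators $A_i^\ast A_j$ ($i,j\in\{0,1\}$) are $\mathrm{diag}(1,1-\gamma)$, $\sqrt\gamma\,|0\rangle\langle1|$, $\sqrt\gamma\,|1\rangle\langle0|$ and $\gamma\,|1\rangle\langle1|$, and written out in the basis $\{|i\rangle\langle j|\}$ of $\mathcal B(\cc^2)$ their coordinate matrix is upper triangular with diagonal $1,\sqrt\gamma,\sqrt\gamma,\gamma$, so they are linearly independent; hence $\cn_\gamma$ is extremal and, being $4=\dim\mathcal B(\cc^2)$ many, $\{A_i^\ast A_j\}_{i,j}$ is in fact a basis of $\mathcal B(\cc^2)$. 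Since an extremal channel admits only the trivial decomposition $\cn_\gamma=1\cdot\cn_\gamma$, the AVQC attached to it in the conjecture is $\fri=\{\cn_\gamma\}$.

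The easy half is that $\A_{\textup{s,det}}(\{\cn_\gamma\})>0$. For a single memoryless channel the deterministic strong subspace transmission capacity equals the quantum capacity $Q(\cn_\gamma)$ (a one-element AVQC is trivially non-symmetrizable and $\conv(\{\cn_\gamma\})=\{\cn_\gamma\}$, so this follows from the results of \cite{abbn}), and it is well known --- e.g.\ by evaluating the coherent information on an input $p|0\rangle\langle0|+(1-p)|1\rangle\langle1|$ with $p$ small --- that $Q(\cn_\gamma)>0$ whenever $\gamma<\tfrac12$. Thus $\A_{\textup{s,det}}(\{\cn_\gamma\})=Q(\cn_\gamma)>0$.

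The core of the proof is to show $Q_0(\cn_\gamma)=0$. The Kraus operators of $\cn_\gamma^{\otimes l}$ are the $A_{c^l}:=A_{c_1}\otimes\dots\otimes A_{c_l}$, $c^l\in\{0,1\}^l$; since $\{A_i^\ast A_j\}_{i,j}$ spans $\mathcal B(\cc^2)$, the products $A_{a^l}^\ast A_{b^l}=\bigotimes_{i=1}^lA_{a_i}^\ast A_{b_i}$ span all of $\mathcal B((\cc^2)^{\otimes l})$ as $(a^l,b^l)$ ranges over $\{0,1\}^l\times\{0,1\}^l$. If there were an $(l,k)$ zero-error quantum code for $\cn_\gamma$ with $k\ge 2$, then by the Knill--Laflamme theorem for exact quantum error correction there would be a $k$-dimensional subspace $\fr'\subseteq(\cc^2)^{\otimes l}$ with $P_{\fr'}A_{a^l}^\ast A_{b^l}P_{\fr'}=c_{a^l,b^l}P_{\fr'}$ for all $a^l,b^l$. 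Choosing orthonormal $\psi_0,\psi_1\in\fr'$ gives $\langle\psi_0|A_{a^l}^\ast A_{b^l}|\psi_1\rangle=0$ for all $a^l,b^l$, hence $\langle\psi_0|X|\psi_1\rangle=0$ for \emph{every} $X\in\mathcal B((\cc^2)^{\otimes l})$ by the spanning property, contradicting $\langle\psi_0|\bigl(|\psi_0\rangle\langle\psi_1|\bigr)|\psi_1\rangle=1$. Therefore every zero-error quantum code for $\cn_\gamma$, at any block length, has dimension $1$, so $Q_0(\cn_\gamma)=\lim_l\tfrac1l\log1=0$.

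Combining the three points, $\cn_\gamma=1\cdot\cn_\gamma$ is a decomposition of an extremal channel with strictly positive weight for which $Q_0(\cn_\gamma)=0\neq\A_{\textup{s,det}}(\{\cn_\gamma\})$, contradicting both the direct and the converse statement of Conjecture~\ref{conjecture:quantum-ahlswede-connection}; this proves the theorem. (Should one read the conjecture as implicitly requiring $|\bS|\ge 2$, the same obstruction persists: from $\A_{\textup{s,det}}(\fri)\le\A_{\textup{s,random}}(\fri)\le Q\bigl(\sum_sq(s)\cn_s\bigr)$ one still needs the averaged channel to have positive quantum capacity yet zero zero-error quantum capacity, and such examples assembled from several extremal channels are available.) The step demanding the most care is the reduction to the Knill--Laflamme identities, i.e.\ arguing from the paper's definition of a zero-error quantum code (with a possibly non-isometric encoding $\cP$) to those identities on a genuine subspace, together with the precise identification of $\A_{\textup{s,det}}$ of a single channel with its quantum capacity; the extremality check and the spanning-to-contradiction argument are routine.
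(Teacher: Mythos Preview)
Your proposal is correct and follows essentially the same route as the paper: both use the qubit amplitude damping channel as the counterexample (the paper's parametrization $x$ is related to yours by $\gamma=1-x^2$), both verify extremality via Choi's criterion that $\{A_i^\ast A_j\}$ be linearly independent, and both deduce $Q_0=0$ from the fact that these products span all of $\mathcal B(\hr)$ and hence their tensor powers span $\mathcal B(\hr^{\otimes l})$, leaving no room for a nontrivial Knill--Laflamme subspace. The only substantive difference is in how $\A_{\textup{s,det}}(\{\cn\})>0$ is obtained: you invoke $\A_{\textup{s,det}}(\{\cn_\gamma\})=Q(\cn_\gamma)>0$ for $\gamma<\tfrac12$, whereas the paper argues by continuity of $\A_{\textup{s,det}}$ in the channel together with $\cn_x\to id_{\cc^2}$ as $x\to1$. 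The paper's route sidesteps both delicate points you flag (the identification with $Q$ and the reduction from general CPTP encodings to subspace codes), at the price of invoking a continuity result from \cite{abbn}; your route is more explicit but leans on those two facts, which are standard but not proved in the proposal.
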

\begin{remark}
As indicated in Remark \ref{remark:quantum-ahlswede-connection}, there could still be interesting connections between (for example) the deterministic strong subspace 
transmission capacity of AVQCs and the zero-error entanglement transmission of stationary memoryless quantum channels.
\end{remark}
\begin{proof}
 Let $\hr=\kr=\mathbb C^2$. Let $\{e_0,e_1\}$ be the standard basis of $\mathbb C^2$. Consider, for a fixed but arbitrary $x\in[0,1]$ the channel $\cn_x\in\mathcal C(\hr,\kr)$ 
defined by Kraus operators $A_1:=\sqrt{1-x^2}|e_0\rangle\langle e_1|$ and $A_2:=|e_0\rangle\langle e_0|+x|e_1\rangle\langle e_1|$. As was shown in \cite{wolf-cirac}, this 
channel is extremal in $\mathcal C(\hr,\kr)$. It is also readily seen from the definition of Kraus operators, that it approximates the identity channel 
$id_{\mathbb C^2}\in\mathcal C(\hr,\kr)$:
\begin{equation}\label{equation:ahlswede-connection-3}
 \lim_{x\to1}\|\cn_x-id_{\mathbb C^2}\|_\lozenge=0.
\end{equation}
Now, on the one hand, $\cn_x$ being extremal implies $\textrm{span}(\{A^*_iA_j\}_{i,j=1}^2)=M(\mathbb C^2)$ for all $x\in[0,1)$ (where $M(\mathbb C^2)$ denotes the set of complex $2\times2$ matrices) 
by \cite[Theorem 5]{choi}. This carries over to the channels $\cn_x^{\otimes l}$ for every $l\in\nn$: Let the Kraus operators of $\cn_x^{\otimes l}$ be denoted 
$\{A_{i^l}\}_{i^l\in\{1,2\}^l}$, then
\begin{equation}\label{equation:ahlswede-connection-2}
 \mathrm{span}(\{A_{i^l}^*A_{j^l}\}_{i^l,j^l\in\{1,2\}^l})=\{M:M\mathrm{\ is\ complex\ }2^l\times2^l\mathrm{-matrix}\}.
\end{equation}
On the other hand, it was observed e.g. in \cite{duan-severini-winter}, that for two pure states 
$|\phi\rangle\langle\phi|,|\psi\rangle\langle\psi|\in\cs((\mathbb C^2)^{\otimes l})$, the subspace spanned by them can be transmitted with zero error if and only if 
\begin{equation}
 |\psi\rangle\langle\phi|\perp\mathrm{span}(\{A_{i^l}^*A_{j^l}\}_{i^l,j^l\in\{1,2\}^l}).
\end{equation}
This is in obvious contradiction to equation (\ref{equation:ahlswede-connection-2}), therefore $Q_0(\cn_x)=0\ \forall x\in[0,1)$.\\
On the other hand, from equation (\ref{equation:ahlswede-connection-3}) and continuity of $\A_{\mathrm{s,det}}(\cdot)$ in the specifying channel set (\cite{abbn}, though indeed only the continuity 
results of \cite{leung-smith} that were also crucial in the development of corresponding statements in \cite{abbn} are really needed here) we see that there is an $X\in[0,1)$ 
such that for all $x\geq X$ we have $\A_{\mathrm{s,det}}(\{\cn_x\})>0$. Letting $x=X$ we obtain $Q_0(\cn_X)=0$ and $\A_{\mathrm{s,det}}(\{\cn_X\})>0$, so $Q_0(\cn_X)\neq \A_{\mathrm{s,det}}(\cn_X)$ 
in contradiction to the statement of the conjecture. \qed
\end{proof}
\end{paragraph}
\end{subsection}
\end{section}

\section*{Acknowledgments}
This work was supported by the DFG via grant BO 1734/20-1 (I.B, H.B.) and by the BMBF via grant 01BQ1050 (I.B., H.B., J.N.).  

\end{document}